\documentclass[11pt,letterpaper]{article}

\usepackage[margin=1in]{geometry}
\usepackage{microtype}

\usepackage{amsmath,amssymb,amsthm,mathtools}

\usepackage{booktabs}
\usepackage{array}
\usepackage{adjustbox}
\usepackage{enumitem}

\usepackage{algorithm}
\usepackage{algpseudocode}

\usepackage{xcolor}
\usepackage{tikz}
\usetikzlibrary{arrows.meta,positioning,calc,fit}

\usepackage{subcaption}

\usepackage[numbers,sort&compress]{natbib}
\usepackage[
    colorlinks=true,
    linkcolor=blue,
    citecolor=blue,
    urlcolor=blue
]{hyperref}
\usepackage[capitalize,nameinlink]{cleveref}

\newtheorem{theorem}{Theorem}[section]
\newtheorem{lemma}[theorem]{Lemma}

\theoremstyle{definition}

\theoremstyle{remark}

\crefname{theorem}{Theorem}{Theorems}
\Crefname{theorem}{Theorem}{Theorems}

\crefname{lemma}{Lemma}{Lemmas}
\Crefname{lemma}{Lemma}{Lemmas}

\crefname{proposition}{Proposition}{Propositions}
\Crefname{proposition}{Proposition}{Propositions}

\crefname{corollary}{Corollary}{Corollaries}
\Crefname{corollary}{Corollary}{Corollaries}

\crefname{definition}{Definition}{Definitions}
\Crefname{definition}{Definition}{Definitions}

\crefname{remark}{Remark}{Remarks}
\Crefname{remark}{Remark}{Remarks}

\crefname{algorithm}{Algorithm}{Algorithms}
\Crefname{algorithm}{Algorithm}{Algorithms}

\makeatletter
\providecommand{\theHALG@line}{}
\renewcommand{\theHALG@line}{\thealgorithm.\arabic{ALG@line}}
\makeatother

\newcommand{\Scal}{\mathcal{S}}
\newcommand{\Rcal}{\mathcal{R}}
\newcommand{\gain}{\operatorname{gain}}
\newcommand{\pos}[1]{\left[#1\right]_{+}}
\newcommand{\eps}{\varepsilon}
\newcommand{\Rfun}{\mathfrak r}
\newcommand{\Gcal}{\mathcal{G}}
\allowdisplaybreaks
\usepackage{authblk}

\title{Breaking the \(\sqrt{3}\) Barrier for Maximum Weighted 3-Set Packing}

\author[1]{%
Weitian Tong\thanks{\textbf{Corresponding author}. 
Email: \texttt{wtong.research@gmail.com}}%
}

\author[1]{%
Yao Xu
}

\affil[1]{%
\parbox{0.5\textwidth}{%
\centering
School of Computing, Georgia Southern University\\
1332 Southern Dr, Statesboro, Georgia 30458, USA
}%
}

\begin{document}
\raggedbottom
\maketitle

\begin{abstract}
We give a deterministic polynomial-time \(1.6908\)-approximation for Maximum Weighted \(3\)-Set Packing, 
breaking the \(\sqrt3\) locality-gap barrier of squared-weight local search.  
The approximation ratio for this problem progressed 
from Berman's \(2\) \citep{Berman2000} to Neuwohner's \(2-\frac{1}{63{,}700{,}992}+\eps\) \citep{Neuwohner2021}.  
Thiery and Ward then obtained \(1.786\) \citep{ThieryWard2023}, 
while Thiery subsequently improved the bound to \(1.761+\eps\) and 
finally to \(\sqrt3 \approx 1.732051 \) through a layered exchange analysis \citep{Thiery2023Thesis}.  
Thiery also proved that \(\sqrt3\) is a locality-gap lower bound for the squared-weight objective 
even with exchanges of arbitrary size.

Our algorithm performs in two phases and combines two objectives.  
Phase~I computes a bounded-exchange local optimum for the squared-weight potential 
and analyzes it through Thiery's layered framework, 
while strengthening the terminal analysis 
by preserving internal tree-edge slack for nonsingleton components 
and exploiting the incidence structure of \(3\)-sets for final singletons.
This yields an augmented structural inequality with residual positive claw gain under the original objective.
Phase~II switches to the original objective and recovers sufficient residual gain 
through an auxiliary weighted \(9\)-Set Packing instance.  
A covering argument transfers the structural bound through the high-girth lift used only in the analysis.
\end{abstract}

\clearpage

% ===============================================
\section{Introduction}
% ===============================================

For a positive integer \(k\), an instance of weighted \(k\)-Set Packing consists of a family \(\Scal\) of sets, 
each of cardinality at most \(k\), together with a weight function \(w:\Scal\to\mathbb{R}_{>0}\).  
The goal is to find a pairwise disjoint subfamily \(A\subseteq\Scal\) maximizing \(w(A)=\sum_{S\in A}w(S)\).  
For \(k=2\), the problem is polynomially equivalent to Maximum Weight Matching.  
For \(k\ge 3\), even the unweighted problem is NP-hard, 
since \(3\)-Dimensional Matching is a special case \citep{Karp1972}.  
We focus on \(k=3\), the smallest case beyond matching.

Local search is the principal algorithmic framework for weighted Set Packing and its conflict graph formulation 
\citep{ArkinHassin1998,ChandraHalldorsson2001,Berman2000,Neuwohner2021,Neuwohner2024,ThieryWard2023,Thiery2023Thesis,SingerThiery2026}.  
The \emph{conflict graph} has one vertex for each input set and an edge between two vertices precisely 
when the corresponding sets intersect.
Thus, a feasible packing is exactly an independent set in the conflict graph.  
For a current packing $A$ and an independent set $X$ to be inserted, 
the associated exchange removes the inclusive neighborhood
\(
N(X,A)=\{a\in A : a\in X \text{ or } a \text{ intersects some } x\in X\}
\)
and replaces it by $X$. Under the original objective, the gain of this exchange is $w(X)-w(N(X,A))$.

Berman's algorithm \citep{Berman2000} evaluates local exchanges 
using the \emph{squared-weight potential} \(w^2(S)=\sum_{v\in S}w_v^2\), 
rather than the original linear objective \(w(S) = \sum_{v\in S}w_v\).  
An exchange is \emph{square improving} if it strictly increases this potential, 
and a \emph{square local optimum} is a packing 
that admits no square improving exchange of the prescribed size. 
A \emph{\(t\)-claw} is an induced star with a \emph{center} and \(t\) pairwise nonadjacent leaves, 
called \emph{talons}.  
Since a set of size at most \(k\) cannot intersect \(k+1\) pairwise disjoint sets, 
the conflict graph of a \(k\)-Set Packing instance is \((k+1)\)-claw free.
By exploiting claw-shaped square improvements in \((k+1)\)-claw-free graphs, 
Berman \citep{Berman2000} obtained approximation ratio \((k+1)/2\), 
which specializes to \(2\) for weighted \(3\)-Set Packing.

Subsequent work sharpened this analysis by extracting additional structure from square local optima.  
Neuwohner obtained the first improvements beyond Berman's \((k+1)/2\) bound \citep{Neuwohner2021,Neuwohner2024}.  
Thiery and Ward showed that larger square improving exchanges yield approximation ratio at most \(1.786\) for \(k=3\) \citep{ThieryWard2023}.  
Thiery later refined the large exchange analysis to \(1.761+\eps\) for every fixed \(\eps>0\), 
and introduced a layered analysis that achieves approximation ratio \(\sqrt3 \approx 1.732051 \) for \(k=3\) \citep{Thiery2023Thesis}.  
This \(\sqrt3\) guarantee is tight for Berman's squared-weight local search 
even when arbitrarily large exchanges are allowed \citep{Thiery2023Thesis}.  
Figure~\ref{fig:sqrt3-barrier} illustrates the local configuration responsible for this barrier. 
Thus, any improvement below \(\sqrt3\) must exploit information beyond square local optimality itself.
The key point is that this obstruction is objective-specific.
A configuration may be tight for the squared-weight potential 
while still admitting positive gain under the original linear objective, 
as Figure~\ref{fig:sqrt3-barrier} illustrates.  
Our algorithm keeps the global structure certified by square local optimality 
while exposing and subsequently recovering precisely this residual linear gain.
Table~\ref{tab:comparison} summarizes the approximation results 
for weighted \(3\)-Set Packing most relevant to this development.

\begin{figure}[htbp]
    \centering
    \begin{adjustbox}{max width=.25\linewidth}
    \begin{tikzpicture}[
        current/.style={
            circle,
            draw,
            very thick,
            minimum size=12mm,
            fill=blue!8
        },
        talon/.style={
            circle,
            draw,
            thick,
            minimum size=11mm,
            fill=green!10
        }
    ]
        \node[current] (v) at (0,0) {$v$};

        \node[talon] (o1) at (-2.6,1.55) {$o_1$};
        \node[talon] (o2) at (0,1.75) {$o_2$};
        \node[talon] (o3) at (2.6,1.55) {$o_3$};

        \draw[thick] (v) -- (o1);
        \draw[thick] (v) -- (o2);
        \draw[thick] (v) -- (o3);
    \end{tikzpicture}
    \end{adjustbox}

    \caption{
    Local configuration underlying the $\sqrt{3}$ barrier for the squared-weight
    local search.  The current solution contains the center $v$ with $w(v)=1$,
    while the three pairwise nonconflicting talons satisfy
    $w(o_1)=w(o_2)=w(o_3)=1/\sqrt{3}$.
    Thus,
    $\sum_{i=1}^{3} w(o_i)^2 = w(v)^2 = 1$, so replacing $v$ by
    $\{o_1,o_2,o_3\}$ is not a strict square improvement.
    In contrast,
    $\sum_{i=1}^{3} w(o_i)=\sqrt{3}>w(v)$, so the same exchange has positive
    gain under the original objective.  Consequently, Phase~I may stop at this
    configuration, whereas Phase~II can still exploit the profitable exchange.
    }
    \label{fig:sqrt3-barrier}
\end{figure}

\begin{table}[htbp]
    \centering
    \caption{Approximation ratios for weighted \(3\)-Set Packing most relevant to our work.}
    \label{tab:comparison}
    \small
    \begin{adjustbox}{max width=.9\textwidth}
    \begin{tabular}{
        @{}
        >{\raggedright\arraybackslash}p{3.4cm}
        >{\raggedright\arraybackslash}p{2.2cm}
        >{\centering\arraybackslash}p{3.2cm}
        >{\raggedright\arraybackslash}p{5.8cm}
        @{}
    }
        \toprule
        \textbf{Reference}
        & \textbf{Venue}
        & \textbf{Approximation ratio}
        & \textbf{Main idea} \\
        \midrule

        Berman \citep{Berman2000}
        & SWAT 2000
        & \(2\)
        & Local search based on the squared-weight potential and claw exchanges. \\

        Neuwohner \citep{Neuwohner2021}
        & STACS 2021
        & \(2-\frac{1}{63{,}700{,}992}+\eps\)
        & Larger neighborhoods for square improving exchanges. \\

        Thiery and Ward \citep{ThieryWard2023}
        & SODA 2023
        & \(1.786\)
        & Larger square improving exchanges and a refined structural analysis. \\

        Thiery, Chapter~2 \citep{Thiery2023Thesis}
        & Ph.D. thesis, 2023
        & \(1.761+\eps\)
        & Refined analysis of large square improving exchanges. \\

        Thiery, Chapter~3 \citep{Thiery2023Thesis}
        & Ph.D. thesis, 2023
        & \(\sqrt{3}\approx 1.732051\)
        & Layered analysis of square local optima. \\

        This paper
        & Present work
        & \(\mathbf{1.6908}\)
        & Layered squared-weight local search combined with a repair phase based on gain under the original objective. \\

        \bottomrule
    \end{tabular}
    \end{adjustbox}
\end{table}

\begin{theorem}\label{thm:main}
For every weighted \(3\)-Set Packing instance with positive rational weights, 
there is a deterministic algorithm running in polynomial time that returns a packing \(A^+\) such that
\(w(O)\le 1.6908\,w(A^+)\),
where \(O\) is an optimum packing.
\end{theorem}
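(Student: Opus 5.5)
The proof will follow the two-phase structure from the abstract, combining one structural inequality from Phase~I with one gain-recovery inequality from Phase~II.

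\emph{Phase~I.} Run Berman's squared-weight local search with exchanges of bounded size \(t=t(\eps)\) until a square local optimum \(A\) is reached. To keep the number of improvements polynomial, first scale and round the weights: rescale so that \(\max w=n^2/\delta\) for a small \(\delta\), round down to integers, and accept only exchanges that raise \(w^2\) by at least a \(1+\delta/n^{c}\) factor. This loses only an \(\eps\) in the ratio. Then redo Thiery's layered charging of \(w^2(O)\) against \(A\) in a strengthened form. At the terminal layer, keep the slack on internal tree edges of nonsingleton components instead of discarding it. For final singletons, use the fact that each \(o\in O\) meets at most three elements of \(A\). The target is an augmented inequality
\[
w(O)\le \sqrt3\,w(A)\;-\;\alpha\,\Phi(A,O)\;+\;\eps\,w(O),
\]
where \(\Phi(A,O)\ge 0\) is a residual term and \(\alpha>0\) is a constant. \(\Phi\) should be a sum over centers \(v\in A\) of the positive linear claw gains \(\pos{w(X_v)-w(N(X_v,A))}\) of bounded-size claws \(X_v\subseteq O\) that are square-neutral but linearly improving, as in Figure~\ref{fig:sqrt3-barrier}. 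I will organize the estimate so that whenever a local configuration is tight for \(\sqrt3\), \(\Phi\) is forced to be large there. This is a case analysis over the layer types.

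\emph{Phase~II.} Starting from \(A\), recover a constant fraction of \(\Phi\) in the original objective \(w\). Define an auxiliary weighted \(9\)-Set Packing instance. Each candidate element is the union of the at most three talons of a claw \(X_v\) (so at most nine ground elements) and carries weight \(\gain\) of that claw. Disjointness in the auxiliary instance ensures that the chosen claws can be applied simultaneously without conflicting. Solve the auxiliary instance approximately with a known polynomial algorithm, for instance Berman's \((k+1)/2=5\) approximation for \(k=9\) or any constant-factor method, and apply all chosen exchanges to obtain \(A^+\). This gives
\[
w(A^+)\ge w(A)+\beta\,\Phi'(A,O),
\]
where \(\Phi'\) is the optimum of the auxiliary instance.

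\emph{The transfer step.} I expect this to be the main obstacle. The claws making up \(\Phi\) may overlap, both through shared centers and through talons of \(O\) hitting several centers, so a direct bound \(\Phi'\ge c\,\Phi\) fails. This is where the high-girth lift enters, used only in the analysis. Lift the conflict structure to a finite cover of large girth in which the relevant claws become locally tree-like. The structural inequality is local, so it still holds in the lift. Then average over the fibers of the covering map: a fractional packing of lifted claws projects to a fractional packing in the auxiliary instance of value at least \(c\,\Phi\), with \(c\) depending only on the bounded degree. The auxiliary LP or local-search bound then turns this into an integral \(\Phi'\ge c'\Phi\).

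\emph{Combining.} Put \(\gamma=\alpha/(\beta c')\). Taking a combination of the two inequalities, or noting that \(w(A^+)\ge \max\{w(A),\,w(A)+\beta c'\Phi\}\), gives
\[
w(O)\le \sqrt3\,w(A)-\gamma\,(w(A^+)-w(A))+\eps\,w(O)\le\bigl(\sqrt3+\gamma\bigr)w(A^+)-\gamma\,w(A^+)\cdot\text{(weight of the tight part)}.
\]
Optimizing over the mixing parameter, that is, the worst-case fraction of \(w(A)\) that lies in \(\sqrt3\)-tight configurations, yields a ratio \(\rho(\alpha,\beta c')\). I will then verify numerically that \(\rho\le1.6908\) for the constants obtained, choosing \(\eps\) small enough.
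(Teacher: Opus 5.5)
Your augmented inequality has the residual term with the wrong sign, and the rest of the argument inherits the error. As written, \(w(O)\le\sqrt3\,w(A)-\alpha\Phi(A,O)+\eps\,w(O)\) is at least as strong as Thiery's \(\sqrt3\) bound, so it gives nothing below \(\sqrt3\) when \(\Phi=0\). It also contradicts your own plan: a configuration tight for \(\sqrt3\) has \(w(O)\approx\sqrt3\,w(A)\), so it cannot also have large \(\Phi\). Concretely, the configuration of Figure~\ref{fig:sqrt3-barrier} is a square local optimum with \(w(O)=\sqrt3\,w(A)\) and claw gain \(\sqrt3-1>0\). It violates your inequality for every fixed \(\alpha>0\) once \(\eps\) is small. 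What is needed is a bound whose \(w(A)\)-coefficient is strictly below \(\sqrt3\) and in which the residual gain is \emph{added}, so that the deficit at tight configurations is paid for by \(\Phi\). The paper proves \(2w(O)\le(4-c)w(A)+a\,\Gcal(A,O)\) with \(a=0.5433\) and \(c=0.6184\), where \(\Gcal(A,O)=\sum_v g_v\) is the positive linear gain of the optimum claws \(C_v\) of Berman's assignment. The sign error also breaks your combination step. Phase~II gives an \emph{upper} bound \(\Phi\le(w(A^+)-w(A))/(\beta c')\), which can only be substituted into a \(+\alpha\Phi\) term. Substituting it into \(-\alpha\Phi\) reverses the inequality, so your first displayed step does not follow, and the ``mixing parameter'' optimization has nothing to act on. With the correct sign the combination is one line, \(2w(O)\le(4-c)w(A)+5.5a\,\Delta_{\rm rep}\le(4-c)w(A^+)\). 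It needs the absorption condition \(5.5a\le4-c\), so ``any constant-factor method'' for the auxiliary instance is not enough: the recovery factor times the residual coefficient must not exceed the coefficient of \(w(A)\).

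Beyond the sign, the part that actually produces \(1.6908\) is missing; ``a case analysis over the layer types'' does not supply it. The paper has three ingredients here.
\begin{itemize}
\item It fixes \(1-\eps_i\in\{a^{1/4},a^{1/2},a,a^2\}\), so that all of Thiery's root and pendant coefficients exceed \(c\).
\item For nonsingleton final components it keeps the internal-edge term, using \(w_u>a^2w_v\) on every internal arc. This yields coefficients \(\Rfun(a)+a^3/(1+a^2)\) and \(\Rfun(a)+\tfrac23a^3\).
\item For final singletons it uses an exact identity for \((\Delta_v+\Psi_v)/w_v^2\), together with \(z_u\le a^2\) (every secondary-blocker arc lies in \(H_4\)) and at most two secondary blockers per talon. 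Adding the claw inequality \(\sum_u z_u^2\ge x^2/m-1\) and an \(\ell_1\)--\(\ell_2\) conversion reduces everything to a one-variable finite check, whose minimum \(0.61857\) (at \(m=3\)) exceeds \(c\).
\end{itemize}

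Your ``transfer step'' targets a nonexistent difficulty. Since \(\pi\) assigns each \(o\in O\) to exactly one center, the claws \(C_v\) partition \(O\), so the positive-gain claws are automatically pairwise compatible. The algorithm enumerates every independent \(X\subseteq\Scal\setminus A\) with \(|X|\le3\) and positive gain, so the auxiliary optimum is at least \(\Gcal(A,O)\) with no lifting, fractional packing, or LP rounding. The lift belongs to Phase~I, and it is used in the opposite direction from what you describe. The structural inequality is proved in the lift, where Thiery's decomposition consists of trees, and must then be pulled back to the original instance. What has to be checked is that the residual term does not grow. This holds because each lifted claw maps bijectively onto \(C_v\) while its neighborhood projects onto \(N(C_v,A)\), so \(\widetilde g_{\widetilde v}\le g_v\) and the lifted total is at most \(M\,\Gcal(A,O)\).

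Finally, accepting only improvements by a factor \(1+1/\operatorname{poly}(n)\) gives only approximate square local optimality. The analysis instead uses exact inequalities such as \(w^2(C_R)\le w^2(N_R^+)\), and the stated constant carries no \(+\eps\). The paper rounds to integer weights, so exact local search terminates in polynomially many steps, and guesses the heaviest set of \(O\) to absorb the rounding loss.
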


Our algorithm has two phases, summarized in Algorithm~\ref{alg:two phase}.
Phase~I applies Thiery's layered squared-weight local search and returns a packing \(A\).  
To describe the information retained from this phase, fix an optimum packing \(O\).  
Since single-set insertions are included among the local exchanges and all weights are positive, 
every \(o\in O\) has a nonempty inclusive neighborhood \(N(o,A)\).  
Following Berman \citep{Berman2000}, fix an arbitrary tie-breaking rule once and for all 
and assign \(o\) to the resulting maximum-weight member
\(
    \pi(o)\in\arg\max_{v\in N(o,A)} w(v),
\)
Let \(C_v=\{o\in O:\pi(o)=v\}\) for each \(v\in A\).  
Because the conflict graph is \(4\)-claw free and \(O\) is independent, \(|C_v|\le 3\).  
We define the positive gain associated with \(v\) by
\(g_v= \bigl[w(C_v)-w(N(C_v,A))\bigr]_+\),
\([x]_+:=\max\{x,0\}\),
and set
\(\Gcal(A,O)=\sum_{v\in A} g_v.\)
For the exact parameters
\(a=\frac{5433}{10000}=0.5433\) and \(c=\frac{773}{1250}=0.6184\),
our structural analysis of Phase~I proves
\[
    2w(O)\le (4-c)w(A)+a\,\Gcal(A,O).
\]
The coefficient \(4-c\) comes from the layered square-local analysis.  
The residual term \(\Gcal(A,O)\) records positive gain under the original objective 
that is not absorbed by the squared-weight analysis.

Phase~II considers every independent \(X\subseteq\Scal\setminus A\) with \(1\le |X|\le3\) 
and positive gain \(\gain_A(X):=w(X)-w(N(X,A))\).
We refer to each such \(X\) as a \emph{repair}, meaning a local replacement that could improve \(A\) under the linear objective. 
Since the ground union \(\Gamma(X):=\bigcup_{S\in X}S\) has cardinality at most \(9\), 
the candidate repairs form an auxiliary weighted \(9\)-Set Packing instance, 
where the weight of \(X\) is \(\gain_A(X)\), and two repairs are compatible if their ground unions are disjoint.
For each \(v\in A\) with \(g_v>0\), the claw \(C_v\) is such a repair, 
and the positive claws are mutually compatible 
because they consist of pairwise disjoint sets from the optimum packing \(O\).  
Thus they give a feasible auxiliary solution of total weight \(\Gcal(A,O)\).  
Applying Berman's \(5\)-approximation algorithm
\citep{Berman2000} to the auxiliary weighted \(9\)-Set Packing instance,
with scaling parameter \(\eta=1/11\), changes the factor \(5\) to \(\frac{5}{1-\eta}=\frac{11}{2}=5.5.\)
It therefore yields a compatible repair family with total declared gain at least \(\Gcal(A,O)/5.5\).
Let \(A^+\) be the packing obtained after applying these repairs and set \(\Delta_{\rm rep}:=w(A^+)-w(A)\).  
Since the actual gain is at least the total declared gain, \(\Gcal(A,O)\le 5.5\,\Delta_{\rm rep}\).  
Combining this with \(2w(O)\le(4-c)w(A)+a\Gcal(A,O)\) gives
\[
2w(O)\le(4-c)w(A)+5.5a\,\Delta_{\rm rep}\le(4-c)w(A^+),
\]
because \(5.5a=2.98815<4-c=3.3816\).  
Hence \(w(O)\le1.6908\,w(A^+)\).  
The factor inflation from \(5\) to \(11/2\) caused by scaling is therefore absorbed by the slack \(5.5a<4-c\);
in particular, improving only the auxiliary \(9\)-Set Packing routine 
would not improve the present approximation ratio.

\begin{figure}[htbp]
    \centering
    \begin{adjustbox}{max width=.8\linewidth}
    \begin{tikzpicture}[
        box/.style={
            draw,
            rounded corners,
            thick,
            align=center,
            text width=4.5cm,
            minimum height=12mm,
            fill=gray!4
        },
        key/.style={
            draw,
            rounded corners,
            very thick,
            align=center,
            text width=5.5cm,
            minimum height=13mm,
            fill=blue!7
        },
        arr/.style={-{Latex[length=2.2mm]},thick}
    ]
        \node[box] (p1) at (-3.1,1.6)
        {\textbf{Phase I: layered squared-weight local search}\\
         output \(A\)};

        \node[key] (str) at (3.0,1.6)
        {\textbf{Structural inequality}\\
         \(2w(O)\le(4-c)w(A)+a\Gcal(A,O)\)};

        \node[box] (p2) at (-3.1,-1.35)
        {\textbf{Phase II: repair under the original objective}\\
         auxiliary weighted \(9\)-Set Packing\\
         \(\Delta_{\rm rep}=w(A^+)-w(A)\)};

        \node[key] (rec) at (3.0,-1.35)
        {\textbf{Repair guarantee}\\
         \(\Gcal(A,O)\le5.5\Delta_{\rm rep}\)};

        \node[key] (fin) at (9.5,0)
        {\textbf{Final bound}\\
         \(5.5a<4-c\)
         \(w(O)\le1.6908\,w(A^+)\)};

        \draw[arr] (p1) -- (str);
        \draw[arr] (p2) -- (rec);
        \draw[arr] (str) -- (fin);
        \draw[arr] (rec) -- (fin);
    \end{tikzpicture}
    \end{adjustbox}

    \caption{
        Proof architecture.  Phase~I bounds the optimum by the weight of \(A\)
        together with the residual gain \(\Gcal(A,O)\).  Phase~II recovers a
        sufficient portion of this residual gain under the original objective.
    }
    \label{fig:proof-architecture}
\end{figure}

Figure~\ref{fig:proof-architecture} summarizes the proof.  The analysis has four main ingredients.  
First, we combine the Arkin-Hassin high-girth reduction \cite{ArkinHassin1998}
with Thiery's layered tree decomposition \citep{Thiery2023Thesis}.  
We show that the residual gain \(\Gcal(A,O)\) is preserved in the required direction under the finite covering used by the reduction, 
so the structural inequality can be proved on the high-girth instance and then transferred to the original instance.  
Second, for a terminal component of Thiery's decomposition containing at least two centers, 
we retain a positive contribution from the internal tree edges 
that is discarded in the analysis for the \(\sqrt3\)-approximation algorithm. 
Third, when the terminal component consists of a single center, we use the exact multiplicities of the additional blockers.  
Each talon has size at most \(3\), and after its intersection with the assigned center is fixed, 
it can intersect at most two further members of the current packing.  
Together with square local optimality, 
this restriction reduces the final singleton case to 
a finite family of one-variable piecewise optimizations 
over the total normalized weight of the assigned talons, one for each \(m=|C_v|\in\{1,2,3\}\).
Finally, Phase~II converts the residual gain \(\Gcal(A,O)\) into an actual increase in the original objective 
by selecting a compatible family of profitable repairs.  
The underlying results of Arkin and Hassin \citep{ArkinHassin1998}, Berman \citep{Berman2000}, Thiery \citep{Thiery2023Thesis}, 
and Thiery and Ward \citep{ThieryWard2023} are used as established tools.

\paragraph{Consequences for special cases and applications.}
These include weighted \(3\)-Dimensional Matching \citep{ThieryWard2023}, 
Cycle-only Maximum Weight Kidney Exchange with cycles of length at most three \citep{BiroManloveRizzi2009}, 
Unrestricted Maximum Weight Triangle Packing \citep{HassinRubinstein2006,HassinRubinstein2006Erratum}, 
Maximum Weight \(3\)-Path Packing \citep{ZhaoXiao2025}, 
and Winner Determination with single-minded package bids of size at most three \citep{Sandholm2002}.
In each case, a feasible object is represented by the at most three vertices or items that it uses, 
with its original value as the set weight.
Thus Theorem~\ref{thm:main} yields a \(1.6908\)-approximation for each of these problems.

For weighted \(3\)-Dimensional Matching, 
Unrestricted Maximum Weight Triangle Packing, 
Bounded-size Winner Determination, 
and Bounded-size Cycle-only Maximum Weight Kidney Exchange, 
the \(1.786\)-approximation of Thiery and Ward \citep{ThieryWard2023} 
was previously inherited from weighted \(3\)-Set Packing; 
for Kidney Exchange, 
this already improved the earlier dedicated \(2+\eps\) guarantee of Bir{\'o}, Manlove, and Rizzi \citep{BiroManloveRizzi2009}.  
Thiery's dissertation subsequently improved the general weighted \(3\)-Set Packing ratio 
to \(\sqrt3\approx1.732051\) \citep{Thiery2023Thesis}. 
Maximum Weight \(3\)-Path Packing has a stronger specialized history: 
the previous ratio \(12/7\approx1.7143\) \cite{BarNoyPelegRabancaVigan2018}
was recently improved by Zhao and Xiao to \(17/10=1.7\) \citep{ZhaoXiao2025}.  
Hence our \(1.6908\) ratio improves the best previously known guarantee 
for all of these weighted special cases.

\paragraph{Organization.}
Section \ref{sec:related} reviews the hardness and approximation results most relevant to weighted Set Packing, 
with emphasis on the case \(k=3\).  
Section \ref{sec:prelim} introduces the notation and recalls the ingredients of Berman's and Thiery's analyses that are used later, 
including the slack formulation, the Arkin-Hassin high-girth reduction, and the layered exchange framework.

Section \ref{sec:algorithm} presents the two-phase algorithm and its analysis.  
Phase~I uses squared-weight local search; 
Section~\ref{sec:crossing} develops the corresponding structural analysis, including the auxiliary covering construction used to obtain tree-like exchange structures, the strengthened bounds for final components, and the resulting augmented structural inequality.
Section \ref{sec:repair} analyzes Phase~II by relating the residual positive claw gain 
to an auxiliary weighted \(9\)-Set Packing instance and bounding the gain recovered by the selected compatible repairs.  
Section \ref{sec:poly-implementation} then combines the two phase bounds to prove the \(1.6908\) approximation guarantee 
and gives the polynomial-time implementation for arbitrary positive rational weights.

Section \ref{sec:conclusion} discusses the scope of the result, its main structural bottlenecks, and directions for further work.  
Appendix~\ref{app:thiery-ideas} summarizes how local exchange interactions are organized into tree structures and subsequently refined across layers in the imported framework.
Appendix~\ref{app:numeric} records the certified numerical comparisons used in the analysis, 
and Appendix~\ref{app:finite-checks} gives the complete finite verification for the final-singleton certificate.

% ===============================================
\section{Related work}\label{sec:related}
% ===============================================

The approximation landscape for k-Set Packing differs substantially between unit and arbitrary weights.  
For unweighted instances, increasingly large structured exchanges lead to ratios close to \(k/3\), 
whereas the weighted problem has historically required nonoblivious local search and has remained closer to \(k/2\).  
We first summarize the relevant hardness results, then review the unweighted and weighted algorithmic developments, 
with particular emphasis on \(k=3\).

\paragraph{Hardness.}
Already for \(k=3\), the problem is APX-hard: Chleb\'ik and Chleb\'ikov\'a showed that Maximum \(3\)-Dimensional Matching, 
a special case of Weighted \(3\)-Set Packing, is NP-hard to approximate within \(95/94\), 
even when every element occurs in exactly two triples \citep{ChlebikChlebikova2006}.  
For general \(k\), Hazan, Safra, and Schwartz proved an \(\Omega(k/\log k)\) hardness bound unless \(\mathrm{P}=\mathrm{NP}\) \citep{HazanSafraSchwartz2006}.  
Lee, Svensson, and Thiery recently obtained hardness \(k/(12+\eps)\) for sufficiently large \(k\), 
unless \(\mathrm{NP}\subseteq\mathrm{BPP}\) \citep{LeeSvenssonThiery2025}.  
Combining their reduction with the subsequent CSP hardness of Minzer and Zheng yields 
an asymptotic \(k/8\) hardness bound under \(\mathrm{P}\ne\mathrm{NP}\) \citep{MinzerZheng2026,SingerThiery2026}.  
These large-\(k\) lower bounds remain far from the approximation ratios achievable for the fixed case \(k=3\).

\paragraph{Unweighted set packing.}
For unweighted \(k\)-Set Packing, Hurkens and Schrijver showed that 
constant-size local improvements achieve ratio \(k/2+\eps\) for every fixed \(\eps>0\) \citep{HurkensSchrijver1989}.  
Allowing logarithmic-size exchanges, Halld\'orsson obtained ratio \((k+2)/3\) in quasipolynomial time, 
and Cygan, Grandoni, and Mastrolilli improved this to \((k+1)/3+\eps\) \citep{Halldorsson1995,CyganGrandoniMastrolilli2013}.  
Polynomial-time algorithms followed: Sviridenko and Ward achieved \((k+2)/3\) 
using structured logarithmic-size improvements and color coding \citep{SviridenkoWard2013}, 
while Cygan obtained \((k+1)/3+\eps\) through bounded-pathwidth local improvements \citep{Cygan2013}.  
F\"urer and Yu obtained the same ratio with a better dependence on \(\eps\) in the running time \citep{FurerYu2014}.  
In particular, the best known polynomial-time guarantee for unweighted \(3\)-Set Packing is \(4/3+\eps\).

\paragraph{Weighted set packing.}
For weighted \(k\)-Set Packing, Arkin and Hassin proved the tight ratio \(k-1+1/t\) for direct \(t\)-local search 
using the original weights \citep{ArkinHassin1998}.  
Chandra and Halld\'orsson improved this to \(2(k+1)/3\) 
by combining greedy selection with local improvement \citep{ChandraHalldorsson2001}.  
Berman then introduced the squared-weight potential and obtained ratio \((k+1)/2\) 
through his \(d/2\)-approximation for Maximum Weight Independent Set in \(d\)-claw-free graphs \citep{Berman2000}. 
This remained the best-known approximation ratio from 2000 until 
Neuwohner gave the first improvement in 2021, obtaining \((k+1)/2-1/63{,}700{,}992+\eps\) \citep{Neuwohner2021}.
She subsequently proved that, for every fixed exponent \(\alpha\in\mathbb R\), 
local search based on \(w^\alpha\) and 
exchanges of logarithmically bounded size cannot asymptotically achieve an approximation ratio below \(k/2\); 
in the same work she obtained ratios \((k+\eps_k)/2\) with \(\eps_k\to0\) \citep{Neuwohner2024}.
Later, she combined weighted local search with unweighted subroutines, 
improving on Berman for every \(k\ge4\) and 
obtaining the asymptotic bound \(0.4986k+0.5194\) \citep{Neuwohner2023}.

For \(k=3\), Thiery and Ward obtained ratio \(1.786\) by analyzing larger square-improving exchanges \citep{ThieryWard2023}.  
Thiery's dissertation further sharpened this analysis to \(1.761+\eps\) 
and then introduced the layered exchange graph framework yielding ratio \(\sqrt3\) \citep{Thiery2023Thesis}.  
The dissertation also shows that \(\sqrt3\) is a locality gap barrier for Berman's squared-weight local search, 
even when arbitrarily large exchanges are allowed.  
% This layered analysis is the starting point of the present work; 
% our second phase exploits gain under the original objective to go below the square-potential barrier.  

For large \(k\), recent progress follows a different direction.  
Neuwohner's hybrid approach gives \(k/2.00561+O(1)\) asymptotically \citep{Neuwohner2023}, 
while Singer and Thiery recently improved this to \((\ln 4/3)k+o(k)\) 
using ordered local search together with weight bucketing \citep{SingerThiery2026}.  
These results substantially improve the asymptotic regime but do not yield a better explicit guarantee for \(k=3\).

% ===============================================
\section{Preliminaries}
\label{sec:prelim}
% ===============================================

Let $\Scal$ be a family of nonempty subsets of a universe $U$, with $|F|\le3$ for all $F\in\Scal$, and let $w:\Scal\to\mathbb{Q}_{>0}$.  
For $X\subseteq\Scal$, write $w(X)=\sum_{F\in X}w(F)$ and $w^2(X)=\sum_{F\in X}w(F)^2$.  
The conflict graph $G$ has one vertex for each set and an edge between two intersecting sets.  
We identify sets and conflict graph vertices and write $w_v=w(v)$ when a set is viewed as a graph vertex.  
For $X,Y\subseteq V(G)$, $N(X,Y)$ denotes the inclusive neighborhood of $X$ in $Y$: 
it contains $X\cap Y$ and every vertex of $Y$ adjacent to some vertex of $X$.  
We write $N(x,Y)$ for $N(\{x\},Y)$.  
Because a set of size at most three cannot meet four pairwise disjoint sets, the conflict graph is $4$-claw free.

Fix the packing \(A\) returned by Phase~I and let \(O\) be an optimum packing.  
Since Phase~I allows the insertion of a single set 
whenever this strictly increases the squared-weight potential, 
square local optimality implies that \(A\) is maximal.  
Indeed, if some \(o\notin A\) were disjoint from every set in \(A\), 
then adding \(o\) would increase \(w^2\) by \(w(o)^2>0\), contradicting square local optimality.
Hence the inclusive neighborhood $N(o,A)$ is nonempty for every $o\in O$.  
Following the fixed tie-breaking rule, 
assign each \(o\in O\) to the resulting maximum-weight member of \(N(o,A)\), 
and write \(\pi(o)\) for that center.
For $v\in A$, define the optimum claw $C_v=\{o\in O:\pi(o)=v\}$.  
The claws partition $O$ and satisfy $|C_v|\le3$.  
Put $N_v^+=\{v\}\cup\bigcup_{o\in C_v}N(o,A\setminus\{v\})$ and define the standard slack quantities
\[
\begin{aligned}
\Delta_v &=w^2(N_v^+)-w^2(C_v),\\
\Psi_v & =\sum_{o\in C_v}\Bigl((w_o-w_v)^2+w_vw(N(o,A\setminus\{v\}))-w^2(N(o,A\setminus\{v\}))\Bigr).
\end{aligned}
\]
These are the quantities used in the square-potential analyses 
of Berman \citep{Berman2000} and Thiery and Ward \citep{ThieryWard2023,Thiery2023Thesis}. 
Their roles are different. 
The quantity $\Delta_v$ is the residual square potential gap of the claw exchange centered at $v$: 
local optimality gives $\Delta_v\ge0$, and $\Delta_v=0$ means that the corresponding claw exchange is exactly tied in squared weight.  
The quantity $\Psi_v$ measures weight imbalance that is lost when the squared-weight analysis is relaxed to a linear weight estimate.  
Its term $(w_o-w_v)^2$ records a mismatch between a talon and its assigned center, 
while $w_vw(N(o,A\setminus\{v\}))-w^2(N(o,A\setminus\{v\}))$ is positive when the secondary blockers of $o$ are lighter than $v$.  
Thus $\Delta_v$ is \emph{exchange slack}, whereas $\Psi_v$ is \emph{weight separation slack}.  
For each \(v\in A\), the associated \emph{\(1\)-exchange} inserts \(C_v\) and removes \(N(C_v,A)\).  
Square local optimality implies that this exchange does not increase \(w^2\).

The inequality Eq.(\ref{eq:master}) in Lemma~\ref{lem:master} 
is the slack form of Berman's charging argument \citep{Berman2000}, 
stated explicitly as Lemma~2.3.1 in Thiery's dissertation \citep{Thiery2023Thesis}.

\begin{lemma}
\label{lem:master}
If \(A\) is square locally optimal with respect to the \(1\)-exchanges, then
\begin{equation}\label{eq:master}
2w(O)
\le
w(A)+\sum_{o\in O}w(N(o,A))
-\sum_{v\in A}
\left(
\frac{\Delta_v}{w_v}+\frac{\Psi_v}{w_v}
\right).
\end{equation}
\end{lemma}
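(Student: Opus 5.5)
The plan is to prove \eqref{eq:master} claw by claw, using that the claws \(C_v\), \(v\in A\), partition \(O\), and then sum over \(v\in A\). For each \(v\in A\) I will aim for the local inequality
\[
2w(C_v)\le w_v+\sum_{o\in C_v}w(N(o,A))-\frac{\Delta_v+\Psi_v}{w_v}.
\]
Summing it over all \(v\in A\) gives \eqref{eq:master} directly:
\begin{itemize}[nosep]
\item the terms \(2w(C_v)\) add up to \(2w(O)\);
\item the terms \(w_v\) add up to \(w(A)\);
\item the double sum \(\sum_v\sum_{o\in C_v}w(N(o,A))\) equals \(\sum_{o\in O}w(N(o,A))\).
\end{itemize}
Square local optimality enters only to make \(\pi\) well defined. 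Maximality of \(A\) ensures that every \(N(o,A)\) is nonempty, so each \(o\) really has a center.

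To prove the local inequality, write \(N_o':=N(o,A\setminus\{v\})\) for \(o\in C_v\). Since \(\pi(o)=v\), we have \(v\in N(o,A)\), so
\[
w(N(o,A))=w_v+w(N_o').
\]
Multiplying the local inequality by \(w_v>0\), it is equivalent to
\[
\Delta_v+\Psi_v\le w_v^2+\sum_{o\in C_v}\bigl(w_v^2+w_v\,w(N_o')-2w_ow_v\bigr).
\]

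Next I bound \(\Delta_v\) from above. The set \(N_v^+\) is the union of \(\{v\}\) and the sets \(N_o'\). Overlapping members are counted once in \(w^2(N_v^+)\) and all squared weights are nonnegative, so
\[
w^2(N_v^+)\le w_v^2+\sum_{o\in C_v}w^2(N_o').
\]
Hence
\[
\Delta_v\le w_v^2+\sum_{o\in C_v}\bigl(w^2(N_o')-w_o^2\bigr).
\]

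Now expand \(\Psi_v\) using \((w_o-w_v)^2=w_o^2-2w_ow_v+w_v^2\), and add it to this bound on \(\Delta_v\). In the sum, the terms \(w^2(N_o')\) and \(w_o^2\) cancel exactly. What remains is precisely the right-hand side of the displayed equivalent inequality. So the local inequality holds, with equality whenever the sets \(N_o'\) are pairwise disjoint.

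One degenerate case needs a check: a center \(v\) with \(C_v=\emptyset\). Then \(N_v^+=\{v\}\), so \(\Delta_v=w_v^2\) and \(\Psi_v=0\). The local inequality reads \(0\le w_v-w_v=0\), which holds.

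There is no real obstacle here; the argument is a bookkeeping identity. The one point requiring care is the direction of the union bound. Since \(\Delta_v\) enters with a minus sign, we need an upper bound on \(w^2(N_v^+)\), and the union bound supplies exactly that. We must also use \(v\in N(o,A)\) to split off \(w_v\) from \(w(N(o,A))\), and confirm that \(v\notin N_o'\), so that \(v\) is not double-counted.
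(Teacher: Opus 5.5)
Your proof is correct and is essentially the slack form of Berman's claw-by-claw charging argument. The paper does not reprove this lemma; it imports it as Lemma~2.3.1 of Thiery's dissertation. Your verification also makes explicit that the inequality itself needs only $\pi(o)\in N(o,A)$ (hence maximality of $A$) and the union bound on $w^2(N_v^+)$; square local optimality and the maximum-weight choice of $\pi(o)$ matter only later, to guarantee $\Delta_v\ge 0$ and $\Psi_v\ge 0$.
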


The first two terms on the right are those of Berman's charging bound,
which yields the \((k+1)/2\)-approximation ratio.
The approximation improves when one obtains a stronger lower bound on the subtractive slack term
\(\sum_{v\in A}(\Delta_v+\Psi_v)/w_v\).  
Thiery's layered analysis strengthens this term by exploiting interactions among local exchanges; 
our analysis strengthens it further by retaining additional slack from the final components of the layered decomposition.

%----------------------------------------------------
\subsection{The Arkin-Hassin high-girth reduction}
%----------------------------------------------------

Throughout the paper, the Arkin-Hassin high-girth reduction refers to the duplication-and-crossing construction 
introduced by Arkin and Hassin \citep{ArkinHassin1998} and used by Thiery \citep[Lemma~2.5.7]{Thiery2023Thesis}.  
The Arkin-Hassin high-girth reduction is used only in the analysis.  
Its first purpose is to remove short-cycle obstructions 
so that the bounded exchange structures in Thiery's analysis can be treated as trees, 
while preserving the bounded square local optimality and the ratio \(w(O)/w(A)\) relevant to the approximation analysis.  
Its second purpose is to provide a covering projection 
under which the residual claw gain can be compared between the high-girth instance and the original instance. 
The algorithm itself never constructs the reduced graph.  
Throughout the analysis, we use the duplication-and-crossing construction 
only as a weight-preserving graph cover of \(G[A\cup O]\); 
in particular, none of the subsequent arguments requires 
the lifted graph itself to be represented as the conflict graph of a new \(3\)-set system.

The Arkin-Hassin high-girth reduction is applied to the weighted conflict graph \(G[A\cup O]\).  
In one duplication step, a disjoint copy of the current graph is created, 
with every copied vertex retaining the weight of its original vertex.  
If \((a,o)\) is an edge and \((a',o')\) is its copy, 
a crossing replaces these two edges by \((a,o')\) and \((a',o)\).  
Thus the endpoints are reconnected between the two copies without changing their weights or local degrees. 
Because \(A\) and \(O\) are packings, 
the nonisolated part of \(G[A\cup O]\) is bipartite between \(A\) and \(O\), 
and every optimum vertex has at most three neighbors in \(A\).  
The duplication-and-crossing operation preserves these local degrees.  
In particular, the lifted conflict graph retains the degree bound into the independent side \(A\)
that is used in the later \(4\)-claw-free analysis.
By applying such crossings systematically to all copies of the cycles of current minimum length, 
one eliminates those cycles without introducing shorter ones.  
Repeating this procedure raises the girth until any prescribed fixed target \(\gamma\) is reached \citep[Lemma~2.5.7]{Thiery2023Thesis}.

For every fixed \(\gamma\), the resulting finite weighted graph, denoted by \(\widetilde G\), 
has corresponding copies \(\widetilde A\) and \(\widetilde O\) of \(A\) and \(O\), has girth at least \(\gamma\), 
and preserves the bounded square local optimality required by the subsequent analysis.  
The \(4\)-claw-free property is also preserved, and
\(w(\widetilde O)/w(\widetilde A)=w(O)/w(A)\)
\citep{ArkinHassin1998}\citep[Lemma~2.5.7]{Thiery2023Thesis}.

We refer to \(\widetilde G\) as a \emph{lift} of \(G[A\cup O]\).  
A vertex \(\widetilde x\) of the lift is said to lie \emph{above} its original vertex \(x\), 
and we write \(\varpi(\widetilde x)=x\) for the natural projection back to the original graph.  
After \(r\) duplication steps, every original vertex has exactly \(M=2^r\) copies.  
Moreover, for every lifted vertex \(\widetilde x\) and every neighbor \(y\) of \(x=\varpi(\widetilde x)\), 
exactly one neighbor of \(\widetilde x\) lies above \(y\).  
Thus \(\varpi\) is locally bijective on graph neighborhoods.  
\(\widetilde G\) is therefore an \emph{\(M\)-sheet covering} of \(G[A\cup O]\), 
where the \(M\) sheets correspond to the \(M\) copies of each original vertex.  
In particular, because weights are preserved on copies, \(w(\widetilde A)=Mw(A)\) and \(w(\widetilde O)=Mw(O)\).
The duplication-and-crossing operation and the covering interpretation of the resulting lift 
are illustrated in Figure~\ref{fig:arkin-hassin-reduction}, 
provided in Appendix~\ref{app:structural-illustrations}.

The high-girth property is what permits Thiery's bounded exchange structures to be analyzed as trees.  
An exchange arc between two vertices of \(A\) is witnessed by a vertex of \(O\) 
and hence corresponds to a length-two path in \(G[A\cup O]\).  
Consequently, a sufficiently short cycle, or parallel-arc obstruction, 
in a bounded exchange structure would induce a bounded cycle in the conflict graph.  
Choosing \(\gamma\) sufficiently large rules out these obstructions, 
so the bounded connected induced exchange structures used in Thiery's analysis are trees \citep[Lemma~2.5.8]{Thiery2023Thesis}; 
see Appendix~\ref{app:thiery-trees} for the corresponding structural details.  
The covering viewpoint will additionally be used in Lemma~\ref{lem:crossing-gain} 
to transfer the required bound on the residual claw gain from the lift back to the original instance.

%----------------------------------------------------
\subsection{Thiery's exchange layers}
\label{subsec:thiery-template}
%----------------------------------------------------

We record only the notation from Thiery's layered framework \citep{Thiery2023Thesis} 
that is needed by the new proofs. 
The construction, the reason bounded exchange structures may be treated as trees, 
and the root and pendant hierarchy are explained separately in Appendix~\ref{app:thiery-ideas}.  
The definitions below are specialized to three layers and 
otherwise follow Chapter~3 of Thiery's dissertation \citep[Section~3.2]{Thiery2023Thesis}.  

Let $\eps_{-1}=0<\eps_0\le\eps_1\le\eps_2\le\eps_3<1$.  
Before applying the Arkin-Hassin high-girth reduction, 
the exchange objects are most naturally viewed as directed multigraphs $H_1,H_2,H_3,H_4$ on vertex set $A$: 
for every witness $o\in C_v$ and every secondary blocker $b\in N(o,A\setminus\{v\})$, 
we retain one directed arc from $b$ to $v$.  
Thus distinct optimum witnesses may induce parallel arcs with the same ordered endpoints.  
Such an arc belongs to $H_1$ if $w_b/w_v\in(1-\eps_1,1]$, to $H_i$ for $i\in\{2,3\}$ 
if $w_b/w_v\in(1-\eps_i,1-\eps_{i-1}]$, and to $H_4$ if $w_b/w_v\in[0,1-\eps_3]$.  
When the witness is immaterial we denote the arc simply by $(b,v)$.  
We write $H_{\le i}=H_1\cup\cdots\cup H_i$ and $H=H_{\le4}$, 
and we ignore orientations when discussing connected components.  
The Arkin-Hassin high-girth reduction
eliminates the bounded-length cycles and parallel edge obstructions needed for the subsequent induced tree decomposition.

Fix an integer $\ell\ge2$ and put $I_0=A$.  
We recursively partition the vertices as follows \citep[Definitions~3.2.3--3.2.4]{Thiery2023Thesis}.  
At layer $1$, $D_1$ consists of the vertices in connected components of $H_1$ of size at least $\ell$, 
and $I_1=A\setminus D_1$.  
For $i\in\{2,3\}$, consider the induced graph $H_{\le i}[I_{i-1}]$.  
The set $D_i$ consists of the vertices lying in its connected components of size at least $\ell$.  
Among its components of size at most $\ell-1$, the set $P_i$ contains those vertices 
whose component is contained in a larger connected component of the full graph $H_{\le i}$; 
the remaining vertices form $I_i$.  
Thus $I_{i-1}=D_i\mathbin{\dot\cup}P_i\mathbin{\dot\cup}I_i$,
where \(\mathbin{\dot\cup}\) denotes disjoint union;
that is, \(D_i\), \(P_i\), and \(I_i\) are pairwise disjoint.
Define
\[
S_1=D_1,\qquad S_2=D_2\mathbin{\dot\cup}P_2,\qquad S_3=D_3\mathbin{\dot\cup}P_3,\qquad S_4=I_3.
\]
Hence $A=S_1\mathbin{\dot\cup}S_2\mathbin{\dot\cup}S_3\mathbin{\dot\cup}S_4$.  
These are the layer classes that enter the charging argument below.
Figure~\ref{fig:thiery-exchange-layers} in Appendix~\ref{app:structural-illustrations}
illustrates this recursive partition on a fixed exchange graph.

After the Arkin-Hassin high-girth reduction, 
Corollary~3.2.8 of Thiery decomposes $A$ into vertex disjoint induced trees.  
Each tree has a root part contained in one of $D_1,D_2,D_3,I_3$ 
and possibly pendant parts contained in later $P_i$ classes.  
In particular, the final root pieces are exactly the connected components $T\subseteq I_3$, 
and each has size at most $\ell-1$ \citep[Corollary~3.2.8 and Definition~3.2.9]{Thiery2023Thesis}. 
We use this decomposition as an imported structural result.
For auditability, Appendix~\ref{app:thiery-ideas} records the precise
specializations and hypotheses from Thiery's framework that are invoked
in the proof below, together with the corresponding source lemmas.
We do not reconstruct the general tree-construction procedure.

For any tree component $T$ in this decomposition, define the \emph{layer credit}
\[
Q(T)=\sum_{v\in T}\sum_{o\in C_v}\sum_{i=1}^{3}\eps_{i-1}\,w\bigl(N(o,S_{i+1})\bigr).
\]
The role of $Q(T)$ is bookkeeping: in Thiery's proof of Theorem~3.7.1, 
these credits cancel the neighborhood term in the master inequality Eq.(\ref{eq:master})
when one sums over all decomposition trees.

For \(k=3\), write \(\rho_3(\eps)=3\eps-\eps/(1-\eps)\).  
Thiery associates a slack coefficient with each root or pendant part of the tree decomposition.  
For a root tree \(T\subseteq D_{j+1}\) of size \(t\), with \(j\ge1\), 
the coefficient is \(\chi_{j-1}^{(t)}=\rho_3(\eps_{j-1})+\frac{t-1}{t}(1-\eps_{j-1})^5\); 
for a first-layer root tree \(T\subseteq D_1\), it is \(\chi_{-1}^{(t)}=\frac{t-1}{t}(1-\eps_0)^2\).  
For a pendant tree \(T\subseteq P_{j+1}\), 
the corresponding coefficient is 
\(\nu_{j-1}=\rho_3(\eps_{j-1})+(1-\eps_{j-1})^5[1-(1-\eps_{j-1})^2\eps_{j-1}]\) 
\citep[Lemmas~3.5.1, 3.5.2, and~3.6.1]{Thiery2023Thesis}.  
These coefficients measure the effective slack contributed, per unit weight, 
by the corresponding root or pendant part after the layer-dependent terms are aggregated in the final analysis. 
Consequently, the minimum applicable coefficient controls the structural improvement 
in Thiery's global bound \citep[Theorem~3.7.1]{Thiery2023Thesis}.  
We use these imported inequalities with our parameter choices 
and strengthen only the treatment of the final components.  
The additional final-root estimate needed for this strengthening is derived in Section~\ref{sec:crossing}, 
after the high-girth lift and the component structure used in its proof have been introduced.

The next lemma isolates the final-layer slack estimate used in our analysis.  
The statement itself is not given explicitly in Thiery's dissertation \citep{Thiery2023Thesis}; 
it follows from the tree-slack and layered-slack inequalities proved there.
For \(R\subseteq A\), define \(C_R=\bigcup_{v\in R}C_v\) and \(N_R^+=\bigcup_{v\in R}N_v^+\). 

\begin{lemma}\label{lem:final-root-slack}
Let \(T\subseteq I_3=S_4\) be a final root component in Thiery's decomposition after the Arkin-Hassin high-girth reduction.  
Suppose that every connected subtree \(R\) of \(H[T]\) is square locally optimal, that is, \(w^2(N_R^+)\ge w^2(C_R)\).  
Then
\[
\sum_{v\in T}
\left(
\frac{\Delta_v}{w_v}
+
\frac{\Psi_v}{w_v}
\right)
\ge
Q(T)
+
\bigl(\rho_3(\eps_2)-3\eps_2\bigr)w(T)
+
(1-\eps_2)\sum_{(u,v)\in E(H[T])}w_u .
\]
\end{lemma}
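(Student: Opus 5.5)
We plan to derive the stated inequality by adding two kinds of estimates over the final root component \(T\). The first is a per-center \emph{layered slack} estimate that produces the credit \(Q(T)\) and the \(\rho_3\) term. The second is a \emph{tree-exchange} estimate that produces the internal-edge term. Since \(T\subseteq I_3\) was never absorbed into \(D_1,D_2,D_3\), every arc of \(H[T]\) lies in \(H_4\). So for every arc \((u,v)\in E(H[T])\) the secondary blocker \(u\) of a talon of \(v\) satisfies \(w_u\le(1-\eps_3)w_v\le(1-\eps_2)w_v\). This is the only place where the layer index of \(T\) enters, and it explains why the coefficient is \(1-\eps_2\) rather than a larger constant.

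\textbf{Step 1 (per-vertex $\Psi$ bound).} For each \(v\in T\) and \(o\in C_v\), we split \(N(o,A\setminus\{v\})\) according to the layer class \(S_{i+1}\) containing each blocker. For a blocker \(b\in S_{i+1}\) with \(i\le 3\), the ratio \(w_b/w_v\) lies in the range defining \(H_{\le i+1}\) or below. We apply the elementary inequality \(w_vw_b-w_b^2\ge \eps_{i-1}w_vw_b\) whenever \(w_b\le(1-\eps_{i-1})w_v\), which is exactly how Thiery's layered-slack lemma (Lemma~3.6.1) is proved. Dividing by \(w_v\) and summing gives
\[
\sum_{v\in T}\frac{\Psi_v}{w_v}\ \ge\ Q(T)+\sum_{v\in T}\sum_{o\in C_v}\frac{(w_o-w_v)^2}{w_v}+\text{(unused blocker slack)}.
\]
The leftover \(\eps\)-fraction of blocker slack, together with the \((w_o-w_v)^2\) terms, is then fed into Thiery's one-variable estimate behind \(\rho_3\). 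This yields the loss \((\rho_3(\eps_2)-3\eps_2)w(T)=-\frac{\eps_2}{1-\eps_2}w(T)\) per unit weight. We import it as in Lemmas~3.5.1 and~3.6.1, specialized to the last layer with \(\eps_{j-1}=\eps_2\).

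\textbf{Step 2 (tree slack from $\Delta$).} By high girth, \(H[T]\) is a tree. We use the hypothesis \(w^2(N_R^+)\ge w^2(C_R)\) for connected subtrees \(R\), in the form of Thiery's tree-slack inequality (Lemma~3.5.1 and its root-tree analogue). Summing the single-vertex inequalities \(\Delta_v\ge0\) double-counts shared blockers: for an arc \((u,v)\) inside \(T\), the vertex \(u\) appears both as a center in \(N_u^+\) and as a secondary blocker in \(N_v^+\). Since \(\Delta_v\) counts \(w_u^2\) in \(N_v^+\), whereas \(u\) was already counted once in its own claw, each internal arc contributes at least \(w_u^2/w_v\) of extra slack in \(\Delta_v/w_v\). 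Using \(w_u\le(1-\eps_2)w_v\) in the reverse direction is not helpful. Instead we keep \(w_u^2/w_v\ge w_u\cdot w_u/w_v\) and combine it with the unused \(\Psi\)-slack \(w_vw_u-w_u^2\ge\eps_2 w_vw_u\) from Step~1. The combination \(w_u^2+(w_vw_u-w_u^2)\cdot\theta\) for a suitable \(\theta\) is bounded below by \((1-\eps_2)w_vw_u\). Dividing by \(w_v\) gives \((1-\eps_2)w_u\) per internal arc, and the tree structure ensures each arc is used exactly once.

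\textbf{Main obstacle.} The hardest point is making sure the blocker slack spent in Step~2 on internal arcs was not already consumed in Step~1, either by \(Q(T)\) or by the \(\rho_3\) estimate. The credit \(Q(T)\) only uses the \(\eps_{i-1}\) fraction of blockers in \(S_{i+1}\); internal blockers lie in \(S_4\), so \(Q(T)\) uses only \(\eps_2w_u\) of them. We therefore intend to charge the remaining \((1-\eps_2)\) fraction to the tree term. We would verify this by rewriting \(w_vw_u-w_u^2=\eps_2w_vw_u+((1-\eps_2)w_v-w_u)w_u\) and checking that Thiery's \(\rho_3\) bound does not use internal arcs of \(H[T]\) at all. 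If it does use them, we would fall back on the subtree inequality for \(R=T\) directly and redo the \(\rho_3\) optimization with internal blockers excluded.
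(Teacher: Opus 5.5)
Your overall architecture matches the paper's. The tree-slack inequality turns the \(\Delta_v\) into \(\Gamma(E(H[T]))=\sum_{(u,v)}w_u^2/w_v\), and layered \(\Psi\)-slack supplies \(Q(T)\), the \(\rho_3\) term and the edge term. However, the structural fact you start from is backwards.

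\textbf{The layer claim is inverted.} A final component \(T\subseteq I_3\) is a whole connected component of \(H_{\le3}\). Hence \(H_{\le3}[T]\) spans the tree \(H[T]\), and every \emph{internal} arc lies in \(H_1\cup H_2\cup H_3\) with \(w_u>(1-\eps_3)w_v\) (Lemma~\ref{lem:final-edge-layer}). It is the arcs from \emph{external} blockers that lie in \(H_4\) and satisfy \(w_b\le(1-\eps_2)^2w_v\). So \(w_vw_u-w_u^2\ge\eps_2w_vw_u\) is false for internal arcs: an \(H_1\)-arc with \(w_u=w_v\) carries no \(\Psi\)-slack at all. The factor \(1-\eps_2\) needs no ratio bound. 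The tree term \(w_u^2/w_v\) and the \(\Psi_v\)-term \(w_u-w_u^2/w_v\) sum to exactly \(w_u\). The credit takes \(\eps_2w_u\) of this, since \(u\in S_4\), leaving \((1-\eps_2)w_u\). Your ``main obstacle'' rewriting recovers exactly this, so this part is repairable.

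\textbf{The genuine gap is Step~1.} As written it is a per-center bound on \(\Psi\) alone, and it is false. \(Q(T)\) contains the center credit \(\eps_2w_v\) for every talon, since \(v\in N(o,S_4)\), and no blocker slack pays for it. For instance, three talons of weight \(w_v\) and no blockers give \(\Psi_v=0\). Only square local optimality excludes this, and it cannot be applied center by center. Take \(T=\{u,v\}\) with \(w_u=w_v\) and \(C_u=\varnothing\), and let \(C_v=\{o_1,o_2,o_3\}\). Only \(o_1\) meets \(u\), with \(w_{o_1}=w_v\) and \(w_{o_2}=w_{o_3}=w_v/\sqrt2\); no exchange is square improving. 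The exterior slack at \(v\) is \((3-2\sqrt2)w_v\approx0.17\,w_v\), while the per-center need is \(\rho_3(\eps_2)w_v\approx0.53\,w_v\). The deficit is covered only by the term \(-\tfrac{\eps_2}{1-\eps_2}w_u\) of the talonless vertex \(u\). Here \(\Delta_v=0<w_u^2\), so your heuristic that each internal arc adds \(w_u^2/w_v\) to \(\Delta_v/w_v\) also fails vertex by vertex.

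Your fallback uses only the inequality for \(R=T\). That inequality is not weighted by \(1/w_v\) and is too weak in general. What is needed is the tree-slack inequality itself, read as \(\sum_{v\in T}\Delta_v^{\mathrm{ext}}/w_v\ge0\), where \(\Delta_v^{\mathrm{ext}}=w_v^2+w^2(N(C_v,A\setminus T))-w^2(C_v)\) is \(\Delta_v\) minus the squares of its internal blockers. With it your route closes:
\begin{itemize}
\item Each external blocker contributes \((w_vw_b-w_b^2)/w_v\) minus a credit of at most \(\eps_2w_b\). Because \(w_b\le(1-\eps_2)^2w_v\), this is at least \(\tfrac{\eps_2}{1-\eps_2}w_b^2/w_v\).
\item Each talon contributes \((w_o-w_v)^2/w_v-\eps_2w_v\ge-\tfrac{\eps_2}{1-\eps_2}w_o^2/w_v\).
\item Hence the talon and external-blocker part, net of its credit, is at least \(\tfrac{\eps_2}{1-\eps_2}\bigl(\sum_v\Delta_v^{\mathrm{ext}}/w_v-w(T)\bigr)\).
\item Add the interior identity and \(\sum_v\Delta_v/w_v=\Gamma(E(H[T]))+\sum_v\Delta_v^{\mathrm{ext}}/w_v\). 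Since \(\rho_3(\eps_2)-3\eps_2=-\eps_2/(1-\eps_2)\), this yields the lemma.
\end{itemize}

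The paper does not spell this out. It imports the whole component bound from Thiery's Lemma~3.4.6 with \(j=L=3\) and \(Y=\varnothing\), combining the exterior estimate (Lemma~3.4.3) with the interior estimate (Lemma~3.4.5). That import is where the exterior/interior split you left open is settled.
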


\begin{proof}
Since \(T\subseteq I_3\) is a final root component, it is a tree of the decomposition with no pendant part.  
By Thiery's Corollary~3.2.8, the induced exchange graph \(H[T]\) is a tree.

For a set \(E'\) of directed exchange arcs, let
\(
\Gamma(E')=\sum_{(u,v)\in E'}\frac{w_u^2}{w_v}.
\)
Thiery's tree-slack inequality \citep[Lemma~2.4.5]{Thiery2023Thesis}, 
applied through the specialization in \citep[Lemma~3.4.2]{Thiery2023Thesis}, gives
\(
\sum_{v\in T}\frac{\Delta_v}{w_v}
\ge
\Gamma(E(H[T])).
\)
The hypothesis that every connected subtree of \(H[T]\) is square locally optimal is 
exactly the local-optimality condition used in the proof of this tree-slack bound.

We next specialize Thiery's layered slack calculation to the present final component.  
Apply the proof of \citep[Lemma~3.4.6]{Thiery2023Thesis} with \(k=3\), \(j=L=3\), and \(Y=\varnothing\).  
Under this specialization, the boundary arc families appearing in that proof are empty.  
The coefficient of \(w(T)\) becomes \(\rho_3(\eps_2)-3\eps_2\),
the internal-edge contribution has coefficient \(1-\eps_2\), 
and the remaining layer-neighborhood term is
\(
\sum_{v\in T}\sum_{o\in C_v}
\sum_{i=1}^{3}\eps_{i-1}w\bigl(N(o,S_{i+1})\bigr).
\)
Combining the exterior-slack estimate of \citep[Lemma~3.4.3]{Thiery2023Thesis} 
with the interior-slack estimate of \citep[Lemma~3.4.5]{Thiery2023Thesis} therefore gives
\[
\begin{aligned}
\Gamma(E(H[T]))
+\sum_{v\in T}\frac{\Psi_v}{w_v}
\ge\;&
\bigl(\rho_3(\eps_2)-3\eps_2\bigr)w(T) 
+(1-\eps_2)\sum_{(u,v)\in E(H[T])}w_u \\
&+\sum_{v\in T}\sum_{o\in C_v}
\sum_{i=1}^{3}\eps_{i-1}w\bigl(N(o,S_{i+1})\bigr).
\end{aligned}
\]
By the definition of the layer credit in Section~\ref{subsec:thiery-template}, 
the final triple sum is exactly \(Q(T)\).
Combining this inequality with
\(\sum_{v\in T}\Delta_v/w_v\ge\Gamma(E(H[T]))\) proves the claim.
\end{proof}

The final term in Lemma~\ref{lem:final-root-slack} is the part of Thiery's calculation that we retain more carefully.  
In the proof of Thiery's Theorem~3.7.1, the final-root contribution is ultimately bounded by its worst case at \(|T|=1\); 
in that case \(E(H[T])=\varnothing\), so the internal-edge term disappears.  
If \(|T|\ge2\), however, \(H[T]\) is a tree with at least one edge.  
Our analysis retains this positive contribution and combines it with the weight relations 
specific to the final layer to obtain a stronger bound for every nonsingleton final component.  
The bounds for all nonfinal root and pendant components are taken unchanged from Thiery's analysis.

% ===============================================
\section{Our algorithm}
\label{sec:algorithm}
% ===============================================

The conceptual two-phase procedure is given in Algorithm~\ref{alg:two phase}.
Phase~I computes a packing satisfying the square-local-optimality condition required by Thiery's layered analysis.  
Phase~II then switches from the squared-weight potential to the original linear objective 
and searches for a compatible collection of small positive-gain repairs.  
The optimum solution \(O\), the claw partition, the exchange layers, 
and the Arkin-Hassin high-girth construction are used only in the analysis and are never constructed by the algorithm.

For the structural analysis, we take \(L=3\) and \(\ell=7\), and set
\(a=5433/10000=0.5433\), \(\delta=1-a=0.4567\), and
\(c=773/1250=0.6184\).  The layer thresholds introduced above are specialized by
\(1-\eps_0=a^{1/4}\), \(1-\eps_1=\sqrt a\), \(1-\eps_2=a\), and
\(1-\eps_3=a^2\); in particular, \(\eps_2=\delta\).  These identities satisfy Thiery's recurrence
\(1-\eps_i=(1-\eps_{i-1})^2\).
The numerical values of \(a\) and \(c\) were selected so that the weakest imported root and pendant coefficients 
and the new final-singleton coefficient all remain strictly above \(c\).  
The subsequent proof uses only the certified inequalities in
Table~\ref{tab:coefficients}, Lemma~\ref{lem:singleton-certificate} and
Appendix~\ref{app:numeric}, Appendix~\ref{app:finite-checks}; 
the numerical search used to locate these parameters is not part of the correctness argument.

\paragraph{Phase I: square local search.}
Following Thiery, for an integer \(s\ge1\) 
we call an independent family
\(X\subseteq\Scal\setminus A\) with \(|X|\le 3s\) an \(s\)-exchange for the current packing \(A\).  
Applying the exchange replaces \(N(X,A)\) by \(X\), 
producing \((A\setminus N(X,A))\cup X\).  
The exchange is square improving when \(w^2(X)>w^2(N(X,A))\).

Thiery's Theorem~3.7.1 of \citep{Thiery2023Thesis} 
requires local optimality with respect to \(s\)-exchanges for
\[
s\ge (k^2\ell)^L\sum_{r=0}^{2\ell-2}(k\ell)^{2r}.
\]
Specializing to \(k=3\), \(L=3\), and \(\ell=7\), we therefore set
\(
s_0=(9\ell)^L\sum_{r=0}^{2\ell-2}(3\ell)^{2r}
   =63^3\sum_{r=0}^{12}21^{2r}.
\)
Phase~I initializes \(A\) greedily in nonincreasing order of weight 
and then repeatedly applies square-improving \(s_0\)-exchanges until none remains. 
The resulting packing \(A\) therefore satisfies exactly the local-optimality hypothesis 
used in Thiery's layered analysis \citep[Algorithm~1 and Theorem~3.7.1]{Thiery2023Thesis}.  
Although this conceptual local search terminates 
because the square potential increases strictly at every step, 
arbitrary rational weights do not by themselves give a polynomial bound on the number of iterations.  
The polynomial implementation is supplied later 
by the scaling and partial-enumeration wrapper 
in Section~\ref{sec:poly-implementation} and Algorithm~\ref{alg:poly-wrapper}.

\paragraph{Phase II: compatible repairs under the linear objective.}
For every independent family \(X\subseteq\Scal\setminus A\) with
\(1\le |X|\le3\), recall that
\(\gain_A(X)=w(X)-w(N(X,A))\).
We consider only candidates with \(\gain_A(X)>0\).  Their ground union
\(\Gamma(X)=\bigcup_{F\in X}F\) has cardinality at most \(9\).

If two candidates have disjoint ground unions, 
then all sets added by the two repairs are mutually disjoint, 
so the repairs can be applied simultaneously.  
More generally, for any collection \(\mathcal X\) with pairwise disjoint ground unions, 
the simultaneous repair is feasible and its actual gain is at least \(\sum_{X\in\mathcal X}\gain_A(X)\), 
since a set of \(A\) removed by several repairs is charged only once.  
Thus, choosing a compatible family of repairs reduces to a weighted \(9\)-Set Packing instance: 
the auxiliary set associated with \(X\) is \(\Gamma(X)\), with weight \(\gain_A(X)\).  
If several candidates have the same ground union, we retain one of maximum gain, 
since they have the same compatibility relation with every other repair.

The conflict graph of a \(9\)-Set Packing instance is \(10\)-claw free.  
Hence Berman's \(d/2\)-approximation for Maximum Weight Independent Set in \(d\)-claw-free graphs 
gives factor \(5\) for a squared-weight local optimum of the auxiliary instance \citep{Berman2000}.  
To obtain a polynomial number of local-search iterations on rational auxiliary weights, 
we use the standard scaling argument \citep[Section~2.5.3]{Thiery2023Thesis} 
with accuracy parameter \(\eta=1/11\).  
An exact-locality factor \(\alpha\) becomes \(\alpha/(1-\eta)\) under this scaling; 
therefore the auxiliary routine has approximation ratio
\(5/(1-1/11)=11/2=5.5\).

\begin{algorithm}[t]
\caption{Conceptual two-phase core for weighted \(3\)-Set Packing}
\label{alg:two phase}
\begin{algorithmic}
\Statex \textbf{Input:} a family \(\Scal\) of sets of cardinality at most \(3\), 
    with positive rational weights \(w\)
\Statex \textbf{Output:} a packing \(A^+\)

\State Initialize \(A\) by greedily adding sets in nonincreasing order of weight.
\While{there exists a square-improving \(s_0\)-exchange \(X\)}
    \State \(A\gets (A\setminus N(X,A))\cup X\).
\EndWhile

\State \(\Rcal\gets\varnothing\).
\For{each independent \(X\subseteq\Scal\setminus A\) with \(1\le |X|\le3\)}
    \If{\(\gain_A(X)>0\)}
        \State add the auxiliary set \(\Gamma(X)\), with weight \(\gain_A(X)\), to \(\Rcal\);
        \Statex \hspace{\algorithmicindent} if \(\Gamma(X)\) is already represented, retain only a maximum-gain representative.
    \EndIf
\EndFor

\State Apply the scaled Berman routine to \(\Rcal\) with \(\eta=1/11\), 
    and let \(\mathcal X\) be the repair candidates represented 
    by the selected pairwise disjoint auxiliary sets.

\State \(A^+\gets
\bigl(A\setminus\bigcup_{X\in\mathcal X}N(X,A)\bigr)
\cup\bigcup_{X\in\mathcal X}X\).
\State \Return \(A^+\).
\end{algorithmic}
\end{algorithm}

The remainder of the proof analyzes the two phases of the algorithm 
and then establishes the polynomial-time implementation.  
Section~\ref{sec:crossing} analyzes Phase~I.  
The Arkin-Hassin high-girth construction is used only as an analysis device: 
the layered component bounds are established on the lifted instance, 
the residual positive claw gain is controlled under the covering projection, 
and the resulting structural inequality is transferred back to the original instance.  
Section~\ref{sec:repair} analyzes Phase~II: 
the positive-gain optimum claws provide a feasible benchmark 
for the auxiliary weighted \(9\)-Set Packing instance, 
and the selected compatible repairs recover 
a constant fraction of this benchmark under the original weight objective.  
Finally, Section~\ref{sec:poly-implementation} combines 
the Phase~I structural bound with the Phase~II repair guarantee to obtain the \(1.6908\)-approximation 
and gives the scaling and partial-enumeration argument needed 
for a deterministic polynomial-time implementation with arbitrary positive rational weights.

% ===============================================
\subsection{Analysis of Phase I}
\label{sec:crossing}
% ===============================================

Recall that, for each \(v\in A\),
\(g_v=\bigl[w(C_v)-w(N(C_v,A))\bigr]_+\),
\(\Gcal(A,O)=\sum_{v\in A}g_v.\)
The quantity \(\Gcal(A,O)\) is not used by the algorithm.  
It measures the positive linear gain of the optimum claws that remains after the square-potential analysis 
and will serve as the interface with Phase~II.

The layered analysis is carried out after applying the Arkin-Hassin high-girth reduction described above.  
The standard reduction preserves the relevant bounded square local optimality and the ratio \(w(O)/w(A)\) \citep{ArkinHassin1998} 
and is used in this form by Thiery \citep[Lemma~2.5.7]{Thiery2023Thesis}.  
What is additionally needed here is that the residual gain \(\Gcal(A,O)\) does not increase, 
after normalization by the number of sheets, when we pass to the high-girth instance.

There is one harmless special case.  If \(x\in A\cap O\), then \(x\) is isolated in \(G[A\cup O]\), 
because both \(A\) and \(O\) are packings.  Hence \(N(x,A)=\{x\}\), \(C_x=\{x\}\), and \(g_x=0\).  
Such vertices are duplicated by the high-girth construction but are incident to no crossing edge.  
All nontrivial crossings therefore occur on the bipartite conflict graph between \(A\setminus O\) and \(O\setminus A\).

The next lemma extracts the covering property of the duplication-and-crossing construction that is needed later.

\begin{lemma}\label{lem:crossing-gain}
Let \((\widetilde G,\widetilde A,\widetilde O)\)
be the weighted \(M\)-sheet covering obtained from \(G[A\cup O]\) produced 
by the Arkin-Hassin high-girth reduction, and let
\(\varpi:\widetilde A\cup\widetilde O\to A\cup O\)
be the covering projection.
There is a maximum-weight center assignment
\(\widetilde\pi:\widetilde O\to\widetilde A\) such that the following hold for every
\(\widetilde v\in\widetilde A\), where \(v=\varpi(\widetilde v)\).

\begin{enumerate}[leftmargin=*,label=(\alph*)]
\item
If
\(
\widetilde C_{\widetilde v}
=
\{\widetilde o\in\widetilde O:
  \widetilde\pi(\widetilde o)=\widetilde v\},
\)
then the restriction of \(\varpi\) to
\(\widetilde C_{\widetilde v}\) is a weight-preserving bijection onto \(C_v\).

\item
The lifted claw neighborhood projects exactly onto the original claw neighborhood:
\(
\varpi\!\left(
N(\widetilde C_{\widetilde v},\widetilde A)
\right)
=
N(C_v,A).
\)
Consequently,
\(w(N(\widetilde C_{\widetilde v},\widetilde A))
 \ge w(N(C_v,A))\).
Thus, defining
\(
\widetilde g_{\widetilde v}
=
\bigl[
w(\widetilde C_{\widetilde v})
-
w(N(\widetilde C_{\widetilde v},\widetilde A))
\bigr]_+,
\)
we have
\(\widetilde g_{\widetilde v}\le g_v\).

\item
For every
\(\widetilde o\in\widetilde O\setminus\widetilde A\),
the projection \(\varpi\) maps the graph neighborhood of
\(\widetilde o\) in \(\widetilde A\) bijectively onto the graph neighborhood of
\(o=\varpi(\widetilde o)\) in \(A\).
In particular, \(\widetilde o\) has at most three neighbors in
\(\widetilde A\); after its assigned center is fixed, it has at most two secondary blockers.
\end{enumerate}

Therefore, for every
\(\widetilde J\subseteq\widetilde A\),
\[
\sum_{\widetilde v\in\widetilde J}
\widetilde g_{\widetilde v}
\le
M\,\Gcal(A,O).
\]
\end{lemma}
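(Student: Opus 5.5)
The plan is to define the lifted assignment by lifting the original one through the covering, and then to read off (a)--(c) from local bijectivity of \(\varpi\). For \(\widetilde o\in\widetilde O\) with \(o=\varpi(\widetilde o)\), there are two cases. If \(o\in A\cap O\), then \(o\) is isolated and \(\widetilde o\) is also a vertex of \(\widetilde A\); we set \(\widetilde\pi(\widetilde o)=\widetilde o\). Otherwise, \(\pi(o)\) is a neighbor of \(o\) in \(A\). By local bijectivity there is exactly one neighbor of \(\widetilde o\) lying above \(\pi(o)\), and we let \(\widetilde\pi(\widetilde o)\) be that neighbor.

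First we prove (c). Its first claim is exactly the local bijectivity of the covering on the neighborhood of \(\widetilde o\), restricted to the \(A\)-side; recall that the nonisolated part is bipartite between \(A\) and \(O\). Hence \(\widetilde o\) has at most three neighbors in \(\widetilde A\), and at most two remain once \(\widetilde\pi(\widetilde o)\) is removed.

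Because the projection preserves weights and is bijective on \(N(\widetilde o,\widetilde A)\), the maximum weight over \(N(\widetilde o,\widetilde A)\) equals the maximum over \(N(o,A)\). So \(\widetilde\pi\) is a maximum-weight assignment; the tie-breaking is simply inherited through \(\varpi\).

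For (a), fix \(\widetilde v\) over \(v\), and note that \(\varpi\) maps \(\widetilde C_{\widetilde v}\) into \(C_v\) by construction.
\begin{itemize}
\item \emph{Injectivity.} Every element of \(\widetilde C_{\widetilde v}\) other than \(\widetilde v\) itself is a neighbor of \(\widetilde v\). Local bijectivity at \(\widetilde v\) says distinct neighbors of \(\widetilde v\) project to distinct neighbors of \(v\). The isolated case \(v\in A\cap O\) is trivial.
\item \emph{Surjectivity.} For \(o\in C_v\), take the unique neighbor \(\widetilde o\) of \(\widetilde v\) lying above \(o\). Then \(\widetilde\pi(\widetilde o)\) is the unique neighbor of \(\widetilde o\) above \(v\), which is \(\widetilde v\).
\end{itemize}
Weights are preserved on copies, so the bijection is weight-preserving.

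For (b), we have \(N(\widetilde C_{\widetilde v},\widetilde A)=\bigcup_{\widetilde o}N(\widetilde o,\widetilde A)\). By (c) each \(N(\widetilde o,\widetilde A)\) projects onto \(N(o,A)\), and by (a) the \(\widetilde o\) range bijectively over \(C_v\). Taking unions gives \(\varpi(N(\widetilde C_{\widetilde v},\widetilde A))=N(C_v,A)\). A surjection that preserves weights can only increase total weight on the domain side, so \(w(N(\widetilde C_{\widetilde v},\widetilde A))\ge w(N(C_v,A))\). Together with \(w(\widetilde C_{\widetilde v})=w(C_v)\) and monotonicity of \([\cdot]_+\), this yields \(\widetilde g_{\widetilde v}\le g_v\).

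Finally, all terms are nonnegative and each \(v\in A\) has exactly \(M\) lifts. Therefore
\[
\sum_{\widetilde v\in\widetilde J}\widetilde g_{\widetilde v}\le\sum_{\widetilde v\in\widetilde A}g_{\varpi(\widetilde v)}=M\,\Gcal(A,O).
\]

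The main obstacle I anticipate is the surjectivity step in (a), together with the consistency of \(\widetilde\pi\) as a genuine maximum-weight assignment. Both rely on the covering being locally bijective at the \(A\)-vertices as well as at the \(O\)-vertices, and on the lift of \(v\)'s neighbor \(o\) selecting \(\widetilde v\) back. I would verify this symmetry directly from the crossing operation: a crossing swaps edges between copies while keeping each vertex's neighbor projections intact, and this property is preserved under iteration.
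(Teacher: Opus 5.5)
Your proposal is correct and follows the paper's proof essentially step for step. You use the same lifted assignment (the unique neighbor above \(\pi(o)\), with \(\widetilde\pi(\widetilde o)=\widetilde o\) on common vertices), local bijectivity at \(\widetilde v\) for the bijection in (a), the union-of-neighborhoods argument with a weight-preserving surjection for (b), and the \(M\)-copies count for the final sum. Two small points should be made explicit: in (b), the common-vertex case rests on \(N(\widetilde o,\widetilde A)=\{\widetilde o\}\) rather than on (c), and in (c), the bound of three neighbors comes from \(|o|\le 3\) together with the pairwise disjointness of the members of \(A\).
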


\begin{proof}
We first record the covering property of the Arkin-Hassin construction.  
In one duplication step, two identical copies of the current graph are created.  
For a selected edge \((a,o)\) and its copy \((a',o')\), 
the construction deletes these two edges and inserts the crossed pair \((a,o')\) and \((a',o)\) \citep{ArkinHassin1998,Thiery2023Thesis}.  
Thus each copy of \(a\) remains incident to exactly one edge lying above \((a,o)\), and the same is true for each copy of \(o\).  
All other lifted edges are unchanged.

It follows that the graph after one duplication-and-crossing step is a two-sheet covering of the graph before that step: 
for every lifted vertex and every neighbor of its projection, there is exactly one neighboring lift of that neighbor.  
Repeating the construction preserves this property by composition of covering maps.  
After \(r\) duplication steps, the resulting graph is therefore an \(M=2^r\)-sheet covering of the original graph.

We now define the lifted assignment.  Let
\(\widetilde o\in\widetilde O\setminus\widetilde A\) lie above \(o\in O\setminus A\), 
and let \(v=\pi(o)\).  Since \(v\in N(o,A)\), the covering property gives a unique neighbor of \(\widetilde o\) lying above \(v\); 
define this vertex to be
\(\widetilde\pi(\widetilde o)\).
All copies preserve their original weights, and \(v\) is a maximum-weight member of \(N(o,A)\).  
Hence \(\widetilde\pi(\widetilde o)\) is a maximum-weight member of \(N(\widetilde o,\widetilde A)\).
If \(o\in A\cap O\), then every copy \(\widetilde o\) belongs to both
\(\widetilde A\) and \(\widetilde O\); in this case set
\(\widetilde\pi(\widetilde o)=\widetilde o\).

Fix \(\widetilde v\in\widetilde A\) above \(v\in A\).
For each \(o\in C_v\setminus A\), local bijectivity at \(\widetilde v\) gives exactly one copy \(\widetilde o\) of \(o\)
adjacent to \(\widetilde v\), and by construction this is precisely the copy assigned to \(\widetilde v\).
If \(v\in A\cap O\), its unique corresponding copy in the claw is \(\widetilde v\) itself.
Conversely, any vertex assigned to \(\widetilde v\) projects to an \(o\in O\) satisfying \(\pi(o)=v\).
Thus \(\varpi\) restricts to a bijection \(\widetilde C_{\widetilde v}\to C_v\).
Since vertex weights are preserved by the construction, \(w(\widetilde C_{\widetilde v})=w(C_v)\).
This proves~(a).

For~(b), consider first a noncommon \(\widetilde o\in\widetilde C_{\widetilde v}\), 
and write \(o=\varpi(\widetilde o)\).
Local bijectivity gives
\(\varpi\!\left(N(\widetilde o,\widetilde A)\right)=N(o,A)\),
because in this case the inclusive neighborhoods coincide with the ordinary graph neighborhoods.
If \(o\in A\cap O\), then \(o\) is isolated in \(G[A\cup O]\), 
so \(N(o,A)=\{o\}\); similarly \(N(\widetilde o,\widetilde A)=\{\widetilde o\}\).
The same equality therefore holds in the common-vertex case.

Using part~(a) and the definition of inclusive neighborhood,
\[
\begin{aligned}
\varpi\!\left(
N(\widetilde C_{\widetilde v},\widetilde A)
\right)
&=
\varpi\!\left(
\bigcup_{\widetilde o\in\widetilde C_{\widetilde v}}
N(\widetilde o,\widetilde A)
\right) 
=
\bigcup_{o\in C_v}N(o,A)
=
N(C_v,A).
\end{aligned}
\]
The projection on the entire claw neighborhood need not be injective: 
distinct lifted talons may have different neighbors that project to the same original blocker.  
Nevertheless, every vertex of \(N(C_v,A)\) has at least one weight-preserving preimage in
\(N(\widetilde C_{\widetilde v},\widetilde A)\).
Since all weights are positive,
\(
w(N(\widetilde C_{\widetilde v},\widetilde A))
\ge
w(N(C_v,A)).
\)
Combining this inequality with \(w(\widetilde C_{\widetilde v})=w(C_v)\) 
and taking positive parts yields \(\widetilde g_{\widetilde v}\le g_v\).

Part~(c) follows directly from local bijectivity of the covering.
For \(o\in O\setminus A\), the sets in \(A\) are pairwise disjoint and \(o\) has cardinality at most three.  
Hence \(o\) can intersect at most three members of \(A\).  
The covering preserves this degree exactly, 
so every \(\widetilde o\in\widetilde O\setminus\widetilde A\) has at most three neighbors in \(\widetilde A\).
One of them is its assigned center, leaving at most two secondary blockers.

Finally, every \(v\in A\) has exactly \(M\) copies in \(\widetilde A\), 
and part~(b) gives \(\widetilde g_{\widetilde v}\le g_{\varpi(\widetilde v)}\) for every such copy.  
Therefore
\[
\begin{aligned}
\sum_{\widetilde v\in\widetilde J}
\widetilde g_{\widetilde v}
&\le
\sum_{\widetilde v\in\widetilde A}
g_{\varpi(\widetilde v)}
=
M\sum_{v\in A}g_v 
=
M\,\Gcal(A,O),
\end{aligned}
\]
which proves the final claim.
\end{proof}

We carry out the component analysis on the high-girth lift and suppress tildes, 
using the same notation for the corresponding lifted objects.  
By the choice of target girth in Thiery's decomposition, 
the bounded connected induced exchange subgraphs used below are trees 
\citep[Lemma~2.5.8 and Corollary~3.2.8]{Thiery2023Thesis}. 
We return to the original instance when the component bounds are combined in Section~\ref{sec:structural}.

\subsubsection{Nonfinal and nonsingleton components}

Set \(\kappa_i=1-\eps_i\) for \(i=0,1,2,3\).  
Under the parameters fixed in Section~\ref{sec:algorithm},
\(\kappa_0=a^{1/4}\), \(\kappa_1=a^{1/2}\), \(\kappa_2=a\), and \(\kappa_3=a^2\).
For convenience, define \(\Rfun(t)=4-3t-t^{-1}\).
Since \(\rho_3(\eps)=3\eps-\eps/(1-\eps)\), we have \(\Rfun(t)=\rho_3(1-t)\).

Thiery's estimates for ordinary root trees, first-layer root trees, 
and pendant trees are given in Lemmas~3.5.1, 3.5.2, and~3.6.1, respectively in \citep{Thiery2023Thesis}.  
Specializing those formulas to \(L=3\), \(\ell=7\), 
and the thresholds above gives the first five rows of Table~\ref{tab:coefficients}.  
The last two rows are the strengthened bounds for nonsingleton final components established below in Lemma~\ref{lem:final-nonsingleton}.

\begin{table}[htbp]
\centering
\caption{Structural coefficients for \(a=0.5433\) and \(\ell=7\).
The first five rows specialize Thiery's root and pendant bounds; 
the last two rows are the strengthened final-component bounds established below.  
Every coefficient exceeds \(c=0.6184\).}
\label{tab:coefficients}
\small
\begin{adjustbox}{max width=.6\textwidth}
\begin{tabular}{@{}lll@{}}
\toprule
component type
&
coefficient
&
certified lower bound
\\
\midrule
first-layer root
&
\(\displaystyle
\chi_{-1}^{(7)}
=\frac67\kappa_0^2
\)
&
\(>0.631790458675\)
\\
ordinary root, level \(0\)
&
\(\displaystyle
\chi_0^{(7)}
=\Rfun(\kappa_0)+\frac67\kappa_0^5
\)
&
\(>0.659423133441\)
\\
ordinary root, level \(1\)
&
\(\displaystyle
\chi_1^{(7)}
=\Rfun(\kappa_1)+\frac67\kappa_1^5
\)
&
\(>0.618533561989\)
\\
pendant, level \(0\)
&
\(\displaystyle
\nu_0
=\Rfun(\kappa_0)
+\kappa_0^5
 \bigl[1-\kappa_0^2(1-\kappa_0)\bigr]
\)
&
\(>0.677422123651\)
\\
pendant, level \(1\)
&
\(\displaystyle
\nu_1
=\Rfun(\kappa_1)
+\kappa_1^5
 \bigl[1-\kappa_1^2(1-\kappa_1)\bigr]
\)
&
\(>0.618537374925\)
\\
\midrule
final tree, \(|T|=2\)
&
\(\displaystyle
\Rfun(a)+\frac{a^3}{1+a^2}
\)
&
\(>0.653316253302\)
\\
final tree, \(|T|\ge3\)
&
\(\displaystyle
\Rfun(a)+\frac23a^3
\)
&
\(>0.636408627138\)
\\
\bottomrule
\end{tabular}
\end{adjustbox}
\end{table}

Among the imported coefficients, the smallest is
\(\chi_1^{(7)}\), with
\(\chi_1^{(7)}-c>1.33\times10^{-4}\).
Appendix~\ref{app:numeric} gives certified rational bounds for the radicals appearing in these evaluations, 
so the comparisons in Table~\ref{tab:coefficients} do not rely on floating-point rounding.

It remains to improve the bound for final components.  
In the final case of Thiery's Theorem~3.7.1, a component \(T\subseteq I_3\) is bounded uniformly by the terminal coefficient
\(\rho_3(\eps_2)\), which is the size-one value of the final root bound
\citep[Theorem~3.7.1]{Thiery2023Thesis}.
For \(|T|\ge2\), we can do better by retaining the positive internal-edge contribution in
Lemma~\ref{lem:final-root-slack}.  We first locate these internal edges within the layer decomposition.

\begin{lemma}\label{lem:final-edge-layer}
Let \(T\subseteq I_3\) be a final component in the high-girth decomposition.
Then \(E(H[T])=E(H_{\le3}[T])\).
In particular, every oriented edge \((u,v)\in E(H[T])\) satisfies
\(w_u>(1-\eps_3)w_v=a^2w_v\).
\end{lemma}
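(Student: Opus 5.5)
The plan is to show that no arc of \(H_4\) can have both endpoints in \(T\); the weight bound then follows from the definition of the layers. Recall that \(H=H_{\le3}\cup H_4\). Since \(T\subseteq I_3\), the arcs of \(H_{\le3}\) with both endpoints in \(T\) are exactly \(E(H_{\le3}[T])\). So the first claim reduces to the following: every arc \((u,v)\) of \(H\) with \(u,v\in T\) already lies in \(H_{\le3}\).

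The step I expect to be the main obstacle is ruling out \(H_4\)-arcs inside \(T\). For this I would use how the final root pieces are defined in Thiery's decomposition. By Section~\ref{subsec:thiery-template}, the final root pieces are exactly the connected components \(T\subseteq I_3\), namely components of \(H_{\le3}[I_2]\) of size at most \(\ell-1\) that are not contained in a larger component of \(H_{\le3}\). The tree in Thiery's Corollary~3.2.8 built on such a \(T\) is the induced exchange graph \(H[T]\), which is connected. I would then argue that an \(H_4\)-arc \((u,v)\) with \(u,v\in T\) cannot exist. Because \(T\) is already connected in \(H_{\le3}\), the vertices \(u\) and \(v\) are joined by a path in \(H_{\le3}[T]\) of length at most \(\ell-2\). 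Adding the \(H_4\)-arc to this path gives a cycle in \(H[T]\) of length at most \(\ell-1\). If the \(H_4\)-arc is parallel to an \(H_{\le3}\)-arc, since the two arcs come from distinct witnesses, it gives a parallel-arc obstruction instead. Each exchange arc is a length-two path through a witness in \(O\), so this cycle lifts to a cycle of length at most \(2(\ell-1)\) in the conflict graph. The chosen girth \(\gamma\) of the lift excludes such a cycle; this is precisely the tree property of \citep[Lemma~2.5.8 and Corollary~3.2.8]{Thiery2023Thesis}. Hence every arc of \(H[T]\) lies in \(H_{\le3}\), and \(E(H[T])=E(H_{\le3}[T])\).

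For the second claim, take an oriented edge \((u,v)\in E(H[T])\). It belongs to some \(H_i\) with \(i\le3\), so its weight ratio satisfies \(w_u/w_v\in(1-\eps_i,1]\) for some \(i\le3\). Since \(\eps_1\le\eps_2\le\eps_3\), this interval lies in \((1-\eps_3,1]\). Therefore \(w_u>(1-\eps_3)w_v\), and \(1-\eps_3=\kappa_3=a^2\) by the parameter choice in Section~\ref{sec:algorithm}.
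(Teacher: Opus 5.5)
Your proof is correct and follows essentially the paper's route: $H_{\le3}[T]$ is a connected spanning subgraph of $H[T]$, the high-girth tree property rules out any further ($H_4$) arc inside $T$, and the weight bound is then read off from the layer thresholds. The only difference is presentational: the paper cites Corollary~3.2.8 that $H[T]$ is a tree and compares edge counts ($|T|-1$), whereas you unpack that citation into the explicit cycle or parallel-arc obstruction of length at most $2(\ell-1)$ that the girth excludes.
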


\begin{proof}
By the definition of \(I_3\), each of its components is a maximal connected component of
\(H_{\le3}\) of size at most \(\ell-1\).  Thus, for a final component \(T\subseteq I_3\),
the graph \(H_{\le3}[T]\) is connected and spans \(T\).

After the Arkin-Hassin high-girth reduction, 
Corollary~3.2.8 of Thiery implies that the full induced exchange graph \(H[T]\) is a tree
\citep[Corollary~3.2.8]{Thiery2023Thesis}.
Since \(H_{\le3}[T]\) is a connected spanning subgraph of this tree, it must contain all \(|T|-1\) edges of \(H[T]\).  
Therefore \(H_{\le3}[T]=H[T]\).

Every edge of \(H[T]\) consequently belongs to one of \(H_1,H_2,H_3\).
If \((u,v)\in H_i\) for some \(i\le3\), the definition of the exchange layers gives \(w_u/w_v>1-\eps_i\ge1-\eps_3\).
Since \(1-\eps_3=a^2\), the claimed weight relation follows.
\end{proof}

\begin{lemma}\label{lem:final-subtree-local}
Let \(T\subseteq I_3\) be a final component in the high-girth decomposition with \(|T|\ge2\).
Then every connected subtree \(R\) of \(H[T]\) satisfies \(w^2(C_R)\le w^2(N_R^+).\)
Hence \(T\) satisfies the local-optimality hypothesis of Lemma~\ref{lem:final-root-slack}.
\end{lemma}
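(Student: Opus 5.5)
The plan is to reduce the subtree inequality to the square local optimality of Phase~I. We exhibit, for each connected subtree \(R\) of \(H[T]\), a lifted exchange whose insertion is \(C_R\) and whose removal is contained in \(N_R^+\). We then transfer its non-improvement back to the original packing \(A\) through the covering projection.

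First, bound sizes. Since \(T\subseteq I_3\) has size at most \(\ell-1=6\), any \(R\subseteq T\) has \(|R|\le 6\), and \(|C_R|\le 3|R|\le 18\le 3s_0\). So \(X=C_R\) is a candidate \(s_0\)-exchange in the lifted instance. We check that \(N(C_R,\widetilde A)\subseteq N_R^+\). By definition, \(N_R^+=\bigcup_{v\in R}\bigl(\{v\}\cup\bigcup_{o\in C_v}N(o,A\setminus\{v\})\bigr)\). Every \(o\in C_v\) has \(N(o,\widetilde A)\subseteq\{v\}\cup N(o,\widetilde A\setminus\{v\})\), so indeed \(N(C_R,\widetilde A)\subseteq N_R^+\). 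Because weights are positive, \(w^2(N_R^+)\ge w^2(N(C_R,\widetilde A))\). It therefore suffices to prove \(w^2(C_R)\le w^2(N(C_R,\widetilde A))\), i.e.\ that inserting \(C_R\) is not square improving in the lift. (Elements of \(\widetilde A\cap\widetilde O\) are isolated and have \(C_x=\{x\}\). They contribute equally to both sides, so they can be ignored.)

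Second, transfer local optimality from \(A\) to the lift. This is the step I expect to be the main obstacle: square local optimality is a property of \(A\) in the original instance, while \(C_R\) lives in \(\widetilde G\). The natural attempt is to project. Set \(X=\varpi(C_R)\subseteq O\). It is independent, since \(O\) is a packing, and \(|X|\le 18\). If \(\varpi\) is injective on \(C_R\) and on \(N(C_R,\widetilde A)\), then \(w^2\) is preserved on both sides. Moreover \(\varpi(N(C_R,\widetilde A))=N(X,A)\) by local bijectivity, exactly as in Lemma~\ref{lem:crossing-gain}(b). The original inequality \(w^2(X)\le w^2(N(X,A))\) then yields the lifted one. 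I would secure injectivity by using the high-girth property. The subgraph of \(\widetilde G\) spanned by \(R\), \(C_R\) and \(N(C_R,\widetilde A)\) has bounded size and is connected through the tree \(H[R]\) and the claw edges. If two of its vertices had the same projection, the projected walk between them, together with the covering structure, would produce a bounded-length closed walk. Lifting it uniquely from one endpoint contradicts that the two lifts differ unless there is a short cycle. Equivalently, on a ball of radius below \(\gamma/2\) the covering map is an isomorphism onto its image. Choosing \(\gamma\) large compared with \(\ell\), as in \citep[Lemma~2.5.7]{Thiery2023Thesis}, makes this ball-injectivity available. It is precisely the mechanism Thiery uses to assert that the lift preserves bounded square local optimality, so I would cite that statement directly where possible.

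Finally, combining the steps: for every connected subtree \(R\), \(w^2(C_R)\le w^2(N(C_R,\widetilde A))\le w^2(N_R^+)\). This is the hypothesis of Lemma~\ref{lem:final-root-slack}. The assumption \(|T|\ge2\) is not used beyond guaranteeing that \(H[T]\) is a tree with edges, as established in Lemma~\ref{lem:final-edge-layer}.
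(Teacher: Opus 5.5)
Your outline matches the paper's proof: bound \(|C_R|\le 3(\ell-1)=18<3s_0\), observe \(N(C_R,A)\subseteq N_R^+\), and appeal to square local optimality in the lift. The paper settles the last step by citing the imported fact that the Arkin--Hassin lift preserves bounded square local optimality. If you do the same, your proof coincides with the paper's.

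The justification you sketch for that step, however, would fail. A covering projection is not injective on small balls of a high-girth lift. High girth of \(\widetilde G\) says nothing about short cycles of the base \(G[A\cup O]\), and a closed walk in the base lifts to an \emph{open} walk. So lifting your projected closed walk gives no contradiction. For example, let the base be the \(4\)-cycle \(a_1o_1a_2o_2\) and the lift a \(4M\)-cycle winding around it, as repeated duplication and crossing produces. For every \(M\ge2\), the radius-\(2\) ball about a lift of \(a_1\) contains two distinct lifts of \(a_2\). Ball-injectivity would need high girth of the base, which is exactly what the reduction avoids assuming. Injectivity on \(N(C_R,\widetilde A)\) can in fact fail here: a blocker shared, in the base, by talons of two adjacent centers of \(R\) has two distinct lifts in \(N(C_R,\widetilde A)\) once the girth exceeds six. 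That failure is harmless, since every member of \(N(X,A)\) has a weight-preserving preimage, as in Lemma~\ref{lem:crossing-gain}(b). But it shows the injectivity route you describe is not available as stated.

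If you want to prove the transfer rather than cite it, replace injectivity by multiplicities; this works for any covering and uses no girth. For \(\widetilde X\subseteq\widetilde O\setminus\widetilde A\) with \(|\widetilde X|\le 3s_0\), let \(m(o)=|\widetilde X\cap\varpi^{-1}(o)|\) and \(X_j=\{o\in O:m(o)\ge j\}\) for \(j\ge1\). Each \(X_j\) is an independent subfamily of \(O\setminus A\) with \(|X_j|\le 3s_0\), so \(w^2(X_j)\le w^2(N(X_j,A))\) by square local optimality of the original \(A\).

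Now let \(u\in N(X_j,A)\), so some neighbor \(o\) of \(u\) has \(m(o)\ge j\). Each lift of \(o\) in \(\widetilde X\) has exactly one neighbor above \(u\). These neighbors are pairwise distinct, because each lift of \(u\) has exactly one neighbor above \(o\). The indices \(j\) with \(u\in N(X_j,A)\) form an initial segment, since the \(X_j\) decrease. Hence the number of lifts of \(u\) in \(N(\widetilde X,\widetilde A)\) is at least the number of such indices. Summing gives \(w^2(N(\widetilde X,\widetilde A))\ge\sum_j w^2(N(X_j,A))\ge\sum_j w^2(X_j)=w^2(\widetilde X)\). Apply this to \(\widetilde X=C_R\), after removing common vertices as you indicate.

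Alternatively, your projection argument can be rescued, because it needs injectivity only on \(C_R\):
\begin{itemize}
\item by local bijectivity, the lifted \(H_{\le3}\) covers the base \(H_{\le3}\);
\item a final component is a finite component of the lifted graph whose induced graph is a tree (Lemma~\ref{lem:final-edge-layer});
\item comparing \(|V|-|E|\) shows the covering has degree one on that component, so \(\varpi\) is injective on \(R\);
\item since each lifted claw projects bijectively onto its base claw, \(\varpi\) is injective on \(C_R\).
\end{itemize}
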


\begin{proof}
Fix a connected subtree \(R\) of \(H[T]\).
If \(C_R=\varnothing\), the claim is immediate, so assume that \(C_R\ne\varnothing\).
Because the claws \(C_v\), \(v\in A\), partition the optimum packing \(O\), 
the family \(C_R=\bigcup_{v\in R}C_v\) is independent.  
Moreover, every claw has at most three talons and every final component has size at most \(\ell-1\).  
Therefore
\[
    |C_R|
    \le 3|R|
    \le 3(\ell-1)
    =18
    <3s_0.
\]

No member of \(C_R\) belongs to the current packing \(A\).
Indeed, if \(o\in A\cap O\), then \(o\) is disjoint from every other member of both packings.  
Hence \(N(o,A)=\{o\}\), the center assignment gives \(\pi(o)=o\), 
and \(o\) cannot participate in a nonsingleton exchange component.  
Thus \(C_R\subseteq\Scal\setminus A\).

The bounded square local optimality preserved by the Arkin-Hassin lift therefore gives
\(w^2(C_R)\le w^2(N(C_R,A)).\)
For every \(o\in C_v\) with \(v\in R\), 
each neighbor of \(o\) in \(A\) is either \(v\), 
which belongs to \(N_v^+\), or belongs to \(N(o,A\setminus\{v\})\), 
which is also contained in \(N_v^+\).
Consequently
\(
    N(C_R,A)\subseteq
    \bigcup_{v\in R}N_v^+
    =N_R^+.
\)
Since all weights are positive, \(w^2(N(C_R,A))\le w^2(N_R^+)\),
and the desired inequality follows.
\end{proof}

We can now combine Lemmas~\ref{lem:final-edge-layer},
\ref{lem:final-subtree-local}, and~\ref{lem:final-root-slack}.

\begin{lemma}\label{lem:final-nonsingleton}
Every final component \(T\subseteq I_3\) with \(|T|\ge2\) satisfies
\[
\sum_{v\in T}
\left(
\frac{\Delta_v}{w_v}
+
\frac{\Psi_v}{w_v}
\right)
\ge
Q(T)+(c-3\delta)w(T).
\]
\end{lemma}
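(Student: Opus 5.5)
We combine the three preceding lemmas and then compare coefficients. By Lemma~\ref{lem:final-subtree-local}, the component \(T\) satisfies the hypothesis of Lemma~\ref{lem:final-root-slack}. Since \(\eps_2=\delta\) and \(1-\eps_2=a\), Lemma~\ref{lem:final-root-slack} gives
\[
\sum_{v\in T}\Bigl(\frac{\Delta_v}{w_v}+\frac{\Psi_v}{w_v}\Bigr)
\ge Q(T)+\bigl(\Rfun(a)-3\delta\bigr)w(T)+a\sum_{(u,v)\in E(H[T])}w_u .
\]
Here we used \(\rho_3(\eps_2)=\Rfun(1-\eps_2)=\Rfun(a)\). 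It therefore suffices to prove the edge inequality
\[
a\sum_{(u,v)\in E(H[T])}w_u\ \ge\ \beta_{|T|}\,w(T),
\qquad
\beta_2=\frac{a^3}{1+a^2},\quad \beta_{t}=\frac23a^3\ (t\ge3).
\]
The last two rows of Table~\ref{tab:coefficients} then give \(\Rfun(a)+\beta_{|T|}>c\), and the claim follows.

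To lower-bound the edge sum, we use Lemma~\ref{lem:final-edge-layer}: \(H[T]\) is a tree on \(T\), and every oriented edge \((u,v)\) satisfies \(w_u>a^2w_v\). Hence each edge contributes at least \(\max\{w_u,a^2w_v\}\), and in particular at least \(\frac{a^2}{1+a^2}(w_u+w_v)\).

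For \(|T|=2\), the single edge \((u,v)\) therefore gives
\[
a\,w_u\ \ge\ a\cdot\frac{a^2}{1+a^2}\,(w_u+w_v)=\beta_2\,w(T),
\]
which is the required bound.

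For \(|T|=t\ge3\), we distribute the tree's edges over the vertices so that every vertex receives a share of weight at least \(\frac23a^2\) times its own weight. Each edge \((u,v)\) contributes \(w_u\ge a^2\max\{w_u,w_v\}\cdot\min\{1,w_u/w_v\}\). Since \(a^2<1\), this edge can pay \(a^2\) times the weight of either of its endpoints. Orient the tree toward a root and let each non-root vertex be paid in full by its parent edge, with the edge charging \(a^2w_x\) to its child endpoint \(x\). This covers every vertex except the root \(r\), giving
\[
\sum_{e}w_{\text{tail}(e)}\ \ge\ a^2\bigl(w(T)-w_r\bigr).
\]
Now choose \(r\) to be a minimum-weight vertex. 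Then \(w_r\le w(T)/t\le w(T)/3\), so the right-hand side is at least \(\frac23a^2w(T)\). Multiplying by \(a\) gives \(\beta_t\).

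The subtle point is that charging \(a^2w_x\) to the child \(x\) needs \(w_{\text{tail}}\ge a^2w_x\) whichever endpoint \(x\) is. If \(x\) is the tail, this follows from \(a^2\le1\). If \(x\) is the head, it follows from Lemma~\ref{lem:final-edge-layer}. So this step is fine.

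The main obstacle I expect is checking that the Table~\ref{tab:coefficients} expressions really match these bounds. In particular, I would confirm that the imported coefficient in Lemma~\ref{lem:final-root-slack} is \(1-\eps_2=a\), and not a different power of \(a\) that would change the constants. The final numerical comparisons themselves are certified in Appendix~\ref{app:numeric}.
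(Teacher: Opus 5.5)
Your proof is correct and follows the paper's argument step for step: you reduce via Lemmas~\ref{lem:final-subtree-local} and~\ref{lem:final-root-slack}, and you use the same bound $w_u\ge\frac{a^2}{1+a^2}(w_u+w_v)$ for $|T|=2$. The only difference is at $t=|T|\ge3$: the paper imports $\sum_{(u,v)\in E(H[T])}w_u\ge\frac{t-1}{t}a^2w(T)$ from the weaker estimate of Thiery's Lemma~2.4.7, Eq.~(2.7), whereas you prove this same bound directly by rooting the tree at a minimum-weight vertex and charging each edge $a^2w_x$ to its child endpoint $x$ (valid for either orientation by Lemma~\ref{lem:final-edge-layer}), which makes the step self-contained.
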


\begin{proof}
Let \(t=|T|\ge2\).
By Lemma~\ref{lem:final-subtree-local}, 
every connected subtree of \(H[T]\) satisfies the local-optimality hypothesis of Lemma~\ref{lem:final-root-slack}.  
Because \(\eps_2=\delta\) and \(1-\eps_2=a\), that lemma yields
\[
\sum_{v\in T}
\left(
\frac{\Delta_v}{w_v}
+
\frac{\Psi_v}{w_v}
\right)
\ge
Q(T)
+
\bigl(\rho_3(\delta)-3\delta\bigr)w(T)
+
a\sum_{(u,v)\in E(H[T])}w_u .
\]
Thus it remains only to lower-bound the total weight of the tails of the directed edges of \(H[T]\).

Suppose first that \(t=2\), and let \((u,v)\) be the unique oriented edge.
By Lemma~\ref{lem:final-edge-layer}, \(w_u\ge a^2w_v\).
Since \(x/(1+x)\) is increasing for \(x>0\),
\[
\frac{w_u}{w(T)}
=
\frac{w_u/w_v}{1+w_u/w_v}
\ge
\frac{a^2}{1+a^2}.
\]
Hence
\[
a\sum_{(x,y)\in E(H[T])}w_x
=
aw_u
\ge
\frac{a^3}{1+a^2}\,w(T).
\]
It follows that the coefficient multiplying \(w(T)\), before the term \(-3\delta\), is at least
\[
\rho_3(\delta)+\frac{a^3}{1+a^2}
=
\Rfun(a)+\frac{a^3}{1+a^2}
>
0.653316253302
>
c.
\]
Therefore the required inequality holds when \(t=2\).

Now assume \(t\ge3\).
By Lemma~\ref{lem:final-edge-layer}, every oriented edge \((u,v)\) satisfies \(w_u\ge a^2w_v\).
We apply the weaker estimate in \citep[Lemma~2.4.7, Eq.~(2.7)]{Thiery2023Thesis},
which is valid for every \(\varepsilon\in(0,1)\).
Taking \(1-\varepsilon=a^2\) gives
\[
\sum_{(u,v)\in E(H[T])}w_u
\ge
\frac{t-1}{t}a^2w(T).
\]
We emphasize that the stronger estimate in the same lemma, 
which requires \(\varepsilon\le1/2\), is not used here.

Substituting this bound into the final-root inequality gives
\[
\sum_{v\in T}
\left(
\frac{\Delta_v}{w_v}
+
\frac{\Psi_v}{w_v}
\right)
\ge
Q(T)
+
\left(
\rho_3(\delta)
+
\frac{t-1}{t}a^3
-
3\delta
\right)w(T).
\]
Since \((t-1)/t\) is increasing and \(t\ge3\),
\[
\rho_3(\delta)+\frac{t-1}{t}a^3
\ge
\rho_3(\delta)+\frac23a^3
=
\Rfun(a)+\frac23a^3
>
0.636408627138
>
c.
\]
Thus
\[
\sum_{v\in T}
\left(
\frac{\Delta_v}{w_v}
+
\frac{\Psi_v}{w_v}
\right)
\ge
Q(T)+(c-3\delta)w(T),
\]
as claimed.
\end{proof}

Hence every final component with at least two vertices already provides more than the coefficient \(c\) required by the global accounting.  
The only remaining final components are singletons.  
This is precisely the case in which Thiery's terminal bound can meet the \(\sqrt3\) square-locality barrier, 
and it is treated separately in the next case.

\subsubsection{Final singletons}
\label{sec:singletons}

The goal of this subsection is to prove Lemma~\ref{lem:singleton-certificate}, 
which supplies the final-component bound missing from Lemma~\ref{lem:final-nonsingleton}.

\begin{lemma}\label{lem:singleton-certificate}
For \(a=0.5433\) and \(c=0.6184\), every final singleton \(T=\{v\}\subseteq I_3\) satisfies
\[
\frac{\Delta_v+\Psi_v}{w_v}+a g_v
\ge
Q(\{v\})+(c-3\delta)w_v.
\]
\end{lemma}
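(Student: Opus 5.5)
We will prove the inequality for a single final singleton \(T=\{v\}\). First we normalize by setting \(w_v=1\); every quantity in the statement is homogeneous of degree one after dividing \(\Delta_v,\Psi_v\) by \(w_v\). Write \(C_v=\{o_1,\dots,o_m\}\) with \(m\in\{0,1,2,3\}\) and \(x_j=w_{o_j}\le\) (not bounded a priori, but see below). Let \(b\) range over the secondary blockers of each \(o_j\). The case \(m=0\) is immediate: then \(\Delta_v=w_v^2\), \(\Psi_v=0\), \(Q=0\), and \(1\ge c-3\delta\).

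For \(m\ge1\) we use three facts.
\begin{enumerate}[label=(\roman*)]
\item By Lemma~\ref{lem:crossing-gain}(c), each \(o_j\) has at most two secondary blockers.
\item Since \(T\subseteq I_3\) is a singleton component of \(H_{\le3}\), every secondary blocker \(b\) of \(o_j\) lies in \(H_4\), that is, \(w_b\le a^2\). In particular \(b\) is not in \(T\), and so lies in some layer \(S_{i+1}\). Its contribution to \(Q(\{v\})\) is then at most \(\eps_2 w_b=\delta w_b\), because \(\eps_{i-1}\le\eps_2\) for \(i\le3\).
\item The single-claw exchange is square nonimproving, giving \(\Delta_v\ge0\) with \(\Delta_v=1+\sum_b w_b^2-\sum_j x_j^2\). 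Here we must be careful: \(N_v^+\) counts each blocker once even if shared. Since \(w^2\) and \(w\) are superadditive in the right direction, we will treat shared blockers by an upper bound on their multiplicity.
\end{enumerate}

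With these facts the left side minus the right side becomes an explicit function:
\[
F=1+\sum_b w_b^2-\sum_j x_j^2+\sum_j\Bigl((x_j-1)^2+\sum_{b\sim o_j}(w_b-w_b^2)\Bigr)+a\Bigl[\sum_j x_j-1-\sum_b w_b\Bigr]_+-\delta\sum_{b}w_b-(c-3\delta).
\]
We need \(F\ge0\) subject to \(0<w_b\le a^2\), at most two blockers per talon, and \(\sum_j x_j^2\le 1+\sum_b w_b^2\).

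The next step is to eliminate the blocker variables. For fixed talon weights, \(F\) is separable and piecewise concave or linear in each \(w_b\) on \([0,a^2]\), so the minimum over each \(w_b\) is attained at an endpoint or at a kink of the positive part. At the same time, the constraint \(\Delta_v\ge0\) is exactly what permits large talons. We therefore keep \(\Delta_v\) as an explicit slack variable and minimize over \(w_b\in\{0,a^2\}\) together with the breakpoint where the gain vanishes.

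After this, \(F\) depends on \(x_1,\dots,x_m\) only through \(\sum_j x_j^2\) and \(\sum_j x_j\), since \(\sum_j(x_j-1)^2-\sum_j x_j^2=m-2\sum_j x_j\). This is why the problem collapses to a one-variable optimization in \(s=\sum_j x_j\) for each \(m\in\{1,2,3\}\). We minimize over \(\sum_j x_j^2\) given \(s\), which is extremal at equal weights or at the feasibility boundary. We then verify the resulting piecewise quadratic in \(s\) on each piece: where \(a(s-1-\sum_b w_b)\) is active, and where it is not.

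The main obstacle is the regime near the \(\sqrt3\) configuration: \(m=3\), \(x_j\approx1/\sqrt3\), no blockers, and \(\Delta_v\approx0\). There the imported slack gives only \(\Rfun(a)\)-like values below \(c\), and the gain term \(a(\sqrt3-1)\) must supply the deficit. So the margin depends delicately on \(a\) and \(c\). The decisive checks are finitely many rational inequalities at the breakpoints, and these we would certify exactly, as in Appendix~\ref{app:finite-checks}. A secondary difficulty is handling blockers shared between talons of the same claw. Our plan is to show that merging two blockers only increases \(F\), since it reduces \(\sum_b w_b\) in the gain while the \(\Psi\) terms are counted per talon. This lets us assume distinct blockers without loss of generality.
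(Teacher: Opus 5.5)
Your overall plan (normalize, reduce to the talon sum and the blocker mass, use square local optimality, finish with a one-variable check) has the same shape as the paper's. However, the function $F$ you propose to verify is not the right one, because it drops part of $Q(\{v\})$. Since $v\in S_4=I_3$ and $v\in N(o,S_4)$ for every $o\in C_v$, the $i=3$ term of $Q(\{v\})$ already contains $\eps_2w_v=\delta$ per talon, i.e.\ $m\delta$ in total. Also, a blocker met by $r_b$ talons is charged once per incidence, up to $\delta r_bw_b$ (compare Lemma~\ref{lem:singleton-relations}(b)). So your bound $\delta\sum_bw_b$ falls short of the worst-case value $\delta(m+\sum_br_bw_b)$ of $Q(\{v\})$ by $m\delta+\delta\sum_b(r_b-1)w_b$. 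Since $Q$ is on the right-hand side, your target is strictly weaker than the lemma. For $m=3$ the missing center term alone is $3\delta\approx1.37$, while the true margin is below $2\times10^{-4}$, so $F\ge0$ does not imply the statement.

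The same omission is what makes your shared-blocker reduction look valid. With correct accounting, a blocker of weight $z$ shared by two talons contributes $2az-z^2$ to $(\Delta_v+\Psi_v)/w_v-Q(\{v\})$ and is counted once in $\beta$. Splitting it into two private blockers of weight $z$ contributes $2az$ instead. When the gain is inactive, the shared configuration is therefore \emph{smaller} by $z^2$, so ``merging only increases $F$'' is false. The correct device is $ar_bw_b-(r_b-1)w_b^2=aw_b+(r_b-1)w_b(a-w_b)\ge aw_b$, valid since $w_b\le a^2<a$. It removes the multiplicities without changing $\beta=\sum_bw_b$ or $\sum_bw_b^2$.

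After that correction, the quantity is at least $m+1-2s+(3-m)\delta+a\max\{\beta,s-1\}$ with $s=\sum_jx_j$, which is nondecreasing in $\beta$. All the difficulty therefore lies in the coupling constraint $\sum_bw_b^2\ge\sum_jx_j^2-1\ge s^2/m-1$, and your elimination step does not handle it. Fix the talon weights and minimize $\beta$ subject to this constraint, the cap $w_b\le q=a^2$, and at most $2m$ blockers. The minimizer has $h$ blockers at the cap and one fractional blocker of weight $\sqrt\theta$, where $s^2/m-1=hq^2+\theta$. That point is fixed by the tight constraint, and in general it is neither an endpoint of $[0,a^2]$ nor the point where the gain vanishes.

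What you need is exactly the capped $\ell_1$--$\ell_2$ bound $\beta\ge hq+\sqrt\theta$ of Lemma~\ref{lem:l1-l2}. After it comes the piecewise (linear or concave) check in $s$ over the breakpoints $s^2/m-1=hq^2$ and the crossings $\beta_{\min}=s-1$. This also shows the bottleneck is not the $\sqrt3$ configuration, which gives $4-2\sqrt3+a(\sqrt3-1)\approx0.9336$. The binding case is $m=3$, $s_*\approx1.948$, with three capped blockers and one fractional one; there $\beta=s_*-1$, hence $g_v=0$.
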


The right-hand side is the same component target as in Lemma~\ref{lem:final-nonsingleton}: 
\(Q(\{v\})\) is the layer credit used in the global accounting, 
and \((c-3\delta)w_v\) is the required final-layer weight contribution.  
For a nonsingleton final component, internal edges of \(H[T]\) provide the additional positive term needed to reach this target.  
A singleton has no internal edge, so Lemma~\ref{lem:singleton-certificate} instead allows 
the positive linear claw gain \(g_v\) to supply the missing contribution.  
The proof normalizes the neighborhood of \(v\), derives an exact formula for \(\Delta_v+\Psi_v\), 
and then uses square local optimality and 
the incidence restrictions of \(3\)-Set Packing to reduce the claim to a one-variable finite check.

Let \(T=\{v\}\subseteq I_3\) be a final singleton, write \(W=w_v\), and let \(m=|C_v|\le3\).  
For the moment assume \(1\le m\le3\).  For each \(o\in C_v\), set \(y_o=w_o/W\) and let \(x=\sum_{o\in C_v}y_o\).  
Define \(B_v=N(C_v,A)\setminus\{v\}\) to be the set of distinct secondary blockers of the talons in \(C_v\).  
For \(u\in B_v\), set \(z_u=w_u/W\), let \(r_u\) be the number of talons in \(C_v\) that conflict with \(u\), 
and put \(\beta=\sum_{u\in B_v}z_u\).  
Thus \(x\) is the total normalized talon weight, 
\(\beta\) is the total normalized weight of the distinct secondary blockers, 
and \(r_u\) records the multiplicity with which blocker \(u\) appears among the talon neighborhoods.  
Finally, set \(q=a^2=1-\eps_3\).
Figure~\ref{fig:final-singleton} illustrates the local notation: \(v\) is the singleton center, 
the vertices \(o_i\) are its assigned talons, and the vertices \(u_j\) are secondary blockers.

\begin{figure}[htbp]
\centering
\begin{adjustbox}{max width=.3\textwidth}
\begin{tikzpicture}[
  A/.style={circle, draw, fill=blue!10, minimum size=8mm, inner sep=1pt},
  O/.style={circle, draw, fill=orange!14, minimum size=8mm, inner sep=1pt},
  e/.style={thick},
  s/.style={thick, dashed}
]
\node[A] (v) at (0,0) {$v$};

\node[O] (o1) at (-2.4,1.45) {$o_1$};
\node[O] (o2) at (0,2.0) {$o_2$};
\node[O] (o3) at (2.4,1.45) {$o_3$};

\node[A] (b1) at (-3.1,3.05) {$u_1$};
\node[A] (b2) at (-1.25,3.45) {$u_2$};
\node[A] (b3) at (0.75,3.55) {$u_3$};
\node[A] (b4) at (2.9,3.05) {$u_4$};

\draw[e] (v)--(o1);
\draw[e] (v)--(o2);
\draw[e] (v)--(o3);

\draw[s] (o1)--(b1);
\draw[s] (o1)--(b2);
\draw[s] (o2)--(b2);
\draw[s] (o2)--(b3);
\draw[s] (o3)--(b3);
\draw[s] (o3)--(b4);
\end{tikzpicture}
\end{adjustbox}
\caption{A schematic final singleton with three talons.  
Blue vertices belong to the current packing \(A\), 
and orange vertices belong to the optimum packing \(O\).  
Solid edges indicate center-talon conflicts, while dashed edges indicate secondary-blocker incidences.  
Secondary blockers may be shared by different talons.}
\label{fig:final-singleton}
\end{figure}

The next lemma collects the local identities and structural facts used in the proof.  

\begin{lemma}\label{lem:singleton-relations}
Let \(T=\{v\}\subseteq I_3\) be a final singleton with \(1\le m=|C_v|\le3\), and use the normalized notation above.  
Then:
\begin{enumerate}[leftmargin=*,label=(\alph*)]
\item
\(
\frac{\Delta_v+\Psi_v}{W^2}
=
m+1-2x
+
\sum_{u\in B_v}
\bigl(r_uz_u-(r_u-1)z_u^2\bigr).
\)

\item Writing \(Q_v=Q(\{v\})\),
\(\frac{Q_v}{W} \le \delta\left(m+\sum_{u\in B_v}r_uz_u\right)\) and  \(\frac{g_v}{W}=\pos{x-1-\beta}.\)

\item Every talon has at most two secondary blockers.  
Consequently, \(\sum_{u\in B_v}r_u\le2m\) and \(|B_v|\le2m\).  
Moreover, every \(u\in B_v\) satisfies \(0<z_u\le q=a^2<a\).

\item
\(\sum_{u\in B_v}z_u^2
\ge
S_m(x)
:=
\pos{\frac{x^2}{m}-1}.\)
In particular, every feasible final singleton satisfies \(0\le S_m(x)\le2mq^2\).
\end{enumerate}
\end{lemma}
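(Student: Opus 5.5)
The plan is to unfold the definitions of \(\Delta_v\), \(\Psi_v\), \(Q(\{v\})\) and \(g_v\) on the singleton \(T=\{v\}\) and normalize by \(W\). The one structural input is that \(T\) is a whole connected component of \(H_{\le3}\). I would first record the basic neighborhood identity. Every talon \(o\in C_v\) satisfies \(\pi(o)=v\), so \(v\in N(o,A)\). Moreover \(N(o,A)=\{v\}\mathbin{\dot\cup}N(o,A\setminus\{v\})\). Hence
\[
N_v^+=N(C_v,A)=\{v\}\mathbin{\dot\cup}B_v,
\]
where the union is disjoint because \(B_v\subseteq A\setminus\{v\}\). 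In addition,
\[
\sum_{o\in C_v}w\bigl(N(o,A\setminus\{v\})\bigr)=\sum_{u\in B_v}r_uw_u,
\qquad
\sum_{o\in C_v}w^2\bigl(N(o,A\setminus\{v\})\bigr)=\sum_{u\in B_v}r_uw_u^2 .
\]
The degenerate case \(v\in A\cap O\) gives \(C_v=\{v\}\) and \(B_v=\varnothing\), with \(x=1\), and I would check it directly against each formula.

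For (a), I would write
\[
\Delta_v=W^2+\sum_{u\in B_v}w_u^2-\sum_{o\in C_v}w_o^2 .
\]
Expanding \((w_o-W)^2\) in \(\Psi_v\) and using the two multiplicity sums above, the \(w_o^2\) terms cancel. Dividing by \(W^2\) leaves \(1+m-2x+\sum_{u\in B_v}\bigl(r_uz_u-(r_u-1)z_u^2\bigr)\).

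For (b), I would note that \(\eps_{i-1}\le\eps_2=\delta\) for \(i\le3\). The classes \(S_2,S_3,S_4\) are disjoint, so \(\sum_{i=1}^3 w(N(o,S_{i+1}))\le w(N(o,A))=W+w(N(o,A\setminus\{v\}))\). Summing over \(o\in C_v\) and dividing by \(W\) gives the bound on \(Q_v/W\). The identity for \(g_v/W\) follows from \(w(N(C_v,A))=W(1+\beta)\).

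For (c), the bound of two secondary blockers per talon is exactly Lemma~\ref{lem:crossing-gain}(c) on the lift. This gives \(\sum_{u\in B_v} r_u\le 2m\), and \(|B_v|\le\sum_{u\in B_v} r_u\) since each \(r_u\ge1\). Positivity \(z_u>0\) is immediate. The key point, and the step needing the most care, is \(z_u\le q\). Each \(u\in B_v\) produces an arc \((u,v)\) of \(H\). I would argue that if this arc lay in \(H_{\le3}\), then \(u\) would lie in the same connected component of the full graph \(H_{\le3}\) as \(v\). By the definition of \(I_3\), as already used in the proof of Lemma~\ref{lem:final-edge-layer}, a final component is a full \(H_{\le3}\)-component, which would force \(u\in T=\{v\}\). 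This contradicts \(u\ne v\). Hence the arc lies in \(H_4\), so \(w_u/W\le 1-\eps_3=a^2=q\), and \(q<a\) because \(a<1\).

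For (d), I would apply square local optimality to the \(1\)-exchange inserting \(C_v\). This is legitimate on the lift because \(|C_v|\le3\le 3s_0\) and \(C_v\cap A=\varnothing\) in the nondegenerate case, as argued in Lemma~\ref{lem:final-subtree-local}. It gives \(\sum_{o\in C_v}y_o^2\le 1+\sum_{u\in B_v}z_u^2\). Cauchy--Schwarz gives \(\sum_{o\in C_v}y_o^2\ge x^2/m\), and combining this with the trivial bound \(\sum_{u\in B_v}z_u^2\ge0\) yields \(\sum_{u\in B_v}z_u^2\ge S_m(x)\). The upper bound \(S_m(x)\le\sum_{u\in B_v}z_u^2\le|B_v|q^2\le 2mq^2\) then follows from (c).
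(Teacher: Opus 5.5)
Your proposal is correct and follows essentially the paper's route: the same expansion of $\Delta_v+\Psi_v$ through the multiplicities $r_u$, the same appeal to Lemma~\ref{lem:crossing-gain}(c), the same $H_4$ classification of each arc $(u,v)$, and the same one-claw square inequality combined with Cauchy--Schwarz. The differences are cosmetic. In (b) you bound $\sum_i\eps_{i-1}w(N(o,S_{i+1}))\le\delta\,w(N(o,A))$ in one step rather than splitting the blockers by class, and in (c) you use directly that a component of $I_3$ is a full component of $H_{\le3}$, which is exactly what the paper's two-case argument ($u\in I_2$ versus $u\notin I_2$) establishes.
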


\begin{proof}
For part~(a), recall that
\(
N_v^+
=
\{v\}\cup\bigcup_{o\in C_v}N(o,A\setminus\{v\}).
\)
Since every \(o\in C_v\) is assigned to \(v\), we have \(v\in N(o,A)\).  
Moreover, \(B_v=N(C_v,A)\setminus\{v\}\) consists exactly of the distinct vertices of \(A\setminus\{v\}\) 
that conflict with at least one talon in \(C_v\).  Hence
\(
N(C_v,A)=N_v^+=\{v\}\mathbin{\dot\cup}B_v.
\)
Using \(w_v=W\), \(w_o=Wy_o\), and \(w_u=Wz_u\), the definition of \(\Delta_v\) therefore gives
\[
\begin{aligned}
\frac{\Delta_v}{W^2}
&=
\frac{w^2(N_v^+)-w^2(C_v)}{W^2}
=
\frac{W^2+\sum_{u\in B_v}w_u^2-\sum_{o\in C_v}w_o^2}{W^2}
=
1+\sum_{u\in B_v}z_u^2-\sum_{o\in C_v}y_o^2.
\end{aligned}
\]

We next expand \(\Psi_v\).  For a fixed \(o\in C_v\), additivity of \(w\) and \(w^2\) over vertex sets gives
\[
\begin{aligned}
&\frac{(w_o-W)^2
+W\,w(N(o,A\setminus\{v\}))
-w^2(N(o,A\setminus\{v\}))}{W^2}
=
(y_o-1)^2
+
\sum_{u\in N(o,A\setminus\{v\})}(z_u-z_u^2).
\end{aligned}
\]
Consequently,
\[
\begin{aligned}
\frac{\Psi_v}{W^2}
&=
\sum_{o\in C_v}(y_o-1)^2
+
\sum_{o\in C_v}
\sum_{u\in N(o,A\setminus\{v\})}(z_u-z_u^2)\\
&=
\sum_{o\in C_v}(y_o^2-2y_o+1)
+
\sum_{u\in B_v}r_u(z_u-z_u^2)\\
&=
\sum_{o\in C_v}y_o^2-2x+m
+
\sum_{u\in B_v}r_uz_u
-
\sum_{u\in B_v}r_uz_u^2.
\end{aligned}
\]
The second equality uses the definition of \(r_u\): 
for a fixed blocker \(u\in B_v\), the term \(z_u-z_u^2\) occurs once for each talon \(o\in C_v\) adjacent to \(u\), 
hence exactly \(r_u\) times.  
Adding the preceding expressions for \(\Delta_v/W^2\) and \(\Psi_v/W^2\), 
the terms \(\sum_{o\in C_v}y_o^2\) cancel, and we obtain
\[
\begin{aligned}
\frac{\Delta_v+\Psi_v}{W^2}
&=
m+1-2x
+
\sum_{u\in B_v}
\bigl(z_u^2+r_uz_u-r_uz_u^2\bigr)\\
&=
m+1-2x
+
\sum_{u\in B_v}
\bigl(r_uz_u-(r_u-1)z_u^2\bigr),
\end{aligned}
\]
as claimed.

For part~(b), recall that
\(
Q_v
=
\sum_{o\in C_v}\sum_{i=1}^3
\eps_{i-1}w\bigl(N(o,S_{i+1})\bigr),
\)
where \(A=S_1\mathbin{\dot\cup}S_2\mathbin{\dot\cup}S_3\mathbin{\dot\cup}S_4\) and \(S_4=I_3\).  
Since \(v\in S_4\) and every \(o\in C_v\) is assigned to \(v\), 
the vertex \(v\) belongs to \(N(o,S_4)\) for every \(o\in C_v\).  
Thus the center contributes exactly \(\eps_2W=\delta W\) for each of the \(m\) talons.

A secondary blocker \(u\in B_v\) contributes to \(Q_v\) only if \(u\in S_2\cup S_3\cup S_4\).  
If \(u\in S_j\), where \(j\in\{2,3,4\}\), then every incidence of \(u\) with a talon contributes with coefficient \(\eps_{j-2}\).  
Since \(u\) is incident to exactly \(r_u\) talons, and since \(\eps_0\le\eps_1\le\eps_2=\delta\), we obtain
\[
\begin{aligned}
\frac{Q_v}{W}
&=
m\eps_2
+\eps_0\sum_{u\in B_v\cap S_2}r_uz_u
+\eps_1\sum_{u\in B_v\cap S_3}r_uz_u
+\eps_2\sum_{u\in B_v\cap S_4}r_uz_u\\
&\le
\delta m
+\delta\sum_{u\in B_v\cap(S_2\cup S_3\cup S_4)}r_uz_u\\
&\le
\delta\left(m+\sum_{u\in B_v}r_uz_u\right).
\end{aligned}
\]
Notice that blockers in \(S_1\) do not appear in the definition of \(Q_v\), 
since the sum defining \(Q_v\) contains only \(S_2,S_3,S_4\).

Furthermore,
\(N(C_v,A)=\{v\}\mathbin{\dot\cup}B_v\), so
\(w(C_v)=Wx\) and \(w(N(C_v,A))=W(1+\beta)\).  Since \(W>0\),
\[
\begin{aligned}
\frac{g_v}{W}
&=
\frac{\pos{w(C_v)-w(N(C_v,A))}}{W}
=
\frac{\pos{Wx-W(1+\beta)}}{W}
=
\pos{x-1-\beta}.
\end{aligned}
\]
This proves part~(b).

For part~(c), fix \(o\in C_v\).  If \(o=v\in A\cap O\), 
then \(v\) is disjoint from every other member of both packings \(A\) and \(O\).  
Hence \(N(v,A)=\{v\}\); in fact \(C_v=\{v\}\), and this talon has no secondary blocker.

Suppose instead that \(o\in O\setminus A\).  
Because \(o\) contains at most three ground elements and the members of \(A\) are pairwise disjoint, 
\(o\) can intersect at most three members of \(A\): 
distinct members of \(A\) intersecting \(o\) must use distinct ground elements of \(o\).  
In the high-girth instance used by the component analysis, 
Lemma~\ref{lem:crossing-gain}(c) preserves this \(A\)-neighborhood degree under the covering projection.  
Since \(v=\pi(o)\) is one of the neighbors of \(o\), 
there are at most two additional neighbors in \(A\setminus\{v\}\), that is, at most two secondary blockers.

The quantity \(\sum_{u\in B_v}r_u\) counts precisely the talon--secondary-blocker incidences.  
Summing the preceding bound over the \(m\) talons therefore yields
\(
\sum_{u\in B_v}r_u\le2m.
\)
Since every \(u\in B_v\) is incident to at least one talon, \(r_u\ge1\), and hence
\(
|B_v|
\le
\sum_{u\in B_v}r_u
\le2m.
\)

It remains to bound the weight of each secondary blocker.  
Fix \(u\in B_v\), and choose a talon \(o\in C_v\) that conflicts with \(u\).  
By the definition of the exchange graph, 
the witness \(o\in C_v\) together with the secondary blocker \(u\in N(o,A\setminus\{v\})\) induces the directed exchange arc \((u,v)\).

We claim that this arc cannot belong to \(H_{\le3}\). 
Since \(v\in I_3\subseteq I_2\) and \(T=\{v\}\) is a final singleton,
the connected component of \(H_{\le3}[I_2]\) containing \(v\) is the singleton \(\{v\}\).  
If \((u,v)\in H_{\le3}\) and \(u\in I_2\), 
then \(u\) and \(v\) would lie in the same component of \(H_{\le3}[I_2]\), 
contradicting that this component is a singleton.  
On the other hand, if \((u,v)\in H_{\le3}\) and \(u\notin I_2\), 
then the singleton component \(\{v\}\) of \(H_{\le3}[I_2]\) would be 
contained in a strictly larger connected component of the full graph \(H_{\le3}\).  
By the definition of the layer partition, \(v\) would then belong to \(P_3\), 
rather than \(I_3\), again a contradiction.

Thus \((u,v)\notin H_{\le3}\).  Since every exchange arc belongs to exactly one of \(H_1,H_2,H_3,H_4\), 
we must have \((u,v)\in H_4\).  By the definition of \(H_4\),
\(
0<
\frac{w_u}{w_v}
\le
1-\eps_3.
\)
With \(z_u=w_u/W\), \(W=w_v\), and \(q=1-\eps_3=a^2\), this gives
\(
0<z_u\le q=a^2<a,
\)
where the last inequality follows from \(0<a<1\).  This proves part~(c).

For part~(d), first consider the common-vertex case \(v\in A\cap O\).  
Since both \(A\) and \(O\) are packings, \(v\) is isolated in \(G[A\cup O]\).  
Thus \(C_v=\{v\}\), \(B_v=\varnothing\), \(m=1\), and \(x=1\).  Hence
\[
\sum_{u\in B_v}z_u^2=0
=
\pos{\frac{x^2}{m}-1}
=
S_m(x),
\]
so the desired lower bound holds.

Assume now that \(v\notin O\).  Then every member of \(C_v\) lies in \(O\setminus A\): 
indeed, if \(o\in A\cap O\), then \(o\) is isolated from every other member of \(A\), 
so its unique member of \(N(o,A)\) is \(o\) itself and therefore \(\pi(o)=o\), not \(v\).  
Consequently, \(C_v\subseteq\Scal\setminus A.\)
Because \(O\) is a packing, \(C_v\) is an independent family.  
Moreover, \(|C_v|=m\le3\le3s_0,\)
so \(C_v\) is an admissible \(s_0\)-exchange in the sense of Phase~I.  
The high-girth reduction preserves the bounded square local optimality used in the analysis.  
Hence the exchange inserting \(C_v\) and removing \(N(C_v,A)\) cannot strictly increase the squared-weight potential:
\(w^2(C_v)\le w^2(N(C_v,A)).\)
Using \(N(C_v,A)=\{v\}\mathbin{\dot\cup}B_v\) and dividing by \(W^2\), we obtain
\[
\begin{aligned}
\sum_{o\in C_v}y_o^2
&=
\frac{w^2(C_v)}{W^2}
\le
\frac{w^2(N(C_v,A))}{W^2}
=
1+\sum_{u\in B_v}z_u^2.
\end{aligned}
\]
Since \(m\ge1\), Cauchy-Schwarz applied to the \(m\) numbers \((y_o)_{o\in C_v}\) gives
\(
x^2
=
\left(\sum_{o\in C_v}y_o\right)^2
\le
m\sum_{o\in C_v}y_o^2,
\)
and therefore
\(
\sum_{u\in B_v}z_u^2
\ge
\sum_{o\in C_v}y_o^2-1
\ge
\frac{x^2}{m}-1.
\)
The left-hand side is nonnegative, so the two lower bounds combine to give
\[
\sum_{u\in B_v}z_u^2
\ge
\max\left\{0,\frac{x^2}{m}-1\right\}
=
S_m(x).
\]

Finally, part~(c) shows that \(|B_v|\le2m\) and \(z_u\le q\) for every \(u\in B_v\).  Hence
\[
\begin{aligned}
S_m(x)
&\le
\sum_{u\in B_v}z_u^2
\le
|B_v|q^2
\le
2mq^2.
\end{aligned}
\]
Since \(S_m(x)\ge0\) by definition, this proves
\(0\le S_m(x)\le2mq^2\) and completes part~(d).
\end{proof}

We next convert the lower bound on \(\sum z_u^2\) into a lower bound on the total normalized blocker weight \(\beta=\sum z_u\).  
For \(0\le S\le2mq^2\), let \(\beta_{\min}(S)\) denote 
the minimum possible value of \(\sum_i z_i\) over at most \(2m\) coordinates satisfying \(0\le z_i\le q\) and \(\sum_i z_i^2\ge S\).  
To write this minimum explicitly, decompose \(S=hq^2+\theta\)
with \(h=\lfloor S/q^2\rfloor\) and \(0\le\theta<q^2\).  
This decomposition reflects the extremal form of a minimizing vector: 
each coordinate contributes at most \(q^2\) to \(\sum_i z_i^2\), 
so \(h\) records the number of coordinates that can be filled to the cap \(q\), 
while \(\theta\) is the remaining required squared contribution.

\begin{lemma}\label{lem:l1-l2}
For \(0\le S\le2mq^2\), if \(S=hq^2+\theta\) with \(h\in\mathbb Z_{\ge0}\) and \(0\le\theta<q^2\), then
\[
\beta_{\min}(S)=hq+\sqrt{\theta}.
\]
\end{lemma}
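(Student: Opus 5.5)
Two directions are needed: the claimed vector is feasible, and no feasible vector does better.

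\emph{Feasibility (upper bound).} Take $h$ coordinates equal to $q$ and one coordinate equal to $\sqrt\theta$, with the rest zero. Since $\theta<q^2$, we have $\sqrt\theta<q$, so the cap $z_i\le q$ holds. The squared sum is $hq^2+\theta=S$. We must also check that at most $2m$ coordinates are used. From $S\le 2mq^2$ we get $h\le 2m$. If $h=2m$, then $S=2mq^2$ forces $\theta=0$, and the extra coordinate is zero and can be dropped. If $h<2m$, then $h+1\le 2m$. So $\beta_{\min}(S)\le hq+\sqrt\theta$.

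\emph{Optimality (lower bound).} Fix a feasible vector $z$ with $\sum z_i^2\ge S$ and $0\le z_i\le q$. We may scale it down until $\sum z_i^2=S$ exactly. Scaling by a factor at most $1$ keeps the caps and does not increase $\sum z_i$. So assume equality.

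The key observation is that minimizing the linear sum $\sum z_i$ under a fixed value of the convex quantity $\sum z_i^2$ favours spreading mass as unevenly as the caps allow. I would make this precise by a smoothing exchange. Suppose two coordinates satisfy $0<z_j\le z_i<q$. Replace them by $z_i',z_j'$ with $z_i'^2+z_j'^2=z_i^2+z_j^2$ and $z_i'=\min\{q,\sqrt{z_i^2+z_j^2}\}$.

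For nonnegative $s,t$ with $s^2+t^2$ fixed, the sum $s+t$ is smallest when the pair is as unbalanced as possible. This follows because $s+t=\sqrt{(s^2+t^2)+2st}$ and $st$ decreases as the pair becomes more unbalanced. Hence the replacement does not increase $\sum z_i$.

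Each such step either caps a coordinate at $q$ or zeroes one out. So after finitely many steps at most one coordinate lies strictly between $0$ and $q$. The resulting vector has $h'$ coordinates equal to $q$ and one equal to some $r\in[0,q)$, with $h'q^2+r^2=S$. Uniqueness of the division of $S$ by $q^2$ (floor and remainder) forces $h'=h$ and $r^2=\theta$. Therefore $\sum z_i\ge hq+\sqrt\theta$.

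\emph{Alternative route and the main obstacle.} One could instead argue via concavity: minimizing a concave function of $u_i=z_i^2$ over the polytope $\{0\le u_i\le q^2,\ \sum u_i=S\}$ attains its minimum at a vertex. Vertices have all but at most one coordinate at a bound, which gives the same extremal form. The main care point is bookkeeping the boundary case $h=2m$ and confirming that the number of coordinates never exceeds $2m$. Both follow from $S\le 2mq^2$ as noted in the feasibility step.
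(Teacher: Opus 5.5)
Your proof is correct and follows essentially the same route as the paper. Both make the squared-sum constraint tight, then apply the pairwise exchange $(x,y)\mapsto(0,\sqrt{x^2+y^2})$ or $(\sqrt{x^2+y^2-q^2},q)$ until at most one coordinate lies strictly between $0$ and $q$. The differences are cosmetic: you transform an arbitrary feasible vector, so you never need a minimizer to exist, and you check explicitly that the extremal vector uses at most $2m$ coordinates; the paper instead argues at a minimizer and certifies the decrease through the identity $(x+y-q)^2-(x^2+y^2-q^2)=2(x-q)(y-q)$.
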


\begin{proof}
The case \(S=0\) is immediate.  For \(S>0\), pad any feasible vector with zeros to obtain \(2m\) coordinates.  
At a minimizer of \(\sum_i z_i\), the constraint \(\sum_i z_i^2\ge S\) is tight; 
otherwise the positive coordinates can be decreased slightly while preserving feasibility.

Suppose two coordinates satisfy \(0<x\le y<q\).  
If \(x^2+y^2\le q^2\), replacing \((x,y)\) by \((0,\sqrt{x^2+y^2})\) preserves \(x^2+y^2\) and strictly decreases \(x+y\).  
If \(x^2+y^2>q^2\), replace \((x,y)\) by \((\sqrt{x^2+y^2-q^2},q)\).  
Again the squared sum is unchanged, and
\[
\begin{aligned}
(x+y-q)^2-(x^2+y^2-q^2)
&=2(x-q)(y-q)>0.
\end{aligned}
\]
Since \(x^2+y^2>q^2\) implies \(x+y-q>0\), the new pair also has strictly smaller sum.  
Thus a minimizer has at most one coordinate strictly between \(0\) and \(q\).  
Because its squared sum is \(S=hq^2+\theta\), it consists of \(h\) coordinates equal to \(q\), 
one coordinate equal to \(\sqrt\theta\) when \(\theta>0\), and zeros otherwise.  
Hence \(\beta_{\min}(S)=hq+\sqrt\theta\).
\end{proof}

We are now ready to prove Lemma~\ref{lem:singleton-certificate}.  
The preceding lemmas reduce the blocker structure to the aggregate parameters \(x\) and \(\beta\), 
and then lower-bound \(\beta\) in terms of \(x\).  
This reduces the desired inequality to a one-variable verification for \(m=1,2,3\).

\begin{proof}[Proof of Lemma~\ref{lem:singleton-certificate}]
If \(m=0\), then \(C_v=\varnothing\), \(N_v^+=\{v\}\), and therefore \(\Delta_v=W^2\), while \(\Psi_v=Q(\{v\})=g_v=0\).  
The claim reduces to \(W\ge(c-3\delta)W\), which is immediate.  
Assume henceforth that \(1\le m\le3\), and write \(Q_v=Q(\{v\})\).

Using Lemma~\ref{lem:singleton-relations}(a)-(c), \(\delta=1-a\), and \(z_u<a\), we obtain
\begin{equation}
\begin{aligned}
&\frac{\Delta_v+\Psi_v}{W^2}
+a\frac{g_v}{W}
-\frac{Q_v}{W}
+3\delta\\
&\quad\ge
m+1-2x+(3-m)(1-a)
+
\sum_{u\in B_v}
\bigl(ar_uz_u-(r_u-1)z_u^2\bigr)
+a\pos{x-1-\beta}\\
&\quad=
m+1-2x+(3-m)(1-a)
+a\beta+a\pos{x-1-\beta}
+
\sum_{u\in B_v}(r_u-1)z_u(a-z_u)\\
&\quad\ge
m+1-2x+(3-m)(1-a)
+a\max\{\beta,x-1\}
\quad=:
F_m(x,\beta). \label{eq:Fmbound}
\end{aligned}
\end{equation}
Thus \(F_m(x,\beta)\) is a lower bound on the normalized left-hand side needed for the singleton certificate 
after the layer credit has been subtracted.  
The only remaining dependence on the blocker vector is through its total normalized weight \(\beta\).

Lemma~\ref{lem:singleton-relations}(d) gives the required lower bound \(S_m(x)\) on \(\sum z_u^2\), 
while Lemma~\ref{lem:l1-l2} converts that requirement into a lower bound on \(\beta\).  
By its definition, \(\beta_{\min}\) is nondecreasing, and therefore
\[
\begin{aligned}
\beta
&\ge
\beta_{\min}\!\left(\sum_{u\in B_v}z_u^2\right)
\ge
\beta_{\min}(S_m(x)).
\end{aligned}
\]
Because \(F_m(x,\beta)\) is nondecreasing in \(\beta\), define
\[
\Phi_m(x)
:=
F_m\bigl(x,\beta_{\min}(S_m(x))\bigr)
=
m+1-2x+(3-m)(1-a)
+a\max\left\{\beta_{\min}(S_m(x)),x-1\right\}.
\]
Every feasible singleton therefore satisfies \(F_m(x,\beta)\ge\Phi_m(x)\).  
The reduction follows the chain
\(
x
\longmapsto
S_m(x)
\longmapsto
\beta_{\min}(S_m(x))
\longmapsto
\Phi_m(x):
\)
the talon weight \(x\) forces a minimum value of \(\sum z_u^2\), 
that requirement forces a minimum total blocker weight, and substituting the latter into \(F_m\) leaves a function of \(x\) alone.  
By Lemma~\ref{lem:singleton-relations}(d), its necessary domain is \(0\le S_m(x)\le2mq^2\).

It remains to minimize \(\Phi_m\) on this domain.  
When \(S_m(x)=0\), \(\beta_{\min}(S_m(x))=0\), and \(\Phi_m\) is piecewise linear with slopes \(-2\) and \(-2+a\), both negative.  
Hence its minimum on this region occurs at \(x=\sqrt m\).

Now suppose \(S_m(x)>0\).  For \(h\in\{0,\ldots,2m-1\}\), on an interval where \(hq^2\le S_m(x)\le(h+1)q^2\), 
Lemma~\ref{lem:l1-l2} gives
\[
\beta_{\min}(S_m(x))
=
hq+
\sqrt{\frac{x^2}{m}-1-hq^2}.
\]
The integer \(h\) changes only when \(S_m(x)=hq^2\), equivalently at
\(
x_{m,h}=\sqrt{m(1+hq^2)}
\)
for \(h=0,\ldots,2m\); the point \(h=2m\) is the final feasible endpoint. 
Within each interval, the square-root term is concave in \(x\).  
The active term in the maximum defining \(\Phi_m\) can change only at a solution of \(\beta_{\min}(S_m(x))=x-1\).  
After splitting at these crossing points, \(\Phi_m\) is either linear or concave, 
so its minimum on each piece is attained at an endpoint.  
Therefore it is enough to check the points \(x_{m,h}\) together with the admissible crossings.  
Appendix~\ref{app:finite-checks} lists all such candidates and verifies the required branch and sign conditions.

For \(m=1\), the minimum occurs at \(x=1\), where
\(
\Phi_1(1)=2(1-a)=0.9134.
\)
For \(m=2\), the minimum occurs at \(S_2(x)=2q^2\), namely at \(x=\sqrt{2(1+2q^2)}=\sqrt{2(1+2a^4)}\), and
\[
\Phi_2(x)
=
4-a-2\sqrt{2(1+2a^4)}+2a^3
>
0.712465347026.
\]
For \(m=3\), the minimum occurs at the admissible crossing on the branch \(3q^2\le S_3(x)\le4q^2\), where
\(
3q+\sqrt{x^2/3-1-3q^2}=x-1.
\)
The smaller root is
\(
x_*=[3(1+3q)-\sqrt{9q^2+18q-3}]/2,
\)
with \(1.9483293410782<x_*<1.9483293410783\).  At this crossing,
\[
\Phi_3(x_*)
=
4-a-(2-a)x_*
>
0.6185686488
>
c.
\]
Consequently \(\Phi_m(x)\ge c\) for every feasible \(x\) and every \(m\in\{1,2,3\}\).  
Combining this with Eq.~\eqref{eq:Fmbound} and multiplying by \(W=w_v\) gives
\[
\frac{\Delta_v+\Psi_v}{w_v}
+a g_v
-Q_v
+3\delta w_v
\ge
c w_v,
\]
which is equivalent to the statement of Lemma~\ref{lem:singleton-certificate}.
\end{proof}

\subsubsection{The augmented structural inequality}
\label{sec:structural}

We now combine the component bounds on the high-girth lift and transfer the resulting inequality back to the original instance.  
The only additional term is the linear claw gain needed for final singleton components.

\begin{theorem}\label{thm:augmented-structural}
For the original Phase~I packing \(A\) and every optimum packing \(O\),
\[
2w(O)
\le
(4-c)w(A)+a\Gcal(A,O)
=
3.3816\,w(A)+0.5433\,\Gcal(A,O).
\]
\end{theorem}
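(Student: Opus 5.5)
The proof works on the high-girth lift \((\widetilde G,\widetilde A,\widetilde O)\) with \(M\) sheets. It establishes there
\[
2w(\widetilde O)\le(4-c)w(\widetilde A)+a\sum_{\widetilde v\in\widetilde J}\widetilde g_{\widetilde v},
\]
where \(\widetilde J\) is the set of final singletons, and then divides by \(M\). The transfer back uses \(w(\widetilde O)=Mw(O)\), \(w(\widetilde A)=Mw(A)\), and the last claim of Lemma~\ref{lem:crossing-gain}, namely \(\sum_{\widetilde J}\widetilde g\le M\,\Gcal(A,O)\). The lifted claw assignment \(\widetilde\pi\) of that lemma is a maximum-weight assignment, so Lemma~\ref{lem:master} applies to the lift with \(\widetilde C_{\widetilde v}\) as claws. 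Bounded square local optimality is preserved by the reduction, so the lemma gives
\[
2w(\widetilde O)\le w(\widetilde A)+\sum_{o}w(N(o,\widetilde A))-\sum_{v}\frac{\Delta_v+\Psi_v}{w_v}.
\]

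Next, partition \(\widetilde A\) into the vertex-disjoint trees of Thiery's Corollary~3.2.8 and lower-bound the slack of each tree by one uniform shape, \(Q(T)+(c-3\delta)w(T)\), with an extra \(-a\sum_{v\in T}g_v\) allowed only for final singletons.

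\begin{itemize}
\item For root and pendant trees in \(D_1,D_2,D_3,P_2,P_3\), import Thiery's Lemmas~3.5.1, 3.5.2 and~3.6.1 in the form used in his Theorem~3.7.1: the slack is at least \(Q(T)\) plus the coefficient of Table~\ref{tab:coefficients} times \(w(T)\), after his subtraction of the \(3\eps\)-type term. Each such coefficient exceeds \(c\).
\item For nonsingleton final trees, use Lemma~\ref{lem:final-nonsingleton}.
\item For final singletons, use Lemma~\ref{lem:singleton-certificate}.
\end{itemize}

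Summing over all trees gives
\[
\sum_v\frac{\Delta_v+\Psi_v}{w_v}+a\sum_{\widetilde J}\widetilde g\ge\sum_T Q(T)+(c-3\delta)w(\widetilde A).
\]

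The key accounting identity, following the end of Thiery's proof of Theorem~3.7.1, is
\[
\sum_o w(N(o,\widetilde A))-\sum_T Q(T)\le(3-3\delta)w(\widetilde A)+\text{(terms absorbed by the nonfinal coefficients)}.
\]
Each \(o\) contributes \(w(N(o,S_{i+1}))\) with weight \(1-\eps_{i-1}\), and each vertex \(u\in A\) is hit by at most three optimum sets, so this residual is at most \(3\) times the weight of the vertices involved, discounted by the layer factors. In the uniform-coefficient form, where Thiery writes every tree's contribution as \((\text{coef}-3\eps_2)w(T)\) so that \(\rho_3(\eps_2)\) appears as the final-root value, the identity gives exactly
\[
2w(\widetilde O)\le w(\widetilde A)+3w(\widetilde A)-c\,w(\widetilde A)+a\sum_{\widetilde J}\widetilde g.
\]
The \(3\delta\) terms cancel. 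Finally, since each coefficient is \(\ge c\) and \(w(T)>0\), every tree bound can be weakened to coefficient \(c\).

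The main obstacle is this bookkeeping step. I must check that Thiery's nonfinal bounds are stated, or can be restated, with the same normalization \(Q(T)+(\text{coef}-3\eps_2)w(T)\) as the final-layer bounds. Only then does the telescoping of the neighborhood term against \(\sum_T Q(T)\) produce exactly \(3w(\widetilde A)\) with the \(3\delta w(T)\) offsets cancelling. My first move is to follow the proof of Thiery's Theorem~3.7.1 line by line: replace his uniform final bound \(\rho_3(\eps_2)\) by \(c\), add the \(a g_v\) term on singleton components only, and verify that nothing else in his aggregation changes.
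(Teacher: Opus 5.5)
Your overall architecture matches the paper's: the master inequality on the lift with the lifted maximum-weight assignment, per-tree lower bounds, and the transfer back via Lemma~\ref{lem:crossing-gain}. The bookkeeping step, however, fails as you state it.

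Weakening every tree bound to the uniform shape $Q(T)+(c-3\delta)w(T)$ is valid, but it loses too much. The layer credit puts weight $\eps_{i-1}$ on $w(N(\widetilde o,\widetilde S_{i+1}))$ and nothing at all on $\widetilde S_1$. So in general you only have
\[
\sum_{\widetilde o\in\widetilde O}w(N(\widetilde o,\widetilde A))-\sum_T Q(T)\le 3w(\widetilde S_1)+\sum_{i=1}^{3}3(1-\eps_{i-1})\,w(\widetilde S_{i+1}),
\]
and this does not imply the bound $(3-3\delta)w(\widetilde A)$ you need. For example, if $\widetilde A=\widetilde S_1$, then $\sum_T Q(T)=0$ while the left side can reach $3w(\widetilde A)$. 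Combined with your uniform lower bound, you actually get
\[
2w(\widetilde O)\le(4-c)w(\widetilde A)+3\delta\,w(\widetilde S_1)+3(\delta-\eps_0)\,w(\widetilde S_2)+3(\delta-\eps_1)\,w(\widetilde S_3)+a\sum_{\widetilde v\in\widetilde J}\widetilde g_{\widetilde v}.
\]
The extra terms are strictly positive, since $\delta=\eps_2>\eps_1>\eps_0>0$. On $\widetilde S_1$ the coefficient becomes $4-c+3\delta\approx4.75$, which is weaker even than Berman's $4$. So your claim that ``the $3\delta$ terms cancel'' holds only on the final class $\widetilde S_4$. Your closing step, weakening every coefficient to $c$ inside the uniform shape, throws away exactly the ``terms absorbed by the nonfinal coefficients'' that you appeal to.

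The repair is to keep Thiery's layer-dependent normalization, which your first bullet already had before you flattened it:
\begin{itemize}
\item trees in $\widetilde S_1$ give at least $Q(T)+c\,w(T)$, with no offset;
\item trees in $\widetilde S_{i+1}$ give at least $Q(T)+(c-3\eps_{i-1})w(T)$ for $i=1,2$;
\item Lemmas~\ref{lem:final-nonsingleton} and~\ref{lem:singleton-certificate} give the case $i=3$, with offset $3\eps_2=3\delta$.
\end{itemize}
Then apply your per-layer observation $\sum_{\widetilde o}w(N(\widetilde o,\widetilde S_j))\le 3w(\widetilde S_j)$ one layer at a time. In layer $\widetilde S_{i+1}$ the credit cancels an $\eps_{i-1}$-fraction of that layer's neighborhood weight. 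The offset $3\eps_{i-1}$ matches this exactly, leaving $3(1-\eps_{i-1})-(c-3\eps_{i-1})=3-c$. In $\widetilde S_1$ you get $3-c$ directly. Every layer therefore has the same residual coefficient, and the total is $(4-c)w(\widetilde A)+a\sum_{\widetilde v\in\widetilde J}\widetilde g_{\widetilde v}$. This is exactly how the paper's proof closes the argument.
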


\begin{proof}
Apply the Arkin-Hassin high-girth reduction with the target girth required by Thiery's layered decomposition.  
Let \((\widetilde G,\widetilde A,\widetilde O)\)
be the resulting \(M\)-sheet covering, equipped with the lifted center assignment of Lemma~\ref{lem:crossing-gain}.  
Let \(\widetilde S_1,\widetilde S_2,\widetilde S_3,\widetilde S_4\)
be the corresponding layer classes, with \(\widetilde S_4=\widetilde I_3\), 
and let \(\widetilde J\subseteq\widetilde S_4\)
be the set of centers whose final component is a singleton.

Define the total layer credit
\(
\widetilde Q
:=
\sum_{\widetilde o\in\widetilde O}
\sum_{i=1}^{3}
\eps_{i-1}\,
w\bigl(N(\widetilde o,\widetilde S_{i+1})\bigr).
\)
The decomposition trees partition \(\widetilde A\), and the claws
\(\{\widetilde C_{\widetilde v}:\widetilde v\in\widetilde A\}\)
partition \(\widetilde O\).  Hence the \(Q(T)\)-terms from all lifted decomposition trees sum exactly to \(\widetilde Q\):
\[
\begin{aligned}
\sum_T Q(T)
&=
\sum_T
\sum_{\widetilde v\in T}
\sum_{\widetilde o\in\widetilde C_{\widetilde v}}
\sum_{i=1}^{3}
\eps_{i-1}\,
w\bigl(N(\widetilde o,\widetilde S_{i+1})\bigr)
=
\sum_{\widetilde o\in\widetilde O}
\sum_{i=1}^{3}
\eps_{i-1}\,
w\bigl(N(\widetilde o,\widetilde S_{i+1})\bigr)
=
\widetilde Q.
\end{aligned}
\]

We next sum the bounds over the decomposition components.  
For the first layer, \(\widetilde S_1=\widetilde D_1\), and the corresponding root coefficient is \(\chi_{-1}^{(7)}\).  
For \(i\in\{1,2\}\), the vertices of \(\widetilde S_{i+1}=\widetilde D_{i+1}\mathbin{\dot\cup}\widetilde P_{i+1}\) 
occur in root and pendant pieces with coefficients \(\chi_{i-1}^{(7)}\) and \(\nu_{i-1}\), respectively.  
Table~\ref{tab:coefficients} verifies that all these coefficients, 
\(\chi_{-1}^{(7)},\ \chi_0^{(7)},\ \chi_1^{(7)},\ \nu_0,\ \nu_1\), are at least \(c\).
Thiery's root and pendant bounds therefore imply a contribution of at least \(c\,w(\widetilde S_1)\) from the first layer and, 
since \(k=3\), at least \(\bigl(c-3\eps_{i-1}\bigr)w(\widetilde S_{i+1})\)
from each layer \(\widetilde S_{i+1}\), \(i\in\{1,2\}\), 
in addition to the corresponding layer-credit terms \citep[Lemmas~3.5.1, 3.5.2, and~3.6.1, and Eq.~(3.11)]{Thiery2023Thesis}.

For the final class \(\widetilde S_4=\widetilde I_3\),
Lemma~\ref{lem:final-nonsingleton} gives the coefficient \(c-3\eps_2=c-3\delta\)for every nonsingleton component.  
Lemma~\ref{lem:singleton-certificate} gives the same coefficient for a singleton \(\{\widetilde v\}\)
after adding \(a\widetilde g_{\widetilde v}\).
Summing over all components therefore gives
\[
\begin{aligned}
&\sum_{\widetilde v\in\widetilde A}
\left(
\frac{\widetilde\Delta_{\widetilde v}}{w_{\widetilde v}}
+
\frac{\widetilde\Psi_{\widetilde v}}{w_{\widetilde v}}
\right)
+
a\sum_{\widetilde v\in\widetilde J}
\widetilde g_{\widetilde v}
\ge
\widetilde Q
+
c\,w(\widetilde S_1)
+
\sum_{i=1}^{3}
\bigl(c-3\eps_{i-1}\bigr)
w(\widetilde S_{i+1}).
\end{aligned}
\]

For \(\widetilde v\in\widetilde S_j\), let \(d_{\widetilde O}(\widetilde v)\)
be the number of \(\widetilde o\in\widetilde O\)
such that \(\widetilde v\in N(\widetilde o,\widetilde A)\).
If \(\widetilde v\notin\widetilde O\), local bijectivity of the covering preserves its incidence degree with the optimum packing, 
which is at most \(3\): a set of cardinality at most \(3\) can intersect at most three pairwise disjoint members of \(O\).  
If \(\widetilde v\in\widetilde A\cap\widetilde O\), 
then no other member of \(\widetilde O\) conflicts with \(\widetilde v\), 
so \(d_{\widetilde O}(\widetilde v)=1\).
Consequently,
\[
\begin{aligned}
\sum_{\widetilde o\in\widetilde O}
w\bigl(N(\widetilde o,\widetilde S_j)\bigr)
&=
\sum_{\widetilde v\in\widetilde S_j}
d_{\widetilde O}(\widetilde v)\,w_{\widetilde v}
\le
3w(\widetilde S_j).
\end{aligned}
\]

We can now apply the master inequality Eq.~\eqref{eq:master}.  
Since \(\widetilde S_1,\ldots,\widetilde S_4\)
partition \(\widetilde A\), substituting the component bound above, 
cancelling the layer credit \(\widetilde Q\), and then applying the neighborhood bound gives
\begin{align*}
2w(\widetilde O)
\le\;&
w(\widetilde A)
+
\sum_{j=1}^{4}
\sum_{\widetilde o\in\widetilde O}
w\bigl(N(\widetilde o,\widetilde S_j)\bigr)
-
\sum_{\widetilde v\in\widetilde A}
\left(
\frac{\widetilde\Delta_{\widetilde v}}{w_{\widetilde v}}
+
\frac{\widetilde\Psi_{\widetilde v}}{w_{\widetilde v}}
\right)
\\
\le\;&
w(\widetilde A)
+
\sum_{j=1}^{4}
\sum_{\widetilde o\in\widetilde O}
w\bigl(N(\widetilde o,\widetilde S_j)\bigr)
% \\
% &\quad
-
\widetilde Q
-
c\,w(\widetilde S_1)
-
\sum_{i=1}^{3}
\bigl(c-3\eps_{i-1}\bigr)
w(\widetilde S_{i+1})
+
a\sum_{\widetilde v\in\widetilde J}
\widetilde g_{\widetilde v}
\\
=\;&
w(\widetilde A)
+
\sum_{\widetilde o\in\widetilde O}
w\bigl(N(\widetilde o,\widetilde S_1)\bigr)
-
c\,w(\widetilde S_1)
% \\
% &\quad
+
\sum_{i=1}^{3}
\left[
\sum_{\widetilde o\in\widetilde O}
w\bigl(N(\widetilde o,\widetilde S_{i+1})\bigr)
-
\eps_{i-1}
\sum_{\widetilde o\in\widetilde O}
w\bigl(N(\widetilde o,\widetilde S_{i+1})\bigr)
\right]
\\
&\quad
-
\sum_{i=1}^{3}
\bigl(c-3\eps_{i-1}\bigr)
w(\widetilde S_{i+1})
+
a\sum_{\widetilde v\in\widetilde J}
\widetilde g_{\widetilde v}
\\
=\;&
w(\widetilde A)
+
\sum_{\widetilde o\in\widetilde O}
w\bigl(N(\widetilde o,\widetilde S_1)\bigr)
-
c\,w(\widetilde S_1)
% \\
% &\quad
+
\sum_{i=1}^{3}
(1-\eps_{i-1})
\sum_{\widetilde o\in\widetilde O}
w\bigl(N(\widetilde o,\widetilde S_{i+1})\bigr)
\\
&\quad
-
\sum_{i=1}^{3}
\bigl(c-3\eps_{i-1}\bigr)
w(\widetilde S_{i+1})
+
a\sum_{\widetilde v\in\widetilde J}
\widetilde g_{\widetilde v}
\\
\le\;&
w(\widetilde A)
+
3w(\widetilde S_1)
-
c\,w(\widetilde S_1)
% \\
% &\quad
+
\sum_{i=1}^{3}
\left[
3(1-\eps_{i-1})
-
\bigl(c-3\eps_{i-1}\bigr)
\right]
w(\widetilde S_{i+1})
+
a\sum_{\widetilde v\in\widetilde J}
\widetilde g_{\widetilde v}
\\
=\;&
w(\widetilde A)
+
(3-c)w(\widetilde S_1)
+
\sum_{i=1}^{3}
(3-c)w(\widetilde S_{i+1})
+
a\sum_{\widetilde v\in\widetilde J}
\widetilde g_{\widetilde v}
\\
=\;&
w(\widetilde A)
+
(3-c)
\sum_{j=1}^{4}
w(\widetilde S_j)
+
a\sum_{\widetilde v\in\widetilde J}
\widetilde g_{\widetilde v}
\\
=\;&
(4-c)w(\widetilde A)
+
a\sum_{\widetilde v\in\widetilde J}
\widetilde g_{\widetilde v}.
\end{align*}

The second inequality substitutes the summed component bound, written as an upper bound on the negative slack term.  
The next equality expands \(\widetilde Q\), 
so that for each \(i\in\{1,2,3\}\) the layer credit cancels 
an \(\eps_{i-1}\)-fraction of the neighborhood contribution of \(\widetilde S_{i+1}\); 
there is no corresponding term for \(\widetilde S_1\).  
The following inequality uses \(\sum_{\widetilde o\in\widetilde O}w(N(\widetilde o,\widetilde S_j))\le3w(\widetilde S_j)\) for every \(j\).  Finally, \(3(1-\eps_{i-1})-(c-3\eps_{i-1})=3-c\), so every layer receives the same residual coefficient \(3-c\).  
Since the four layer classes partition \(\widetilde A\), 
their total weight is \(w(\widetilde A)\), which yields the coefficient \(4-c\).

It remains to transfer this inequality back to the original instance.  
Since the covering has \(M\) sheets and preserves vertex weights,
\(w(\widetilde A)=Mw(A)\) and \(w(\widetilde O)=Mw(O)\).
Moreover, Lemma~\ref{lem:crossing-gain} applied to \(\widetilde J\subseteq\widetilde A\)
gives 
\(
\sum_{\widetilde v\in\widetilde J}
\widetilde g_{\widetilde v}
\le
M\Gcal(A,O).
\)
Therefore
\[
\begin{aligned}
2Mw(O)
=
2w(\widetilde O)
&\le
(4-c)w(\widetilde A)
+
a\sum_{\widetilde v\in\widetilde J}
\widetilde g_{\widetilde v}
\le
(4-c)Mw(A)+aM\Gcal(A,O),
\end{aligned}
\]
and division by \(M\) yields \(2w(O)\le(4-c)w(A)+a\Gcal(A,O).\)
Finally, \(c=0.6184\) and \(a=0.5433\), so \(4-c=3.3816\), proving the stated numerical form.
\end{proof}

% ===============================================
\subsection{Analysis of Phase II}
\label{sec:repair}
% ===============================================

The residual term \(\Gcal(A,O)\) in Theorem~\ref{thm:augmented-structural} depends on the unknown optimum \(O\) 
and is therefore unavailable to the algorithm.  
Its definition, however, involves only claws \(C_v\) with at most three talons.  
Phase~II exploits this fact indirectly: 
it considers every independent \(X\subseteq\Scal\setminus A\) with \(1\le |X|\le3\) and \(\gain_A(X)>0\).  
The positive-gain optimum claws form a feasible, but unknown, 
solution to the resulting auxiliary weighted \(9\)-Set Packing instance.

\begin{lemma}\label{lem:repair-benchmark}
The optimum value of the auxiliary weighted \(9\)-Set Packing instance constructed in Phase~II is at least \(\Gcal(A,O).\)
\end{lemma}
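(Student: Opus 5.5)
The plan is to exhibit an explicit feasible solution of the auxiliary instance whose total weight is at least \(\Gcal(A,O)\). Let \(V^+=\{v\in A: g_v>0\}\). For each \(v\in V^+\), take the candidate repair \(X_v:=C_v\). Since \(g_v=\pos{w(C_v)-w(N(C_v,A))}\) and \(g_v>0\), we have \(\gain_A(C_v)=g_v>0\) and \(C_v\neq\varnothing\). The benchmark solution is \(\{\Gamma(C_v):v\in V^+\}\). Vertices \(v\) with \(g_v=0\) contribute nothing, so it suffices to show three things: this family consists of valid candidates, the candidates are pairwise compatible, and the total weight is \(\sum_{v\in V^+}g_v=\Gcal(A,O)\).

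\textbf{Validity.} First, \(1\le|C_v|\le 3\) holds because the conflict graph is \(4\)-claw free and \(O\) is independent. Second, \(C_v\subseteq O\) is independent. Third, \(C_v\subseteq\Scal\setminus A\): if some \(o\in C_v\) belonged to \(A\cap O\), then, as noted before Lemma~\ref{lem:crossing-gain}, \(C_v=\{v\}\) with \(v=o\) and \(g_v=0\), contradicting \(v\in V^+\). Hence each \(C_v\) with \(v\in V^+\) is enumerated by Phase~II and qualifies, since its gain is positive.

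\textbf{Compatibility and total weight.} The claws partition \(O\), and \(O\) is a packing. So for distinct \(v,v'\in V^+\), every set of \(C_v\) is disjoint from every set of \(C_{v'}\), which gives \(\Gamma(C_v)\cap\Gamma(C_{v'})=\varnothing\). In particular the ground unions are distinct, so no two benchmark repairs collapse to the same auxiliary set. There is one subtlety: when several candidates share a ground union, Phase~II keeps only a maximum-gain representative. The auxiliary set \(\Gamma(C_v)\) therefore carries weight at least \(\gain_A(C_v)=g_v\). Summing over \(V^+\), the family \(\{\Gamma(C_v)\}_{v\in V^+}\) is a feasible auxiliary packing of weight at least \(\Gcal(A,O)\). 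This bounds the auxiliary optimum from below.

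I expect no real obstacle. The only points needing care are the deduplication rule, which can only increase weights, and excluding common vertices \(A\cap O\) from positive claws.
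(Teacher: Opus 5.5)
Your proposal is correct and follows the paper's proof essentially step for step. You take the positive-gain claws \(C_v\) as candidates, using that vertices of \(A\cap O\) are isolated to get \(C_v\subseteq\Scal\setminus A\) whenever \(g_v>0\); you note that deduplication keeps a representative of gain at least \(g_v\); and you use that the claws partition the packing \(O\) to obtain pairwise disjoint ground unions, with your remark that these cannot collapse to the same auxiliary set being a harmless clarification the paper leaves implicit.
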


\begin{proof}
Fix \(v\in A\) with \(g_v>0\).  As observed earlier for common vertices, 
if \(v\in A\cap O\), then \(C_v=\{v\}\) and \(g_v=0\).  Hence \(v\notin O\).

In fact, \(C_v\cap A=\varnothing\).  If \(o\in C_v\cap A\), then \(o\in A\cap O\).  
Its inclusive neighborhood is \(N(o,A)=\{o\}\), so the center assignment necessarily gives \(\pi(o)=o\).  
Since \(o\in C_v\) also means \(\pi(o)=v\), we would have \(o=v\), contradicting \(v\notin O\).  
Thus \(C_v\subseteq\Scal\setminus A\).

The claw \(C_v\) is a subfamily of the packing \(O\), so it is independent, and. \(|C_v|\le3\).  Moreover,
\[
\gain_A(C_v)
=
w(C_v)-w(N(C_v,A))
=
g_v
>
0.
\]
Therefore \(C_v\) is among the Phase~II repair candidates before candidates with identical ground unions are merged.  
Its ground union \(\Gamma(C_v)=\bigcup_{F\in C_v}F\) has cardinality at most \(9\).  
If another candidate has the same ground union and larger declared gain, 
Phase~II retains that candidate instead; consequently the auxiliary weight assigned to \(\Gamma(C_v)\) is at least \(g_v\).

The claws \(\{C_v:v\in A\}\) partition \(O\).  Since \(O\) is a packing, the ground unions
\(\Gamma(C_v)\) corresponding to distinct positive-gain claws are pairwise disjoint.  
They therefore form a feasible auxiliary packing of total weight at least
\(
\sum_{v:g_v>0}g_v
=
\sum_{v\in A}g_v
=
\Gcal(A,O).
\)
Hence the auxiliary optimum is at least \(\Gcal(A,O)\).
\end{proof}

Let \(\mathcal X\) denote the repair candidates represented by the auxiliary packing returned in Phase~II, 
and define their total declared gain by
\(
\widehat\Delta
:=
\sum_{X\in\mathcal X}\gain_A(X).
\)
Recall that
\(
A^+
=
\left(
A\setminus\bigcup_{X\in\mathcal X}N(X,A)
\right)
\cup
\bigcup_{X\in\mathcal X}X,
\)
and let
\(\Delta_{\rm rep}:=w(A^+)-w(A)\)
be the actual improvement in the original objective.

\begin{lemma}\label{lem:gain-conservative}
The actual repair gain satisfies
\(
\Delta_{\rm rep}\ge\widehat\Delta.
\)
Consequently,
\(
\Gcal(A,O)\le5.5\,\Delta_{\rm rep}.
\)
\end{lemma}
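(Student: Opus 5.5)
The plan has two parts. First, show that the simultaneous application of the selected repairs is feasible and loses no weight beyond what the declared gains already account for. Second, chain this with Lemma~\ref{lem:repair-benchmark} and the approximation guarantee of the scaled Berman routine.

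For the first inequality, write \(N^\ast:=\bigcup_{X\in\mathcal X}N(X,A)\) and \(X^\ast:=\bigcup_{X\in\mathcal X}X\). I would first check that \(A^+=(A\setminus N^\ast)\cup X^\ast\) is a packing, using three facts:
\begin{enumerate}[leftmargin=*,label=(\roman*)]
\item Each \(X\in\mathcal X\) is independent.
\item Distinct members of \(\mathcal X\) have disjoint ground unions \(\Gamma(X)\), so sets from different repairs are disjoint.
\item Every set of \(A\) meeting some \(F\in X\) lies in \(N(X,A)\subseteq N^\ast\) and is removed.
\end{enumerate}
Next I would verify that \(X^\ast\cap A=\varnothing\), since every candidate satisfies \(X\subseteq\Scal\setminus A\). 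I would also note that the families in \(\mathcal X\) are pairwise disjoint as families; a common set would lie in both ground unions. Hence \(w(X^\ast)=\sum_{X\in\mathcal X}w(X)\). For the removed part, subadditivity of \(w\) over unions gives \(w(N^\ast)\le\sum_{X\in\mathcal X}w(N(X,A))\), where positivity of weights handles sets removed by several repairs. Together these yield
\[
\Delta_{\rm rep}=w(X^\ast)-w(N^\ast)\ge\sum_{X\in\mathcal X}\bigl(w(X)-w(N(X,A))\bigr)=\widehat\Delta .
\]

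For the consequence, I would apply the scaled Berman guarantee to the auxiliary \(9\)-Set Packing instance. Its conflict graph is \(10\)-claw free, and Berman's \(d/2\) bound with the scaling parameter \(\eta=1/11\) gives factor \(5/(1-\eta)=5.5\). The auxiliary weight of each selected ground union is exactly \(\gain_A(X)\) for its retained representative, so the weight of the returned auxiliary packing equals \(\widehat\Delta\). The guarantee therefore says \(\mathrm{OPT}_{\rm aux}\le5.5\,\widehat\Delta\). Lemma~\ref{lem:repair-benchmark} gives \(\Gcal(A,O)\le\mathrm{OPT}_{\rm aux}\), and the first part gives \(\widehat\Delta\le\Delta_{\rm rep}\). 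Chaining these yields \(\Gcal(A,O)\le5.5\,\Delta_{\rm rep}\).

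The main obstacle is not the arithmetic but making sure the imported guarantee really applies in the form used. Merging candidates with identical ground unions leaves a genuine weighted \(9\)-Set Packing instance with positive rational weights, so Berman's algorithm and the scaling argument of \citep[Section~2.5.3]{Thiery2023Thesis} do apply. One must also check that the returned family is measured by the true gains \(\gain_A(X)\), not by the rounded weights used inside the scaled search. The scaling lemma is stated with respect to the original weights, so I would cite it in that form; if needed, I would note that rounding affects only the local-search stopping rule and not the weight of the returned solution.
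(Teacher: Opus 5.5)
Your proposal is correct and follows the paper's proof essentially step for step. Feasibility of the simultaneous repair comes from the disjoint ground unions, and subadditivity of \(w\) over the overlapping removed neighborhoods gives \(\Delta_{\rm rep}\ge\widehat\Delta\). The chain \(\Gcal(A,O)\le\mathrm{OPT}_{\rm aux}\le 5.5\,\widehat\Delta\le 5.5\,\Delta_{\rm rep}\) then follows via Lemma~\ref{lem:repair-benchmark} and the scaled Berman factor \(5/(1-1/11)\).

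Your extra checks make explicit points the paper leaves implicit: that \(X^\ast\cap A=\varnothing\), that distinct repairs share no set so \(w(X^\ast)=\sum_{X\in\mathcal X}w(X)\), and that the scaled guarantee is measured in the true gains.
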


\begin{proof}
Because the selected auxiliary sets form a packing, the ground unions
\(\Gamma(X)\), \(X\in\mathcal X\), are pairwise disjoint.  Hence all entering sets in
\(\bigcup_{X\in\mathcal X}X\) are mutually disjoint, and the selected repairs can be applied simultaneously.  
By the definition of \(A^+\),
\[
\Delta_{\rm rep}
=
\sum_{X\in\mathcal X}w(X)
-
w\!\left(
\bigcup_{X\in\mathcal X}N(X,A)
\right).
\]
The blocker neighborhoods may overlap.  Since \(w\) is additive on families of sets,
\(
w\!\left(
\bigcup_{X\in\mathcal X}N(X,A)
\right)
\le
\sum_{X\in\mathcal X}w(N(X,A)).
\)
It follows that
\(
\Delta_{\rm rep}
\ge
\sum_{X\in\mathcal X}
\bigl[w(X)-w(N(X,A))\bigr]
=
\widehat\Delta.
\)

It remains to lower-bound \(\widehat\Delta\).  
Every auxiliary set has cardinality at most \(9\), so the conflict graph of the auxiliary instance is \(10\)-claw free.  
Berman's squared-weight local search therefore gives a \(5\)-approximation for an exact local optimum \citep{Berman2000}.  
The polynomial-time implementation fixed in Phase~II uses the standard weight-scaling argument with \(\eta=1/11\).  
Under this scaling, an approximation factor \(\alpha\) becomes \(\alpha/(1-\eta)\) \citep[Section~2.5.3]{Thiery2023Thesis}.  
Thus the auxiliary approximation factor is
\(
\frac{5}{1-1/11}
=
\frac{11}{2}
=
5.5.
\)

Let \(\mathrm{OPT}_{\rm aux}\) denote the optimum auxiliary value.  
The Phase~II guarantee and Lemma~\ref{lem:repair-benchmark} give
\(
\widehat\Delta
\ge
\frac{\mathrm{OPT}_{\rm aux}}{5.5}
\ge
\frac{\Gcal(A,O)}{5.5}.
\)
Combining this with \(\Delta_{\rm rep}\ge\widehat\Delta\) yields
\(
\Gcal(A,O)
\le
5.5\,\Delta_{\rm rep},
\)
as claimed.
\end{proof}

% ===============================================
\subsection{Completion of the proof and polynomial-time implementation}
\label{sec:poly-implementation}
% ===============================================

We first combine the Phase~I structural inequality with the Phase~II repair guarantee.  
We then show that the conceptual two-phase procedure 
admits a polynomial-time implementation for arbitrary positive rational weights 
without changing the approximation factor.

\begin{proof}[Proof for the approximation guarantee in Theorem~\ref{thm:main}]
By Theorem~\ref{thm:augmented-structural} and Lemma~\ref{lem:gain-conservative},
\[
2w(O)
\le
(4-c)w(A)+a\Gcal(A,O)
\le
(4-c)w(A)+5.5a\,\Delta_{\rm rep}.
\]
Since \(5.5a=5.5\cdot0.5433=2.98815<3.3816=4-c\),
\[
2w(O)
\le
(4-c)\bigl(w(A)+\Delta_{\rm rep}\bigr)
=
(4-c)w(A^+)
=
3.3816\,w(A^+).
\]
Thus \(w(O)\le1.6908\,w(A^+)\).

The absorption margin is
\((4-c)-5.5a=3.3816-2.98815=0.39345\).
Hence the auxiliary approximation used in Phase~II is not the limiting numerical constraint; 
the final constant is determined by the Phase~I structural bounds.
\end{proof}

The Phase~I analysis of Algorithm~\ref{alg:two phase} assumes exact square local optimality.  
For arbitrary rational weights, a sequence of strict square improvements need not have polynomial length. 
 We therefore apply the standard scaling and one-round partial-enumeration argument 
 for weighted squared local search \citep[Section~2.5.3]{Thiery2023Thesis}.

Let \(n=|\Scal|\), \(\alpha=1.6908\), and \(\zeta=(\alpha-1)/n\).  
The empty instance is trivial.  For each \(F\in\Scal\), 
let \(\Scal_F=\{H\in\Scal:H\cap F=\varnothing\}\).
If \(\Scal_F\neq\varnothing\), compute its standard greedy packing \(P_0\) in nonincreasing order of the original weights and set
\[
\lambda=\frac{n}{\zeta w(P_0)},
\qquad
\bar w(H)=\lfloor\lambda w(H)\rfloor,
\quad H\in\Scal_F.
\]
Sets with \(\bar w(H)=0\) are discarded, and the two-phase core is run on the remaining family using the integral weights \(\bar w\).

\begin{algorithm}[t]
\caption{Polynomial wrapper for the two-phase core}
\label{alg:poly-wrapper}
\centering
\begin{adjustbox}{max width=\textwidth}
\begin{minipage}{0.98\textwidth}
\begin{algorithmic}
\Statex \textbf{Input:} weighted family \((\Scal,w)\) with positive rational weights
\Statex \textbf{Output:} a packing of \(\Scal\)

\If{\(\Scal=\varnothing\)}
    \State \Return \(\varnothing\).
\EndIf

\State \(n\gets|\Scal|\), \(\alpha\gets1.6908\), \(\zeta\gets(\alpha-1)/n\), and \(\mathcal B\gets\varnothing\).

\For{each \(F\in\Scal\)}
    \State \(\Scal_F\gets\{H\in\Scal:H\cap F=\varnothing\}\).

    \If{\(\Scal_F=\varnothing\)}
        \State Add \(\{F\}\) to \(\mathcal B\).
        \State \textbf{continue}.
    \EndIf

    \State Compute the standard greedy packing \(P_0\) of \(\Scal_F\) in nonincreasing original weight order.
    \State \(\lambda\gets n/(\zeta w(P_0))\).
    \State Set \(\bar w(H)\gets\lfloor\lambda w(H)\rfloor\) for every \(H\in\Scal_F\).
    \State Delete every \(H\in\Scal_F\) with \(\bar w(H)=0\).
    \State Run the two-phase core of Algorithm~\ref{alg:two phase} with weights \(\bar w\), obtaining \(P_F\).
    \State Add \(\{F\}\cup P_F\) to \(\mathcal B\).
\EndFor

\State \Return a member of \(\mathcal B\) of maximum original weight \(w\).
\end{algorithmic}
\end{minipage}
\end{adjustbox}
\end{algorithm}

\begin{proof}[Proof for the polynomial-time implementation in Theorem~\ref{thm:main}]
We first bound the number of Phase~I iterations in one residual run.  
For \(k=3\), \(L=3\), and \(\ell=7\), 
the constant \(s_0\) from Section~\ref{sec:algorithm} is exactly the exchange radius required by Thiery's Theorem~3.7.1.  
Since \(s_0\) is fixed, all independent \(X\subseteq\Scal_F\setminus A\) with \(|X|\le3s_0\) can be enumerated, 
and the condition \(\bar w^2(X)>\bar w^2(N(X,A))\) can be tested, in \(n^{3s_0+O(1)}\) time per iteration.

The scaled weights are polynomially bounded.  
The greedy packing \(P_0\) contains a maximum-weight residual set, 
so \(w(H)\le w(P_0)\) for every \(H\in\Scal_F\).  
Hence every surviving set satisfies \(1\le\bar w(H)\le\lambda w(H)\le n/\zeta\).
Every packing contains at most \(n\) sets, 
and therefore its square potential is at most \(n(n/\zeta)^2=n^3/\zeta^2\).  
Because all \(\bar w(H)\) are integers, 
every square-improving exchange increases this potential by at least \(1\). 
Thus a residual Phase~I run performs at most \(n^3/\zeta^2=n^5/(\alpha-1)^2\) iterations.

We next verify the approximation loss caused by rounding.  
Let \(O_F\) be an optimum packing of \(\Scal_F\) under the original weights.  
Removing zero-scaled sets does not change its scaled value.  
The approximation guarantee already established for the two-phase core, now applied to \(\bar w\), therefore implies
\(\alpha\bar w(P_F)\ge\bar w(O_F)\).
For every family \(Y\subseteq\Scal_F\),
\(\bar w(Y)\le\lambda w(Y)\) and
\(\bar w(Y)\ge\lambda w(Y)-|Y|\).
Consequently,
\(
\alpha w(P_F)
\ge
w(O_F)-\frac{|O_F|}{\lambda}.
\)
Since \(|O_F|\le n\) and \(1/\lambda=\zeta w(P_0)/n\), we have
\(
\frac{|O_F|}{\lambda}
\le
\zeta w(P_0)
\le
\zeta w(O_F),
\)
where the second inequality holds because \(P_0\) is feasible for the residual instance.  Hence
\[
\alpha w(P_F)\ge(1-\zeta)w(O_F).
\]

It remains to remove this factor \(1-\zeta\).  
Let \(\widehat F\) be a maximum-weight member of an optimum packing \(O\), and write \(O'=O\setminus\{\widehat F\}\).  
If \(O'=\varnothing\), then the candidate \(\{\widehat F\}\) is already optimal. 
Otherwise \(O'\subseteq\Scal_{\widehat F}\).  
Let \(O_{\widehat F}\) be an optimum packing of this residual family.  
Then \(w(O_{\widehat F})\ge w(O')\), so the residual guarantee gives \(\alpha w(P_{\widehat F})\ge(1-\zeta)w(O')\).  
Therefore
\[
\alpha\bigl(w(\widehat F)+w(P_{\widehat F})\bigr)
\ge
w(O)+(\alpha-1)w(\widehat F)-\zeta w(O').
\]
Because \(\widehat F\) is a heaviest member of \(O\) and \(|O'|\le n\),
\(w(\widehat F)\ge w(O')/n\).
Together with \(\zeta=(\alpha-1)/n\), this gives
\((\alpha-1)w(\widehat F)-\zeta w(O')\ge0\).
Thus
\[
\alpha\bigl(w(\widehat F)+w(P_{\widehat F})\bigr)\ge w(O).
\]
Since Algorithm~\ref{alg:poly-wrapper} returns the candidate of maximum original weight, 
its output has weight at least \(w(O)/\alpha\).

Finally, Phase~II is polynomial in every residual run.  
There are \(O(n^3)\) independent entering families of cardinality at most three, 
and each produces an auxiliary set of cardinality at most nine.  
Hence the auxiliary weighted \(9\)-Set Packing instance has polynomial size. 
For this fixed set size, the Phase~II routine consisting of Berman's squared-weight local search 
and the fixed scaling parameter \(\eta=1/11\) runs in polynomial time \citep{Berman2000}\citep[Section~2.5.3]{Thiery2023Thesis}.

There are at most \(n\) residual runs.  
Combining the polynomial number of Phase~I iterations, the \(n^{3s_0+O(1)}\) search time per iteration, 
and the polynomial Phase~II computation gives total running time \(n^{O(s_0)}\operatorname{poly}(\lVert I\rVert)\), 
where \(\lVert I\rVert\) denotes the binary input length.  
Since \(s_0\) is an absolute constant, the algorithm is deterministic and polynomial-time.  
Together with the approximation argument above, this proves Theorem~\ref{thm:main}.
\end{proof}

% ===============================================
\section{Conclusion}\label{sec:conclusion}
% ===============================================

We have presented a deterministic polynomial-time \(1.6908\)-approximation 
for maximum weighted \(3\)-Set Packing.  
The algorithm combines two complementary objectives.  
Phase~I uses squared-weight local search together with Thiery's layered decomposition 
to obtain the structural bound of Theorem~\ref{thm:augmented-structural}.  
The remaining deficit is measured by the positive claw gain \(\Gcal(A,O)\) under the original linear objective.  
Phase~II does not need to know the optimum claws: 
it searches over all positive-gain exchanges of size at most three, 
among which those claws form a hidden feasible benchmark.  
The Arkin-Hassin high-girth reduction is used only in the analysis, 
and Lemma~\ref{lem:crossing-gain} ensures that 
the residual gain transfers from the lifted instance back to the original one in the required direction.

As discussed in the introduction, the same \(1.6908\) approximation ratio applies to several objective-preserving special cases, 
including weighted \(3\)-Dimensional Matching, 
Cycle-only Maximum Weight Kidney Exchange with cycles of length at most three, 
Unrestricted Maximum Weight Triangle Packing, 
Maximum Weight \(3\)-Path Packing, 
and Winner Determination with single-minded bids of size at most three.  
In each case, this improves the best previously known approximation guarantee.

The result does not improve the locality gap of squared-weight local search itself.  
For \(k=3\), that locality gap remains \(\sqrt3\) even with unbounded exchange size \citep[Theorem~3.7.7]{Thiery2023Thesis}.  
Instead, the augmented structural inequality shows that 
the loss not covered by the square-potential slack can be charged to original-objective claw gain 
and then recovered by the repair phase.  
In this sense, the two objectives serve different purposes: 
the square potential provides the global decomposition and most of the slack, 
while the linear objective supplies the additional progress needed at the final singleton bottleneck.

The numerical certificates also identify the present bottleneck.  
The smallest relevant coefficients are \(\chi_1^{(7)}>0.618533561989\), \(\nu_1>0.618537374925\), 
and the final-singleton bound \(>0.6185686488\), all only slightly above \(c=0.6184\).  
By comparison, the repair-absorption margin is \((4-c)-5.5a=0.39345\).  
Thus improving only the auxiliary \(9\)-Set Packing routine would not improve the current approximation ratio; 
progress must first strengthen the Phase~I structural bounds or change the interaction between the two phases.

Several directions appear particularly natural.

\begin{itemize}
    \item \textbf{Stronger Phase~I certificates.}
    The present ratio is controlled by the level-one root and pendant coefficients together with the final-singleton certificate.  
    A natural next step is therefore to retain additional slack from the layered decomposition, 
    strengthen the constraints used in the singleton optimization, 
    or identify further structural relations among nearly tight components.  
    More substantial improvements may require modifying the nonoblivious potential itself, 
    but that would go beyond the present layered analysis.

    \item \textbf{Richer residual structures and repairs.}
    The present structural inequality exposes only the claw gain \(\Gcal(A,O)\), 
    and repairs with at most three entering sets already suffice to recover this quantity.  
    Larger repair families therefore cannot improve the current ratio by themselves.  
    They may become useful if a strengthened Phase~I analysis identifies additional residual gain 
    arising from interactions among several nearly tight claws.  
    In that case, the auxiliary repair instances would retain substantial set-system structure 
    that may permit a more effective selection procedure than treating them as arbitrary weighted Set Packing instances.

    \item \textbf{Extension to weighted \(k\)-Set Packing.}
    For fixed \(k>3\), a talon has at most \(k-1\) secondary blocker incidences after its assigned center is fixed, 
    so a claw has at most \(k(k-1)\) such incidences.  
    This suggests higher-rank analogues of the capped moment-to-mass argument used for final singletons.  
    Extending the full method is nevertheless nontrivial: the layered coefficients change with \(k\), 
    while a direct claw repair may occupy up to \(k^2\) ground elements and hence leads to a weaker auxiliary packing guarantee.  
    Determining whether objective switching can still produce a net improvement is a natural test of the framework beyond \(k=3\).

    \item \textbf{Smaller exchange neighborhoods.}
    The exchange radius required by the layered analysis is extremely large, 
    and the resulting polynomial-time algorithm is not intended to be practical.  
    It would be valuable to identify a substantially smaller canonical family of exchanges or a direct procedure 
    for detecting only the witnesses required by the structural proof.  
    Such a reduction would require showing that the restricted neighborhood still certifies every local-optimality condition 
    used in the layered decomposition, 
    but could greatly improve the running time without necessarily changing the underlying approximation mechanism.
\end{itemize}

%===============================================================================
\section*{Acknowledgements}
%===============================================================================

Both authors were supported by the Office of Research at Georgia Southern University.

\section*{Statements and Declarations}

% \textbf{Funding.}
% Both authors were supported by the Office of Research at Georgia Southern University.

% \vspace{3pt}
\noindent\textbf{Competing interests.}
The authors have no relevant financial or non-financial interests to disclose.

\vspace{3pt}
\noindent\textbf{Author contributions.}
Weitian Tong contributed to conceptualization, formal analysis, methodology, project administration, supervision, validation, writing--original draft, and writing--review and editing. 
Yao Xu contributed to formal analysis, methodology, validation, and writing--review and editing.

\vspace{3pt}
\noindent\textbf{Data availability.}
No datasets were generated or analyzed in this study.  
All information needed to support the theoretical results is contained in the manuscript and the cited references.

\vspace{3pt}
\noindent\textbf{Declaration of AI-assisted technologies.}
We used OpenAI ChatGPT to assist with language editing and 
with the refinement of verification code used 
for the certified numerical checks in Appendices~\ref{app:numeric} and~\ref{app:finite-checks}.  
The tool materially affected the presentation of the manuscript and the verification code.  
The authors independently verified the correctness and originality of all mathematical claims, proofs, computations, and references, and take full responsibility for the final contents of the paper.

\clearpage

\appendix

% ===============================================
\section{Exchange Trees and Layered Analysis}
\label{app:thiery-ideas}
% ===============================================

This appendix records the structural results from Chapters~2 and~3 of
Thiery's dissertation that are used in the main analysis \citep{Thiery2023Thesis}.  
We treat the general framework as an imported result rather than reproving it.  
Our purpose here is to state explicitly the hypotheses and specialized consequences used in the present proof:
the tree-slack inequality, the layered credit terms, 
the root and pendant coefficients, 
and the bounded-tree decomposition.  
This makes each invocation in the main proof auditable 
without introducing a second proof of the general framework.
The discussion is stated for weighted \(k\)-Set Packing when this makes the mechanism clearer; 
the main paper specializes these ingredients to \(k=3\).

% -----------------------------------------------
\subsection{Slack and interactions}
% -----------------------------------------------

Let \(A\) be a square locally optimal packing and \(O\) an optimum packing.  
As in the main paper, assign each \(o\in O\) to a maximum-weight member \(\pi(o)\in N(o,A)\), 
and define \(C_v=\{o\in O:\pi(o)=v\}\) for \(v\in A\).  
Since the conflict graph of a \(k\)-Set Packing instance is \((k+1)\)-claw free, \(|C_v|\le k\).  
Let \(N_v^+=\{v\}\cup\bigcup_{o\in C_v}N(o,A\setminus\{v\})\).
The one-claw local-optimality inequality gives \(w^2(C_v)\le w^2(N_v^+)\).

Thiery makes the slack in this inequality explicit by defining
\[
\begin{aligned}
\Delta_v & =w^2(N_v^+)-w^2(C_v) \\
\Psi_v
& =
\sum_{o\in C_v}
\left(
(w_o-w_v)^2
+
w_vw(N(o,A\setminus\{v\}))
-
w^2(N(o,A\setminus\{v\}))
\right).
\end{aligned}
\]
Both quantities are nonnegative.  The first follows from square local optimality.  
For the second, the choice of \(v=\pi(o)\) implies \(w_u\le w_v\) for every \(u\in N(o,A)\), and hence
\(w^2(N(o,A\setminus\{v\}))\le w_vw(N(o,A\setminus\{v\}))\).

Thiery's reformulation \citep[Lemma~2.3.1]{Thiery2023Thesis} of Berman's analysis then gives
\[
2w(O)
\le
w(A)
+
\sum_{o\in O}w(N(o,A))
-
\sum_{v\in A}
\left(
\frac{\Delta_v}{w_v}
+
\frac{\Psi_v}{w_v}
\right).
\]
Dropping the nonnegative slack and using
\(\sum_{o\in O}w(N(o,A))\le k\,w(A)\)
recovers Berman's \((k+1)/2\)-approximation.  
The purpose of the larger-exchange analysis is to show that substantially more of this slack can be retained.

The difficulty is that neighboring claw exchanges interact. 
Thiery records the useful interactions in the directed exchange graph \(H_\eps\) \citep[Definition~2.4.1]{Thiery2023Thesis}.  
An arc \((u,v)\) is present when some talon \(o\in C_v\) also meets \(u\) and \(w_u\ge(1-\eps)w_v\).  
Because \(v\) is a maximum-weight neighbor of \(o\), every such arc satisfies \((1-\eps)w_v\le w_u\le w_v\).  
Thus an arc records both an overlap between two claw neighborhoods and comparability of the corresponding center weights.  
Moreover, the degree of the exchange graph is at most \(k(k-1)\), 
since each of the at most \(k\) talons assigned to a center has at most \(k-1\) additional neighbors in the current packing.

% -----------------------------------------------
\subsection{High girth and trees}
\label{app:thiery-trees}
% -----------------------------------------------

The exchange graph of the original instance need not be a forest.  
Thiery therefore uses the Arkin-Hassin duplication-and-crossing reduction 
to remove bounded cycles from the conflict graph before applying the tree analysis 
\citep{ArkinHassin1998}\citep[Lemma~2.5.7]{Thiery2023Thesis}.  
The duplication-and-crossing operation and the covering interpretation of the resulting lift 
are illustrated in Figure~\ref{fig:arkin-hassin-reduction}, 
provided in Appendix~\ref{app:structural-illustrations}.  
For any fixed girth target, the construction produces a finite lifted instance in which vertex copies retain their weights, 
the bounded squared-weight local optimality required by the analysis is preserved, and the ratio \(w(O)/w(A)\) is unchanged.  
Repeating the construction permits the conflict graph to be assumed to have arbitrarily large fixed girth.  
As in the main paper, this is purely an analysis reduction; the algorithm does not construct the lift.

The relevance of high girth comes from the relation between exchange arcs and the bipartite conflict graph on \(A\cup O\).  
An arc \(u\to v\) is witnessed by a talon \(o\in C_v\) and therefore by the length-two path \(u-o-v\).  
Consequently, sufficiently short cycles or parallel arcs in a bounded exchange structure would 
create bounded cycles in the bipartite conflict graph.  
Thiery's Lemma~2.5.8 formalizes this implication: after the girth reduction, 
the bounded connected exchange structures used by the local analysis are induced trees \citep{Thiery2023Thesis}.  
This is the precise reason tree inequalities may be applied without assuming that the original input has high girth.

For a locally optimal exchange tree \(T\), write \(C_T=\bigcup_{v\in T}C_v\) and \(N_T^+=\bigcup_{v\in T}N_v^+\).  
The joint exchange inserts \(C_T\) and removes \(N_T^+\).  
Lemma~2.4.5 of \citep{Thiery2023Thesis} converts its square-local-optimality slack into the edge inequality
\[
\sum_{v\in T}\frac{\Delta_v}{w_v}
\ge
\sum_{(u,v)\in E(T)}\frac{w_u^2}{w_v}.
\]
Lemma~2.4.6 of \citep{Thiery2023Thesis} then combines these edge terms with the corresponding contributions from \(\Psi_v\), 
converting \(w_u^2/w_v\) into controlled linear weight.  
Finally, Lemma~2.4.7 of \citep{Thiery2023Thesis} distributes the edge weight over the tree.  
In particular, its weaker estimate, valid for every \(\eps\in(0,1)\), 
gives a factor \(\frac{t-1}{t}(1-\eps)\) for a tree of size \(t\).  
The factor \((t-1)/t\) reflects the \(t-1\) edges of a tree and is the basic reason larger exchange trees provide stronger guarantees.

The same construction also keeps all exchanges bounded.  
In the one-scale decomposition, large components are split until every resulting piece has longest path at most \(2\ell-2\).  
Since the exchange graph has degree at most \(k(k-1)<k^2\), Thiery obtains the safe bound
\(
|T|
\le
\sum_{r=0}^{2\ell-2} k^{2r}
\)
for each such tree \citep[Lemmas~2.5.10--2.5.11]{Thiery2023Thesis}.  
In the layered decomposition, Corollary~3.2.8 of \citep{Thiery2023Thesis} gives the corresponding bound
\[
|T|
\le
(k^2\ell)^{L-1}
\sum_{r=0}^{2\ell-2}(k\ell)^{2r},
\]
which still depends only on the fixed parameters \(k\), \(\ell\), and \(L\), and not on the input size.

% -----------------------------------------------
\subsection{Layered hierarchy}
% -----------------------------------------------

Chapter~2 of \citep{Thiery2023Thesis} first applies this tree mechanism at a single weight scale.  
Large connected components contain sufficiently many comparable-weight interactions to be partitioned into 
locally optimal exchange trees of size at least \(\ell\).  The resulting tree contribution is controlled by
\(\chi_\eps^{(\ell)}=\frac{\ell-1}{\ell}(1-\eps)\).
As \(\ell\) grows, this coefficient approaches \(1-\eps\).

The complementary case consists of small components that remain isolated from other comparable-weight interactions.  
Their isolation forces stronger separation between the weights of their centers and external neighbors, 
producing additional \(\Psi\)-slack.  
Writing \(1-\eps=(1-\delta)^2\), Thiery captures this contribution through
\(\rho_{k,\delta}=k\delta-\delta/(1-\delta)\).
The one-scale guarantee is determined by the smaller of the tree and isolation coefficients.  
For \(k=3\), optimizing this tradeoff gives the \(1.761+\eps\) approximation for arbitrarily small fixed \(\eps>0\) 
developed in Chapter~2 \citep[Theorems~2.5.5--2.5.6]{Thiery2023Thesis}.

Chapter~3 of \citep{Thiery2023Thesis} refines this dichotomy by observing that isolation depends on the weight scale. 
The potential exchange arcs are partitioned into layers \(H_1,\ldots,H_{L+1}\) according to their endpoint weight ratios.  
Let \(H_{\le i}=H_1\cup\cdots\cup H_i\), with thresholds satisfying \(1-\eps_i=(1-\eps_{i-1})^2\). 
A component that remains unresolved at one layer is reconsidered after the weight threshold is relaxed.  
It either acquires new controlled interactions or remains separated for another layer,
in which case stronger weight-ratio information becomes available.

The recursive construction is implemented through contraction.  
At each layer, previously constructed trees and unresolved components are contracted, 
the next class of exchange arcs is added, 
and the new connected structures are classified according to whether they become large, 
attach to an earlier tree, or remain unresolved.  
Corollary~3.2.8 of \citep{Thiery2023Thesis} decomposes \(A\) into vertex-disjoint trees, 
and Definition~3.2.9 of \citep{Thiery2023Thesis} distinguishes within each such tree 
a \emph{root tree} and its attached \emph{pendant trees} \citep{Thiery2023Thesis}.  
Every pendant tree is joined to the previously constructed part by a distinguished \emph{bridge arc}.  
The layer in which a piece becomes attached records how long it remained isolated 
and therefore controls the weight ratios of its remaining external neighbors through Thiery's separation lemmas.

\begin{figure}[htbp]
\centering
\begin{adjustbox}{max width=.4\textwidth}
\begin{tikzpicture}[
    rootv/.style={circle,draw,thick,fill=blue!10,minimum size=7mm},
    p1/.style={circle,draw,thick,fill=orange!14,minimum size=7mm},
    p2/.style={circle,draw,thick,fill=green!12,minimum size=7mm},
    bridge/.style={very thick,dashed},
    normal/.style={thick},
    lab/.style={font=\small,align=center}
]
    \node[rootv] (r1) at (-1.2,0) {};
    \node[rootv] (r2) at (0,0.65) {};
    \node[rootv] (r3) at (1.2,0) {};
    \draw[normal] (r1)--(r2)--(r3);
    \node[lab] at (-0.4,-0.85)
        {root tree \(T_j\)\\resolved at an earlier layer};

    \node[p1] (a1) at (3.0,1.8) {};
    \node[p1] (a2) at (4.0,1.15) {};
    \draw[normal] (a1)--(a2);
    \draw[bridge]
        (r3)--node[above,sloped,pos=.58,font=\scriptsize,fill=white,inner sep=1pt]
        {bridge in \(H_p\)} (a1);
    \node[lab] at (4.1,2.25) {pendant \(T_{p,1}\)};

    \node[p2] (b1) at (3.2,-1.7) {};
    \node[p2] (b2) at (4.3,-1.3) {};
    \node[p2] (b3) at (4.8,-2.2) {};
    \draw[normal] (b1)--(b2)--(b3);
    \draw[bridge]
        (r3)--node[below,sloped,pos=.58,font=\scriptsize,fill=white,inner sep=1pt]
        {later bridge in \(H_q\)} (b1);
    \node[lab] at (4.5,-2.75)
        {pendant \(T_{q,1}\), \(q>p\)};

\end{tikzpicture}
\end{adjustbox}
\caption{Schematic root and pendant structure in Thiery's layered decomposition.  
Earlier trees are contracted rather than rebuilt as the layers are processed, 
and each pendant tree records the layer in which it first becomes attached through its bridge arc.  
This attachment history determines the weight-separation information available for its remaining external neighbors.  
After the Arkin-Hassin high-girth reduction, the complete induced exchange structure is a tree, 
while the root, pendant, and bridge labels retain the history needed by the layered analysis.}
\label{fig:hierarchical tree}
\end{figure}

Figure~\ref{fig:hierarchical tree} illustrates why the layered decomposition contains more information than 
a decomposition formed only from the final union \(H_{\le L}\).  
If a component remains unresolved through a given layer, 
the absence of earlier exchange arcs certifies separation from the relevant external neighbors.  
Lemmas~3.2.10--3.2.11 of \citep{Thiery2023Thesis} translate this history into explicit weight-ratio bounds.  
The resulting tree therefore records both its controlled internal interactions 
and the separation information accumulated before its pieces became attached.

% -----------------------------------------------
\subsection{Global coefficients}
% -----------------------------------------------

The layered analysis combines internal and external slack.  
Internal root and pendant edges are charged through the joint exchange inequalities, 
whereas external neighbors are controlled by the weight separation inherited from the layer structure.  
The remaining exterior terms are collected into layer credits.  
These credits are bookkeeping terms rather than an additional source of approximation: 
after summing the component inequalities, 
they cancel the corresponding portions of the positive neighborhood term in Berman's master inequality.

For an ordinary root tree of size \(t\) at level \(j\), Thiery obtains
\[
\chi_{j-1}^{(t)}
=
\rho_{k,\eps_{j-1}}
+
\frac{t-1}{t}(1-\eps_{j-1})^5,
\]
whereas the first layer has
\(\chi_{-1}^{(t)}=\frac{t-1}{t}(1-\eps_0)^2\).
For a pendant tree at level \(j\), the coefficient is
\[
\nu_{j-1}
=
\rho_{k,\eps_{j-1}}
+
(1-\eps_{j-1})^5
\left[
1-(1-\eps_{j-1})^2\eps_{j-1}
\right].
\]
A component that remains unresolved through the final layer is controlled in the uniform analysis by \(\rho_{k,\eps_{L-1}}\); 
see Lemmas~3.5.1, 3.5.2, and~3.6.1 of \citep{Thiery2023Thesis}.

After the layer credits are summed and canceled, Theorem~3.7.1 of \citep{Thiery2023Thesis} gives
\[
w(O)
\le
\frac12
\left[
k+1
-
\min
\left\{
\{\chi_{i-1}^{(\ell)}\}_{i=0}^{L-1},
\{\nu_{i-1}\}_{i=1}^{L-1},
\rho_{k,\eps_{L-1}}
\right\}
\right]
w(A).
\]
The theorem assumes local optimality under \(s\)-exchanges for
\[
s
\ge
(k^2\ell)^L
\sum_{r=0}^{2\ell-2}(k\ell)^{2r}.
\]
For the parameters \(k=3\), \(L=3\), and \(\ell=7\) used in the main paper, 
this is exactly the constant \(s_0\) fixed in Section~\ref{sec:algorithm}.

For \(k=3\), Thiery chooses the layer parameters so that the final uniform coefficient is
\(\rho_{3,1-1/\sqrt3}=4-2\sqrt3\)
and all root and pendant coefficients are larger.  
Substitution into Theorem~3.7.1 gives the \(\sqrt3\)-approximation \citep[Theorem~3.7.7]{Thiery2023Thesis}.  
The tight final structure is an isolated singleton center with three talons, 
corresponding to the square-locality-gap configuration shown in Figure~\ref{fig:sqrt3-barrier}.

The present paper keeps this layered decomposition but strengthens the analysis precisely 
where the uniform \(\sqrt3\) bound loses information.  
For nonsingleton final components, we retain internal-edge slack that is discarded by the uniform final coefficient.  
For a final singleton, we exploit the set-system structure specific to \(3\)-Set Packing: 
each talon contains at most three ground elements and, 
because the sets in the current packing \(A\) are pairwise disjoint, 
it can intersect at most three members of \(A\).  
Once its assigned center is fixed, each talon therefore has at most two additional blocker incidences.  
Combining this restriction with square local optimality yields the stronger finite certificate of Lemma~\ref{lem:singleton-certificate}.  
The residual positive claw gain is then recovered under the original weight objective by Phase~II.

% ===============================================
\section{Certified Arithmetic}
\label{app:numeric}
% ===============================================

This appendix records the numerical comparisons used in the structural analysis.  
All constants entering the proof are rational, 
and the displayed radical bounds can be certified by exact rational squaring.  
The decimal values below are therefore used only for readability 
and do not introduce any dependence on floating-point arithmetic.

Recall that
\(a=\frac{5433}{10000}\),
\(c=\frac{773}{1250}\), and 
\(q=a^2=\frac{29517489}{10^8}=0.29517489\).
The radicals appearing in the nonfinal coefficients satisfy
\[
0.7370888684548<\sqrt a<0.7370888684549,
\qquad
0.8585387984563<a^{1/4}<0.8585387984564.
\]
The first interval is verified by squaring its endpoints as exact rationals and comparing with \(a\); 
the second is verified in the same way by taking fourth powers.  
Substituting these certified intervals into the coefficient formulas of Table~\ref{tab:coefficients} gives, in particular,
\[
\chi_1^{(7)}>0.618533561989,
\qquad
\nu_1>0.618537374925.
\]
Hence the two smallest imported nonfinal coefficients both exceed \(c=0.6184\), 
with the smaller margin greater than \(1.33\times10^{-4}\).

For the active final-singleton calculation, let
\[
D=9q^2+18q-3
=\frac{30973019611786089}{10^{16}}
=3.0973019611786089.
\]
Exact squaring gives
\(
1.7599153278435<\sqrt D<1.7599153278436.
\)
Therefore the smaller root
\(
x_*=
\frac{3(1+3q)-\sqrt D}{2}
\)
satisfies
\(
1.9483293410782<x_*<1.9483293410783.
\)
Since \(4-a-(2-a)x\) is decreasing in \(x\), substitution of these certified endpoints gives
\[
0.6185686488
<
4-a-(2-a)x_*
<
0.6185686490.
\]
Thus the active final-singleton value exceeds \(c\) by more than \(1.68\times10^{-4}\).

The absorption comparison used when the two phases are combined is exact.  
Since the auxiliary approximation factor is \(11/2\),
\(
\frac{11}{2}a
=
\frac{59763}{20000}
=
2.98815\) and 
\(
4-c
=
\frac{67632}{20000}
=
3.3816.
\)
Consequently,
\[
(4-c)-\frac{11}{2}a
=
\frac{7869}{20000}
=
0.39345.
\]
The repair step therefore has substantially more numerical slack than the Phase~I structural inequalities.  
The next section gives the complete finite verification of the final-singleton program.

% ===============================================
\section{Singleton Finite Check}
\label{app:finite-checks}
% ===============================================

We verify the finite optimization used in Lemma~\ref{lem:singleton-certificate}.  
Recall that \(m=|C_v|\in\{1,2,3\}\), \(x=\sum_{o\in C_v}w_o/w_v\), and
\(
S_m(x)=\left[\frac{x^2}{m}-1\right]_+.
\)
With \(q=a^2\), the minimum possible normalized blocker mass compatible with squared blocker mass \(S\) is
\(
\beta_{\min}(S)=hq+\sqrt{S-hq^2}
\)
whenever \(hq^2\le S\le(h+1)q^2\).  By Lemma~\ref{lem:singleton-relations} (c),
there are at most \(2m\) secondary blocker coordinates, each bounded above by \(q\); 
hence feasibility requires \(0\le S_m(x)\le2mq^2\).

The reduced objective is
\[
\Phi_m(x)
=
m+1-2x+(3-m)(1-a)
+a\max\{\beta_{\min}(S_m(x)),x-1\}.
\]
No independent upper bound such as \(x\le m\) is used: 
the assignment of a talon to a maximum-weight member of its neighborhood compares weights only within the current packing \(A\), 
and does not imply \(w_o\le w_v\).  
The feasible range of \(x\) is instead determined by the square-locality constraint \(S_m(x)\le2mq^2\).

When \(S_m(x)=0\), we have \(\beta_{\min}(S_m(x))=0\).  
On this region, \(\Phi_m\) is piecewise linear because of \(\max\{0,x-1\}\).  
Its slopes are \(-2\) and \(-2+a\), both negative, so its minimum is attained at the right endpoint \(x=\sqrt m\).

For \(S_m(x)>0\), define the square-mass breakpoints
\(
x_{m,h}=\sqrt{m(1+hq^2)}\),
\(
h=0,1,\ldots,2m.
\)
On the branch \(hq^2\le S_m(x)\le(h+1)q^2\),
\[
\beta_{\min}(S_m(x))
=
hq+\sqrt{\frac{x^2}{m}-1-hq^2}.
\]
The square-root term is concave in \(x\) on its domain.  
After splitting a branch at every crossing
\(\beta_{\min}(S_m(x))=x-1\),
each resulting piece of \(\Phi_m\) is therefore either linear or concave, 
and its minimum is attained at an endpoint.

For completeness, a crossing on branch \(h\) must satisfy
\[
hq+\sqrt{\frac{x^2}{m}-1-hq^2}=x-1.
\]
After isolating the square root, the necessary sign condition is \(x-1-hq\ge0\).  Squaring then gives
\[
\left(1-\frac1m\right)x^2
-2(1+hq)x
+(1+hq)^2+1+hq^2
=0.
\]
A root is admissible only if it also satisfies
\(hq^2\le S_m(x)\le(h+1)q^2\).
Checking these conditions leaves exactly two interior crossings: one for \(m=2,h=1\) and one for \(m=3,h=3\).  
For \(m=1\), the only admissible crossing is the boundary point \(x=1\).

The resulting complete candidate list is shown in Table~\ref{tab:singleton-finite-check}.  
The displayed values are rounded for readability; the certified comparisons use higher precision.

\begin{table}[htbp]
\centering
\caption{Candidate points in the finite verification of the final-singleton program.  
Bold entries are the minimum values for the corresponding number \(m\) of talons.}
\label{tab:singleton-finite-check}
\small
\begin{adjustbox}{max width=.5\textwidth}
\begin{tabular}{@{}clrr@{}}
\toprule
\(m\) & candidate & \(x\) & \(\Phi_m(x)\) \\
\midrule
1 & \(S_1=0\) endpoint & \(1.0000000000\) & \(\mathbf{0.9134000000}\) \\
1 & \(h=1\) breakpoint & \(1.0426544086\) & \(0.9884597006\) \\
1 & \(h=2\) endpoint & \(1.0836311325\) & \(1.0668747704\) \\
\midrule
2 & \(S_2=0\) endpoint & \(1.4142135624\) & \(0.8533151037\) \\
2 & \(h=1\) breakpoint & \(1.4745360055\) & \(0.7654434009\) \\
2 & crossing on \(h=1\) & \(1.5037497800\) & \(0.7228876955\) \\
2 & \(h=2\) breakpoint & \(1.5324858442\) & \(\mathbf{0.7124653470}\) \\
2 & \(h=3\) breakpoint & \(1.5883227928\) & \(0.7611599676\) \\
2 & \(h=4\) endpoint & \(1.6422623802\) & \(0.8136493105\) \\
\midrule
3 & \(S_3=0\) endpoint & \(1.7320508076\) & \(0.9336215886\) \\
3 & \(h=1\) breakpoint & \(1.8059304104\) & \(0.8260011712\) \\
3 & \(h=2\) breakpoint & \(1.8769041782\) & \(0.7226136836\) \\
3 & \(h=3\) breakpoint & \(1.9452901946\) & \(0.6229957735\) \\
3 & crossing on \(h=3\) & \(1.9483293411\) & \(\mathbf{0.6185686489}\) \\
3 & \(h=4\) breakpoint & \(2.0113524277\) & \(0.6187692156\) \\
3 & \(h=5\) breakpoint & \(2.0753128042\) & \(0.6512169803\) \\
3 & \(h=6\) endpoint & \(2.1373600264\) & \(0.6874910537\) \\
\bottomrule
\end{tabular}
\end{adjustbox}
\end{table}

For \(m=1\), the minimum is attained at \(x=1\) and equals \(2(1-a)=0.9134\).  
The two larger candidate points are feasible square-mass breakpoints and are included in the table; 
neither is competitive with the value at \(x=1\).

For \(m=2\), the minimum occurs at \(S_2(x)=2q^2\), namely
\(x=\sqrt{2(1+2q^2)}=\sqrt{2(1+2a^4)}\), and
\[
\Phi_2(x)
=
4-a-2\sqrt{2(1+2a^4)}+2a^3
>
0.712465347026.
\]

For \(m=3\), the minimum occurs at the interior crossing on the \(h=3\) branch.  There
\[
3q+\sqrt{\frac{x^2}{3}-1-3q^2}=x-1,
\]
and squaring yields
\(
x^2-3(1+3q)x+3+9q+18q^2=0.
\)
The smaller root is
\[
x_*=
\frac{3(1+3q)-\sqrt{9q^2+18q-3}}{2}.
\]
The certified interval above gives \(1.9483293410782<x_*<1.9483293410783\).  Moreover,
\(
3<
\frac{x_*^2/3-1}{q^2}
<
4
\)
and \(x_*-1-3q>0\), so the root lies strictly inside the \(h=3\) square-mass branch and satisfies the unsquared sign condition.  
At this crossing,
\[
\Phi_3(x_*)
=
4-a-(2-a)x_*
>
0.6185686488
>
c.
\]

Thus the three minima satisfy
\[
\min\Phi_1=0.9134,\qquad
\min\Phi_2>0.712465347026,\qquad
\min\Phi_3>0.6185686488>c=0.6184.
\]
This completes the certified finite verification required by Lemma~\ref{lem:singleton-certificate}.

% ===============================================
\section{Structural Illustrations}
\label{app:structural-illustrations}
% ===============================================

This appendix provides schematic illustrations of two structural constructions used in the analysis: 
the Arkin-Hassin high-girth reduction and Thiery's recursive exchange layers.  
Figure~\ref{fig:arkin-hassin-reduction} illustrates the Arkin-Hassin duplication-and-crossing reduction discussed in Section~\ref{sec:prelim}.  
Figure~\ref{fig:thiery-exchange-layers} illustrates the recursive construction of Thiery's exchange-layer classes described in Section~\ref{subsec:thiery-template}.

\begin{figure}[htbp]
\centering
\begin{adjustbox}{max width=.7\textwidth}
\begin{tabular}{cc}

%====================================================
% Subfigure (a)
%====================================================
\begin{subfigure}[t]{0.4\textwidth}
\centering
\begin{tikzpicture}[
    font=\small,
    >=Latex,
    Avertex/.style={circle, draw, fill=blue!12, minimum size=7mm, inner sep=0pt},
    Overtex/.style={circle, draw, fill=green!12, minimum size=7mm, inner sep=0pt},
    edge/.style={thick},
    chosen/.style={very thick, red}
]
    \node[Avertex] (a1) at (0,1.4) {\(a_1\)};
    \node[Avertex] (a2) at (0,0)   {\(a_2\)};
    \node[Overtex] (o1) at (3.8,1.4) {\(o_1\)};
    \node[Overtex] (o2) at (3.8,0)   {\(o_2\)};

    \draw[chosen] (a1) -- node[above] {\(e\)} (o1);
    \draw[edge] (a1) -- (o2);
    \draw[edge] (a2) -- (o1);
    \draw[edge] (a2) -- (o2);

    \node at (0,-0.65) {\(A\)};
    \node at (3.8,-0.65) {\(O\)};
\end{tikzpicture}
\caption{Original conflict graph \(G[A\cup O]\).}\label{fig:ah-a}
\end{subfigure}
&
%====================================================
% Subfigure (b)
%====================================================
\begin{subfigure}[t]{0.4\textwidth}
\centering
\begin{tikzpicture}[
    font=\small,
    >=Latex,
    Avertex/.style={circle, draw, fill=blue!12, minimum size=7mm, inner sep=0pt},
    Overtex/.style={circle, draw, fill=green!12, minimum size=7mm, inner sep=0pt},
    edge/.style={thick},
    chosen/.style={very thick, red}
]
    % left copy
    \node[Avertex] (a1) at (0,1.4) {\(a_1\)};
    \node[Avertex] (a2) at (0,0)   {\(a_2\)};
    \node[Overtex] (o1) at (2.5,1.4) {\(o_1\)};
    \node[Overtex] (o2) at (2.5,0)   {\(o_2\)};

    \draw[chosen] (a1) -- node[above] {\(e\)} (o1);
    \draw[edge] (a1) -- (o2);
    \draw[edge] (a2) -- (o1);
    \draw[edge] (a2) -- (o2);

    % right copy
    \node[Avertex] (a1p) at (4.8,1.4) {\(a_1'\)};
    \node[Avertex] (a2p) at (4.8,0)   {\(a_2'\)};
    \node[Overtex] (o1p) at (7.3,1.4) {\(o_1'\)};
    \node[Overtex] (o2p) at (7.3,0)   {\(o_2'\)};

    \draw[chosen] (a1p) -- node[above] {\(e'\)} (o1p);
    \draw[edge] (a1p) -- (o2p);
    \draw[edge] (a2p) -- (o1p);
    \draw[edge] (a2p) -- (o2p);

    \node at (1.25,-0.65) {\(G\)};
    \node at (6.05,-0.65) {\(G'\)};
\end{tikzpicture}
\caption{Duplication step.}\label{fig:ah-b}
\end{subfigure}
\\[1.2em]

%====================================================
% Subfigure (c)
%====================================================
\begin{subfigure}[t]{0.4\textwidth}
\centering
\begin{tikzpicture}[
    font=\small,
    >=Latex,
    Avertex/.style={circle, draw, fill=blue!12, minimum size=7mm, inner sep=0pt},
    Overtex/.style={circle, draw, fill=green!12, minimum size=7mm, inner sep=0pt},
    edge/.style={thick},
    chosen/.style={very thick, red}
]
    % left copy
    \node[Avertex] (a1) at (0,1.4) {\(a_1\)};
    \node[Avertex] (a2) at (0,0)   {\(a_2\)};
    \node[Overtex] (o1) at (2.5,1.4) {\(o_1\)};
    \node[Overtex] (o2) at (2.5,0)   {\(o_2\)};

    % right copy
    \node[Avertex] (a1p) at (4.8,1.4) {\(a_1'\)};
    \node[Avertex] (a2p) at (4.8,0)   {\(a_2'\)};
    \node[Overtex] (o1p) at (7.3,1.4) {\(o_1'\)};
    \node[Overtex] (o2p) at (7.3,0)   {\(o_2'\)};

    % unchanged edges
    \draw[edge] (a1) -- (o2);
    \draw[edge] (a2) -- (o1);
    \draw[edge] (a2) -- (o2);

    \draw[edge] (a1p) -- (o2p);
    \draw[edge] (a2p) -- (o1p);
    \draw[edge] (a2p) -- (o2p);

    % crossed edges: clearly two separate edges
    \draw[chosen] (a1) to[bend left=18] node[above] {\((a_1,o_1')\)} (o1p);
    \draw[chosen] (a1p) to[bend right=18] node[below] {\((a_1',o_1)\)} (o1);

    \node at (1.25,-0.65) {\(G\)};
    \node at (6.05,-0.65) {\(G'\)};
\end{tikzpicture}
\caption{Crossing step.}\label{fig:ah-c}
\end{subfigure}
&
%====================================================
% Subfigure (d)
%====================================================
\begin{subfigure}[t]{0.4\textwidth}
\centering
\begin{tikzpicture}[
    font=\small,
    >=Latex,
    Avertex/.style={circle, draw, fill=blue!12, minimum size=7mm, inner sep=0pt},
    Overtex/.style={circle, draw, fill=green!12, minimum size=7mm, inner sep=0pt},
    edge/.style={thick},
    proj/.style={->, dashed, gray!80}
]
    % lift
    \node[Avertex] (tx1) at (1.2,2.2) {\(\tilde x_1\)};
    \node[Avertex] (tx2) at (6.0,2.2) {\(\tilde x_2\)};

    \node[Overtex] (ty1) at (0,3.5) {\(\tilde y_1\)};
    \node[Overtex] (tz1) at (2.4,3.5) {\(\tilde z_1\)};
    \node[Overtex] (ty2) at (4.8,3.5) {\(\tilde y_2\)};
    \node[Overtex] (tz2) at (7.2,3.5) {\(\tilde z_2\)};

    \draw[edge] (tx1) -- (ty1);
    \draw[edge] (tx1) -- (tz1);
    \draw[edge] (tx2) -- (ty2);
    \draw[edge] (tx2) -- (tz2);

    % original
    \node[Avertex] (x) at (3.6,0.4) {\(x\)};
    \node[Overtex] (y) at (2.0,-0.9) {\(y\)};
    \node[Overtex] (z) at (5.2,-0.9) {\(z\)};

    \draw[edge] (x) -- (y);
    \draw[edge] (x) -- (z);

    % projections
    \draw[proj] (tx1) -- (x);
    \draw[proj] (tx2) -- (x);
    \draw[proj] (ty1) -- (y);
    \draw[proj] (ty2) -- (y);
    \draw[proj] (tz1) -- (z);
    \draw[proj] (tz2) -- (z);

    \node at (3.6,1.5) {\(\varpi\)};
\end{tikzpicture}
\caption{Lift and covering projection.}\label{fig:ah-d}
\end{subfigure}

\end{tabular}
\end{adjustbox}
\caption{The Arkin-Hassin duplication-and-crossing reduction. 
In Subfigure~\ref{fig:ah-a}, the conflict graph \(G[A\cup O]\) contains a shortest cycle, and an edge \(e\) on such a cycle is selected. 
In Subfigure~\ref{fig:ah-b}, the graph is duplicated, producing a copy \(G'\) and the corresponding edge \(e'\). 
In Subfigure~\ref{fig:ah-c}, the edges \(e=(a_1,o_1)\) and \(e'=(a_1',o_1')\) are removed and replaced by the crossed edges \((a_1,o_1')\) and \((a_1',o_1)\). 
Applying this operation systematically to all copies of every current shortest cycle eliminates those cycles without introducing shorter ones; iterating the process yields a lift of arbitrarily large prescribed girth while preserving the local-optimality conditions used in the analysis and the ratio \(w(O)/w(A)\). 
Subfigure~\ref{fig:ah-d} illustrates the covering viewpoint for one duplication step, where \(M=2\): each original vertex has two copies, and each copy has exactly one neighbor above each neighbor of the original vertex.  After \(r=1\) duplication steps, the same property holds with \(M=2^r\) copies of each original vertex.
}
\label{fig:arkin-hassin-reduction}
\end{figure}

\begin{figure}[t]
\centering

%====================================================
% Common styles
%====================================================
\tikzset{
  exchvertex/.style={
    circle,
    draw,
    thick,
    fill=white,
    minimum size=5.8mm,
    inner sep=0pt,
    font=\scriptsize
  },
  hOne/.style={
    thick,
    draw=black
  },
  hTwo/.style={
    thick,
    dashed,
    draw=blue!65!black
  },
  hThree/.style={
    thick,
    draw=orange!85!black,
    dash pattern=on 4pt off 1.4pt on 1pt off 1.4pt
  },
  resolvedbox/.style={
    draw=gray!55,
    rounded corners,
    inner sep=4pt
  },
  currentbox/.style={
    draw=blue!65!black,
    rounded corners,
    inner sep=4pt
  },
  pendantbox/.style={
    draw=orange!85!black,
    rounded corners,
    inner sep=4pt
  },
  unresolvedbox/.style={
    draw=gray!70,
    dashed,
    rounded corners,
    inner sep=5pt
  },
  classlabel/.style={
    font=\scriptsize,
    fill=white,
    inner sep=1pt
  }
}

%====================================================
% Subfigure (a): Layer 1
%====================================================
\begin{subfigure}[t]{\textwidth}
\centering
\begin{adjustbox}{max width=.9\textwidth}
\begin{tikzpicture}[x=1cm,y=1cm]

% Fixed vertex set A
\foreach \i/\x/\y in {
  1/0/0,
  2/1.25/0.65,
  3/2.50/0,
  4/4.35/0,
  5/5.60/0,
  6/7.45/0,
  7/8.70/0,
  8/9.95/0,
  9/11.80/0,
  10/13.05/0,
  11/14.90/0,
  12/16.15/0,
  13/17.40/0,
  14/19.25/0
}{
  \node[exchvertex] (v\i) at (\x,\y) {$v_{\i}$};
}

% H_1 edges
\draw[hOne] (v1)--(v2)--(v3);
\draw[hOne] (v4)--(v5);
\draw[hOne] (v6)--(v7);
\draw[hOne] (v9)--(v10);
\draw[hOne] (v11)--(v12);

% Classes
\node[currentbox,fit=(v1)(v2)(v3)] (D1) {};
\node[classlabel,anchor=south] at (D1.north) {$D_1$};

\node[
  unresolvedbox,
  fit=(v4)(v5)(v6)(v7)(v8)(v9)(v10)(v11)(v12)(v13)(v14)
] (I1) {};
\node[classlabel,anchor=south] at (I1.north) {$I_1$};

% Edge legend
\draw[hOne] (8.9,-1.00)--(9.8,-1.00);
\node[font=\scriptsize,anchor=west] at (9.95,-1.00) {$H_1$};

\end{tikzpicture}
\end{adjustbox}
\caption{Layer 1.}
\label{fig:thiery-layers-a}
\end{subfigure}

\par\medskip

%====================================================
% Subfigure (b): Layer 2
%====================================================
\begin{subfigure}[t]{\textwidth}
\centering
\begin{adjustbox}{max width=.9\textwidth}
\begin{tikzpicture}[x=1cm,y=1cm]

% Same fixed vertex set A
\foreach \i/\x/\y in {
  1/0/0,
  2/1.25/0.65,
  3/2.50/0,
  4/4.35/0,
  5/5.60/0,
  6/7.45/0,
  7/8.70/0,
  8/9.95/0,
  9/11.80/0,
  10/13.05/0,
  11/14.90/0,
  12/16.15/0,
  13/17.40/0,
  14/19.25/0
}{
  \node[exchvertex] (v\i) at (\x,\y) {$v_{\i}$};
}

% H_1 edges
\draw[hOne] (v1)--(v2)--(v3);
\draw[hOne] (v4)--(v5);
\draw[hOne] (v6)--(v7);
\draw[hOne] (v9)--(v10);
\draw[hOne] (v11)--(v12);

% New H_2 edges
\draw[hTwo] (v3)--(v4);
\draw[hTwo] (v7)--(v8);

% Previously resolved class
\node[resolvedbox,fit=(v1)(v2)(v3)] (D1) {};
\node[classlabel,anchor=south] at (D1.north) {$D_1$};

% P_2
\node[pendantbox,fit=(v4)(v5)] (P2) {};
\node[classlabel,anchor=south] at (P2.north) {$P_2$};

% D_2
\node[currentbox,fit=(v6)(v7)(v8)] (D2) {};
\node[classlabel,anchor=south] at (D2.north) {$D_2$};

% I_2
\node[
  unresolvedbox,
  fit=(v9)(v10)(v11)(v12)(v13)(v14)
] (I2) {};
\node[classlabel,anchor=south] at (I2.north) {$I_2$};

% Edge legend
\draw[hOne] (7.8,-1.00)--(8.7,-1.00);
\node[font=\scriptsize,anchor=west] at (8.85,-1.00) {$H_1$};

\draw[hTwo] (10.4,-1.00)--(11.3,-1.00);
\node[font=\scriptsize,anchor=west] at (11.45,-1.00) {$H_2$};

\end{tikzpicture}
\end{adjustbox}
\caption{Layer 2.}
\label{fig:thiery-layers-b}
\end{subfigure}

\par\medskip

%====================================================
% Subfigure (c): Layer 3
%====================================================
\begin{subfigure}[t]{\textwidth}
\centering
\begin{adjustbox}{max width=.9\textwidth}
\begin{tikzpicture}[x=1cm,y=1cm]

% Same fixed vertex set A
\foreach \i/\x/\y in {
  1/0/0,
  2/1.25/0.65,
  3/2.50/0,
  4/4.35/0,
  5/5.60/0,
  6/7.45/0,
  7/8.70/0,
  8/9.95/0,
  9/11.80/0,
  10/13.05/0,
  11/14.90/0,
  12/16.15/0,
  13/17.40/0,
  14/19.25/0
}{
  \node[exchvertex] (v\i) at (\x,\y) {$v_{\i}$};
}

% H_1 edges
\draw[hOne] (v1)--(v2)--(v3);
\draw[hOne] (v4)--(v5);
\draw[hOne] (v6)--(v7);
\draw[hOne] (v9)--(v10);
\draw[hOne] (v11)--(v12);

% H_2 edges
\draw[hTwo] (v3)--(v4);
\draw[hTwo] (v7)--(v8);

% New H_3 edges
\draw[hThree] (v8)--(v9);
\draw[hThree] (v12)--(v13);

% Earlier classes
\node[resolvedbox,fit=(v1)(v2)(v3)] (D1) {};
\node[classlabel,anchor=south] at (D1.north) {$D_1$};

\node[resolvedbox,fit=(v4)(v5)] (P2) {};
\node[classlabel,anchor=south] at (P2.north) {$P_2$};

\node[resolvedbox,fit=(v6)(v7)(v8)] (D2) {};
\node[classlabel,anchor=south] at (D2.north) {$D_2$};

% P_3
\node[pendantbox,fit=(v9)(v10)] (P3) {};
\node[classlabel,anchor=south] at (P3.north) {$P_3$};

% D_3
\node[currentbox,fit=(v11)(v12)(v13)] (D3) {};
\node[classlabel,anchor=south] at (D3.north) {$D_3$};

% I_3
\node[unresolvedbox,fit=(v14)] (I3) {};
\node[classlabel,anchor=south] at (I3.north) {$I_3$};

% Edge legend
\draw[hOne] (6.3,-1.00)--(7.2,-1.00);
\node[font=\scriptsize,anchor=west] at (7.35,-1.00) {$H_1$};

\draw[hTwo] (9.0,-1.00)--(9.9,-1.00);
\node[font=\scriptsize,anchor=west] at (10.05,-1.00) {$H_2$};

\draw[hThree] (11.7,-1.00)--(12.6,-1.00);
\node[font=\scriptsize,anchor=west] at (12.75,-1.00) {$H_3$};

\end{tikzpicture}
\end{adjustbox}
\caption{Layer 3.}
\label{fig:thiery-layers-c}
\end{subfigure}

\caption{Illustration of Thiery's recursive exchange-layer partition on a fixed vertex set \(A\), with \(\ell=3\). Solid, dashed, and dash-dotted edges denote arcs of \(H_1\), \(H_2\), and \(H_3\), respectively; orientations are suppressed because the partition uses only connectivity. In Subfigure~\ref{fig:thiery-layers-a}, \(D_1=\{v_1,v_2,v_3\}\), and the remaining vertices form \(I_1\). In Subfigure~\ref{fig:thiery-layers-b}, the partition is refined on \(I_1\): \(\{v_6,v_7,v_8\}\) is a size-\(\ell\) component of \(H_{\le2}[I_1]\) and hence forms \(D_2\), whereas \(\{v_4,v_5\}\) remains a component of size less than \(\ell\) in \(H_{\le2}[I_1]\) but belongs to a larger component of the full \(H_{\le2}\) through the \(H_2\)-edge \(v_3v_4\), and hence forms \(P_2\). The remaining vertices form \(I_2\). In Subfigure~\ref{fig:thiery-layers-c}, the partition is refined on \(I_2\): \(\{v_{11},v_{12},v_{13}\}\) forms \(D_3\), while \(\{v_9,v_{10}\}\) remains small in \(H_{\le3}[I_2]\) but is attached to an earlier class through the \(H_3\)-edge \(v_8v_9\), and hence forms \(P_3\). Finally, \(I_3=\{v_{14}\}\). Thus \(S_1=D_1\), \(S_2=D_2\mathbin{\dot\cup}P_2\), \(S_3=D_3\mathbin{\dot\cup}P_3\), and \(S_4=I_3\).}
\label{fig:thiery-exchange-layers}
\end{figure}

% ===============================================
% ===============================================
% \bibliographystyle{plain}
\bibliographystyle{alpha}
\bibliography{manuscript-20261007}
% ===============================================
% ===============================================

\end{document}